\documentclass[11pt]{article}
\usepackage[margin=1in,nohead, footskip=36pt]{geometry}
\usepackage{tabularx,subfigure}
\usepackage{booktabs}
\usepackage{multirow}
\usepackage{authblk}
\usepackage{mathtools}
\usepackage{algorithm,algorithmic}
\usepackage{xspace}
\usepackage{times}
\usepackage[footnotesize]{caption}

\usepackage[applemac]{inputenc} 
\usepackage{amsmath,amssymb,amsfonts,amsbsy,amsthm}
\usepackage[USenglish]{babel}
\usepackage{enumitem}
\usepackage{tikz}
\usepackage{pgfplots}
\usetikzlibrary{arrows,calc,shapes,backgrounds,patterns}
\tikzstyle{every node}=[draw,circle,fill=white,minimum size=3pt,font=\tiny, inner sep=0]
\tikzstyle{every edge}=[->,shorten >=2pt,shorten <=2pt,>=stealth]
\tikzstyle{terminal}=[draw,circle,fill=black,minimum size=3pt,font=\tiny, inner sep=0]
\tikzstyle{lbl}=[draw=none,fill=none,font=\tiny, outer sep = -3pt]
\tikzstyle{dot}=[densely dotted]
\tikzstyle{aux}=[black]

\usepackage{etoolbox}
\preto\align{\par\nobreak\small\noindent}
\expandafter\preto\csname align*\endcsname{\par\nobreak\small\noindent}

\newcommand{\vargadget}[4]{
\footnotesize
\path (#2,#3) node (v#10) [terminal,label=left:$s_{j_{x_{#4}}}$] {}
to  ++(60:1) node (v#1r1) {}
--  ++(0:1) node (v#1r2) {}
--  ++(0:1) node (v#1r3) {}
    ++(0:1) node (v#1r4) {}
--  ++(0:1) node (v#1r5) {}
--  ++(-60:1) node (v#16) [terminal,label=right:$t_{j_{x_{#4}}}$] {}
--  ++(-120:1) node (v#1l5) {}
--  ++(180:1) node (v#1l4) {}
    ++(180:1) node (v#1l3) {}
--  ++(180:1) node (v#1l2) {}
--  ++(180:1) node (v#1l1) {}
--  ++(120:1) node (v#10) [terminal] {};
\draw[dashed,->] (v#10) -- (v#1r1);
\draw[->] (v#1r1) -- (v#1r2);
\draw[->] (v#1r2) -- (v#1r3);
\draw[dot,->] (v#1r3) -- (v#1r4);
\draw[->] (v#1r4) -- (v#1r5);
\draw[dashed,->] (v#1r5) -- (v#16);
\draw[dashed,->] (v#10) -- (v#1l1);
\draw[->] (v#1l1) -- (v#1l2);
\draw[->] (v#1l2) -- (v#1l3);
\draw[dot,->] (v#1l3) -- (v#1l4);
\draw[->] (v#1l4) -- (v#1l5);
\draw[dashed,->] (v#1l5) -- (v#16);
}
\newcommand{\clausegadget}[4]{
\path (#2,#3) node (sk#1) [terminal,label=left:$s_{j_{#4}}$] {}
 -- ++(0:3) node (tk#1) [terminal,label=right:$t_{j_{#4}}$] {};
\draw[->] (sk#1) -- (tk#1);
}
\newcommand{\clausegadgetun}[4]{
\path (#2,#3) node (sk#1) [terminal,label=left:$s_{j_{#4}}$] {}
 -- ++(0:1.5) node (ik#1) {}
 -- ++(0:1.5) node (tk#1) [terminal,label=right:$t_{j_{#4}}$] {};
\draw[dashed] (sk#1) -- (ik#1);
\draw (ik#1) -- (tk#1);

}

\usepackage{pgfplots}
\usepackage{epstopdf}

\newcommand{\DDP}{$2$DDP}
\usepackage{xspace}  

%

\newtheorem{definition}{Definition}[section]
\newtheorem{theorem}[definition]{Theorem}
\newtheorem{lemma}[definition]{Lemma}
\newtheorem{remark}[definition]{Remark}
\newtheorem{corollary}[definition]{Corollary}
\newtheorem{assumption}[definition]{Assumption}
\newtheorem{proposition}[definition]{Proposition}

\usepackage{algorithm}
\usepackage{algorithmic}
\usepackage{empheq}

\numberwithin{equation}{section}

\newcommand{\classNP}{\mathsf{NP}}
\newcommand{\classP}{\mathsf{P}}
\newcommand{\classAPX}{\mathsf{APX}}

\newcommand{\R}{\mathbb{R}} 
\renewcommand{\S}{\mathcal{S}}
\newcommand{\N}{\mathbb{N}}

\newcommand{\argmin}{\operatornamewithlimits{argmin}}

\renewcommand{\vec}[1]{{\boldsymbol {#1}}}

\newcommand{\intd}{\text{d}}
\renewcommand{\d}{\partial}


\newcommand{\flows}{\mathcal{F}}
\newcommand{\W}{\mathcal{W}}

\newcommand{\true}{\mathtt{true}}
\newcommand{\false}{\mathtt{false}}

\hyphenation{Bring-To-Equi-librium}
\hyphenation{Scale-Uni-formly}
\newcommand{\bte}{\textsc{BringToEquilibrium}\xspace}
\newcommand{\scale}{\textsc{ScaleUniformly}\xspace}

\hyphenation{Bring-To-Equi-librium}


\usepackage[draft,margin]{fixme}
\newcommand{\fmth}[2][\empty]{\ifthenelse{\equal{#1}{\empty}}{\fxnote{TH:
#2}}{\fixme[#1]{TH: #2}}}
\newcommand{\fmmk}[2][\empty]{\ifthenelse{\equal{#1}{\empty}}{\fxnote{MK:
#2}}{\fixme[#1]{MK: #2}}}
\newcommand{\fmmg}[2][\empty]{\ifthenelse{\equal{#1}{\empty}}{\fxnote{MG:
#2}}{\fixme[#1]{MG: #2}}}

\begin{document}

\begin{titlepage}
\title{Complexity and Approximation of the \\Continuous Network Design Problem}
\author[1]{Martin Gairing}
\affil[1]{Department of Computer Science, University of Liverpool, UK}
\author[2]{Tobias Harks}
\affil[2]{School of Business and Economics, Maastricht University, The Netherlands}
\author[3]{Max Klimm}
\affil[3]{Institut f\"ur Mathematik, Technische Universit\"at Berlin, Germany}
\date{}
\maketitle 
\thispagestyle{empty}
\begin{abstract}
We revisit a classical problem in transportation, known as the \emph{continuous (bilevel) network design problem}, CNDP for
short. We are given a graph for which the latency of each edge depends on the ratio of the edge flow and the capacity
installed. The goal is to find an optimal investment in edge capacities so as to minimize the sum of the routing cost of the
induced Wardrop equilibrium and the investment cost for installing the capacity. While this problem is considered as challenging in the literature, its
complexity 
status was still unknown. We close this gap showing that CNDP is strongly $\classNP$-complete and $\classAPX$-hard, both on directed and undirected networks and even for instances with affine latencies.

As for the approximation of the problem, we first provide a detailed analysis for a heuristic studied by 
Marcotte for the special case of \emph{monomial} latency functions (Mathematical Programming, Vol.~34, 1986). Specifically, we 
derive a closed form expression of its approximation guarantee for \emph{arbitrary} sets $\S$ of allowed latency functions. 
Second, we propose a different approximation algorithm and show that it has the same approximation guarantee. 
As our arguably most interesting result regarding approximation, we show that using the better of the two approximation algorithms 
results in a strictly improved approximation guarantee for which we give a closed form expression. For affine latencies, 
e.g., this algorithm achieves a $49/41\approx 1.195$-approximation which improves on the $5/4$ that has been shown 
before by Marcotte. We finally discuss the case of hard budget constraints on the capacity investment.
\end{abstract}

\bigskip

\begin{center}
{\bf Keywords}: Bilevel optimization, Optimization under equilibrium constraints, Network design, Wardrop~equilibrium,
Computational complexity, Approximation algorithms
\end{center}

\end{titlepage}

\maketitle 

\section{Introduction}
\label{sec:intro}
Starting with the seminal works of Pigou~\cite{Pigou20} and Wardrop~\cite{Wardrop52}, the impact
of selfish behavior in congested transportation networks has been investigated
intensively over the past decades.  In Wardrop's basic model of traffic flows, the
interaction between the selfish network users is modeled as a non-cooperative game. This game
takes place in a directed graph with latency functions on the edges and a set of
origin-destination pairs, called \emph{commodities}. Every commodity
has a \emph{demand} associated with it, which specifies the amount of
flow that needs to be sent from the respective origin to the respective
destination. It is assumed that every demand represents a large population
of players, each controlling an infinitesimal small amount of flow,
thus, having a negligible impact on the latencies of others.
The \emph{latency} that a player experiences when traversing an edge is determined by a
non-decreasing latency function of the edge flow on that edge. In practice, latency functions
are calibrated to reflect edge specific parameters such as street length and capacity. One of
the most prominent and popular functions used in actual traffic models are the ones put forward
by the Bureau of Public Roads (BPR)~\cite{Bur64}. BPR-type latency functions are of the form
$S_e(v_e) = t_e \cdot \big( 1 + b_e \cdot (v_e/z_e)^{4} \big)$, where $v_e$ is the edge flow,
$t_e$ represents the free-flow travel time, $b_e>0$ is an edge-specific bias, and $z_e$
represents the street capacity. In a \emph{Wardrop equilibrium} (also called \emph{Wardop flow}), 
every player chooses a minimum-latency
path from its origin to the destination; under mild assumptions on the latency functions this
corresponds to a \emph{Nash equilibrium} for an associated non-cooperative game~\cite{Beckmann56}.

It is well known that Wardrop equilibria can be inefficient in the sense that they do not
minimize the total travel time in the network~\cite{Dubey86}. Prominent examples of this inefficiency include
the famous Braess Paradox~\cite{Braess68}, where improving the network infrastructure by adding
street capacity may result in a Wardrop equilibrium with strictly higher total travel time. This
at the first sight surprising non-monotonic behavior of selfish flows illustrates that
designing networks for good traffic equilibria is an important and non-trivial issue. 

In this paper, we revisit one of the most classical network design problems, termed the 
\emph{continuous (bilevel) network design problem}, CNDP for short, which has been introduced by
Dafermos~\cite{Dafermos68}, Dantzig et al.~\cite{Dantzig79}, and Abdulaal et al.~\cite{Leblanc79}, and was later studied by
Marcotte~\cite{Marcotte85mp}. In this problem, we are given a graph for which the latency of each edge depends on the ratio
of the edge flow and the capacity installed and the goal is to find an optimal investment in edge capacities so as to
minimize the sum of the routing cost of the induced Wardrop equilibrium and the investment cost.
From a mathematical perspective, CNDP  is a bilevel optimization problem (cf.~\cite{ColsonMS07,Luo96}
for an overview),  where in the upper
level the edge capacities are determined and, given these
capacities, in the lower level the flow will settle  into a Wardrop equilibrium.
Clearly, the lower level reaction depends on the first level decision because
altering the capacity investment on a subset of edges may result in revised route
choices by users. 

CNDP has been intensively studied since the late sixties (cf.~\cite{Dafermos68,Magnanti84})
and several heuristic approaches have been proposed since then; see Yang et al.~\cite{Yang98}
for a comprehensive survey. Most of the proposed heuristics are numerical in nature and 
involve iterative computations of relaxations of the problem (for instance
the iterative optimization and assignment algorithm as described in~\cite{Marcotte1992}
and augmented Lagrangian methods
or linearizations of the
objective in the leader and follower problem).
An exception is the work of Marcotte~\cite{Marcotte85mp} who considered several
algorithms based on solutions of associated convex optimization
problems which can be solved in polynomial time~\cite{Groetschel93}. He derives worst-case bounds for his
heuristics
and, in particular, for affine latency functions he devises an approximation algorithm
with an approximation factor of $5/4$. For general monomial latency functions plus a constant (including
the latency functions used by the Bureau of
Public Roads~\cite{Bur64}) he obtains a polynomial time $2$-approximation.

Variants of CNDP have also been considered in the networking literature, see 
\cite{Korilis95,Korilis99,Libman99,BhaskarLS13}. These works, however, 
consider the case where a budget capacity
must be distributed among a set of edges to improve the resulting equilibrium. Most results,
however, only work for  simplified network
topologies (e.g., parallel links) or special latency functions (e.g., $M/M/1$ latency functions).


\paragraph{Our Results and Used Techniques.}
Despite more than forty years of research, to the best of our knowledge, the
computational complexity status of CNDP is still unknown.  We close this gap as we show that 
CNDP is strongly $\classNP$-complete and $\classAPX$-hard, both on directed and undirected networks and even for instances with affine latencies of
the form $S_e(v_e/z_e) = \alpha_e + \beta_e \cdot (v_e/z_e)$, $\alpha_e,\beta_e \geq 0$. For the proof of the
$\classNP$-hardness, we reduce from \textsc{3-SAT}. The reduction has the property that in case that the underlying instance of \textsc{3-SAT} has a solution the cost of an optimal solution is equal to the minimal cost of a relaxation of the problem, in which the equilibrium conditions are relaxed. The key challenge
of the hardness proof is to obtain a lower bound on the optimal solution when the underlying \textsc{3-SAT} instance has
no solution. Our main idea is to relax the equilibrium conditions only partially which enables us to bound the cost of an optimal solution from below
by solving an associated constrained quadratic optimization problem.
With a more involved construction and a more detailed analysis, we can even prove $\classAPX$-hardness of the problem. Here, we
reduce from a symmetric variant of
\textsc{MAX-3-SAT}, in which all literals occur exactly twice.
While all our
hardness proofs rely on instances with an arbitrary number of commodities and respective sinks, we show that for instances
in which all commodities share a common sink, CNDP can be solved to optimality in polynomial
time.

In light of the hardness of CNDP, we focus on approximation
algorithms. We first consider a polynomial time algorithm proposed by Marcotte~\cite{Marcotte85mp}. This algorithm, which we call \bte, first
computes a relaxation
of CNDP by removing the equilibrium conditions.
Then, it reduces the edge capacities individually such that the flow computed in the relaxation becomes a Wardrop
equilibrium. We give a closed form expression of the performance of this algorithm with respect to the set $\S$ of
allowed latency functions. Specifically, we show that this algorithm is a $(1 + \mu(\S))$-approximation, where $\mu(\S)
= \sup_{S \in \S} \sup_{x \geq 0} \max_{\gamma \in [0,1]} \gamma \cdot \bigl(1- S(\gamma\,x)/S(x)\bigr)$. The value $\mu(\S)$ has been used before by Correa et al.~\cite{CorSS04}
and Roughgarden~\cite{Rough02} in the context of price of anarchy bounds
for selfish routing where they showed that the routing cost of a Wardrop equilibrium is no more than a factor of $1/(1-\mu(\S))$ away of the cost of a system optimum. For
the special case that $\S$ is the set of polynomials with non-negative coefficients and maximal degree~$\Delta$, we
derive exactly the approximation guarantees that Marcotte obtained for monomials. As an outcome of our more general
analysis, we further derive that this algorithm is a $2$-approximation
for general convex latency functions and a $5/4$-approximation for concave latency functions.

We then propose a new algorithm which we call \scale. 
This algorithm first computes an optimal solution of the relaxation (as before) and then \emph{uniformly} scales the capacities with a certain parameter $\lambda(\mathcal{S})$
that depends on the class of allowable latency functions $\mathcal{S}$. Based on well-known techniques using variational inequalities (Correa et al.~\cite{CorSS04}
and Roughgarden~\cite{Rough02}), we prove  
that this algorithm also yields a $(1 + \mu(\S))$-approximation. 
As our main result regarding approximation algorithms, we
show that using the better of the two solutions returned
by \bte and \scale yields strictly better approximation guarantees.
We give a closed form expression for the new approximation
guarantee (as a function of $\mathcal{S}$) that, perhaps interestingly, 
depends not only on the well-known value $\mu(\S)$ but also on the argument maximum $\gamma(\S)$
in the definition of $\mu(\S)$. We demonstrate the applicability of this
general bound by showing that it achieves a $9/5$-approximation
for $\mathcal{S}$ containing arbitrary convex latencies. For affine latencies it achieves
a $49/41\approx 1.195$-approximation improving on the $5/4$ of Marcotte.
An overview of our results compared to those of Marcotte can be found in Table~\ref{tab:results} in the appendix.

In the final section we consider the case of 
arbitrary convex 
constraints on the capacity
variables that includes global as well as individual budget constraints on edges.
We show that solving the relaxed problem with removed equilibrium constraints achieves
a trivial approximation ratio of $1/(1-\mu(\S))$ using the well-known price of anarchy
results.  For affine latencies, however, we show that this is essentially best possible
by giving a corresponding hardness result. All proof missing in this extended abstract can be found in the appendix.
   
\paragraph{Further Application.}
Our results have impact beyond the classical application of designing street capacities of road
networks. In the telecommunications networking literature, Wardrop
equilibria appear in networks with \emph{source-routing}, where it is assumed that end-users
choose least-delay paths knowing the state of all available paths. As outlined
in~\cite{Valiant10}, Wardrop equilibria arise even in networks with distributed delay-based 
routing protocols such as OSPF using delay for setting the routing weights. In telecommunication networks, the latency at
switches and routers depends on the installed capacity and has been modeled by BPR-type functions of the form
$S_e(v_e/z_e)= \rho \; (1+0.15\;(v_e/z_e))^4$,
 where $\rho$ represents the propagation delay and $z_e$
 the installed capacity~\cite{Qiu06}. 
 These functions fit into our framework, and 
 our analysis improves the state-of-the-art to a $1.418$-approximation.
Additionally,
 our $9/5$-approximation applies to Davidson latency functions of the
 form $S_e(\frac{v_e}{z_e}) =\frac{v_e}{z_e}/(1-\frac{v_e}{z_e}) = v_e/(z_e-v_e)$, where $z_e$ represents
the capacity of edge~$e$. These functions behave quite similar to
 the frequently used $M/M/1$-delay functions 
of the form $S_e(v_e) = 1/(z_e-v_e)$, cf.~\cite{Korilis95,Roughgarden02}.  
 
\paragraph{Further Related Work.}
Quoting~\cite{Yang98}, CNDP has been recognized to be ``one of the most difficult and challenging problems in
transport'' and there are numerous works 
approaching this problem. In light of the substantial  literature  on
heuristics for CNDP, we refer the reader to the
survey papers~\cite{ColsonMS07,Friesz85,Magnanti84,Yang98}.

While to the best of our knowledge
prior to this work, the complexity status of CNDP was open,
there have been several papers on the complexity
of the \emph{discrete (bilevel) network design problem}, DNDP for short, see~\cite{Lin11,Roughgarden06}.
Given a network with edge latency functions and traffic demands,
a basic variant of DNDP is to decide which edges should be removed from the network to obtain a Wardrop equilibrium in the
resulting sub-network with minimum total travel time. This variant
is motivated by the classical Braess paradox, where removing
an edge from the network may improve the travel time
of the new Wardrop equilibrium. Roughgarden~\cite{Roughgarden06} showed that DNDP is strongly $\classNP$-hard and that there
is no $(\lfloor n/2\rfloor-\epsilon)$-approximation algorithm (unless $\classP = \classNP$), even for single-commodity
instances.  He further showed that for single-commodity instances the trivial algorithm of not removing
any edge from the graph is essentially best possible and achieves a $\lfloor n/2\rfloor$-approximation.
For affine latency functions, the trivial algorithm gives a $4/3$-approximation (even for general networks)
and this is also shown to be
best possible. These results in comparison to ours highlight interesting
differences. While DNDP is not approximable by any constant for convex latencies, for CNDP we give
a $9/5$-approximation. Moreover, all hardness results for DNDP already hold for single-commodity instances, while
for CDNP we show that this case is solvable in polynomial time. 

Bhaskar et al. \cite{BhaskarLS13} studied a variant of CNDP where initial edge capacities
are  given and additional
budget must be distributed among the edges to improve the resulting equilibrium. Among other results they show that the 
problem is NP-complete in single-commodity networks that consist of parallel links in series. This again stands in contrast to our 
polynomial-time algorithm for CDNP for these instances.

\section{Preliminaries}
Let $G = (V,E)$ be a directed or undirected graph, $V$ its set of vertices and $E \subseteq V
\times V$ its set of edges. We are given a set $K$ of \emph{commodities}, where
each commodity~$k$ is associated with a triple $(s_k,t_k,d_k) \in V \times V \times
\R_{>0}$, where $s_k \in V$ is the \emph{source}, $t_k \in V$ the \emph{sink} and $d_k$ the \emph{demand} of
commodity $k$. A multi-commodity flow on $G$ is a collection of non-negative flow vectors $(\vec v^k)_{k \in
K}$ such that for each $k \in K$ the flow vector $\vec v^k = (v_e^k)_{e \in E}$ satisfies the flow
conservation constraints $\smash{\sum_{u \in V : (u,w) \in E} v^k_{(u,w)}} - \smash{\sum_{u \in V : (w,u) \in
E} v^k_{(w,u)} = 0}$ for all $w \in V \setminus \{s_k,t_k\}$ and $\smash{\sum_{u \in V : (s_k,u) \in E}
v^k_{s_k,u}} = \smash{\sum_{u \in V : (u,t_k) \in E} v^k_{u,t_k}} = \smash{d_k}$. Whenever we write $\vec v$
without a superscript $k$ for the commodity, we implicitly sum over all commodities,
i.e., $v_e = \sum_{k \in K} v_e^k$ and $\vec v = (v_e)_{e \in E}$. We call $v_e$ an
\emph{edge flow}. The set of all feasible edge flows will be denoted by $\flows$.

The latency of each edge~$e$ depends on the installed capacity $z_e \geq 0$ and the edge flow
$v_e$ on $e$, and is given by a latency function $S_e : \R_{\geq 0} \to \R_{\geq 0} \cup
\{\infty\}$ that maps $v_e/z_e$ to a latency value $S_e(v_e/z_e)$, where we use the convention
that $S_e(v_e/z_e) = \infty$ whenever $z_e =0$. Throughout this paper, we assume
that the set of allowable latency functions is restricted to some set $\S$ and we impose the following
assumptions on $\S$.

\begin{assumption}
\label{ass}
The set $\S$ of allowable latency functions only contains continuously
differentiable and semi-convex functions $S$ such that the functions $x \mapsto S(x)$ and
$x \mapsto x^2S'(x)$ are strictly increasing and unbounded.
\end{assumption}

Assumption~\ref{ass} is is slightly more general than requiring that all latency functions are stricly increasing and
convex.
For instance, the function $S(x) := \sqrt{x}$ satisfies Assumption~\ref{ass} although it is concave. 

Given a vector of capacities $\vec{z} = (z_e)_{e \in E}$, the latency of each edge $e$ solely
depends on the edge flow $v_e$. Under these conditions, there exists a Wardrop flow
$\vec{v} = (v_e)_{e \in E}$, i.e., a flow in which each commodity only uses paths of minimal
latency. It is well known (see e.g. \cite{Beckmann56,Dafermos80,Smith79b}) that each Wardrop flow is a solution to
the optimization problem $\min_{\vec v \in \flows} \sum_{e\in E} \int_{0}^{v_e} S_e(t/z_e) \,\intd t$, and satisfies the
\emph{variational inequality}
\begin{align}
\label{eq:variational-inequality}
\sum_{e\in E} S(v_e/z_e) (v_e-v_e') \le 0
\end{align}
for every feasible flow $\vec v'\in \flows$.
For a vector of capacities $\vec z$ we denote by $\W(\vec{z})$ the corresponding set of Wardrop
flows $\vec{v}(\vec{z})$.
%
Beckmann et al.~\cite{Beckmann56} showed that  Wardrop flows and optimum flows are related:
\begin{proposition}[Beckmann et al.~\cite{Beckmann56}]
\label{pro:WardropVsOptimum}
Denote by $S_e^*(x)=(xS_e(x))'=S_e(x)+xS'_e(x)$ the \emph{marginal cost function} of edge $e\in E$. Then $\vec v^*$ is an optimum flow 
with respect to the latency functions $(S_e)_{e\in E}$ if and only if it is Wardrop flow with respect
to $(S_e^*)_{e\in E}$.
\end{proposition}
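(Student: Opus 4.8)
The plan is to reduce the statement to the well-known equivalence, already quoted above (\cite{Beckmann56,Dafermos80,Smith79b}), between Wardrop flows and minimizers of the Beckmann potential, by observing that the Beckmann potential built from the marginal cost functions $(S_e^*)_{e\in E}$ agrees on the feasible set $\flows$ with the total routing cost built from the original latencies $(S_e)_{e\in E}$. Concretely, the total routing cost of $\vec v$ with respect to $(S_e)_{e\in E}$ is $\sum_{e\in E} v_e\, S_e(v_e/z_e)$, an optimum flow is by definition a minimizer of this expression over $\flows$, and a Wardrop flow for $(S_e^*)_{e\in E}$ is a minimizer of $\vec v\mapsto \sum_{e\in E}\int_0^{v_e} S_e^*(t/z_e)\,\intd t$ over $\flows$; the proposition then reduces to showing these two functionals coincide.

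The first step is to check that the marginal cost functions $S_e^*$ are themselves continuous and non-decreasing, so that the cited characterization of Wardrop flows is applicable to them. Continuity is immediate since $S_e\in C^1$ by Assumption~\ref{ass}. For monotonicity one uses that $x\mapsto x^2 S_e'(x)$ is increasing: writing $h(s):=s^2S_e'(s)$, for $0<a<b$ one has $S_e^*(b)-S_e^*(a) = \int_a^b h(s)/s^2\,\intd s + h(b)/b - h(a)/a \ge h(a)(1/a-1/b) + h(b)/b - h(a)/a = (h(b)-h(a))/b > 0$; equivalently, for $C^2$ functions this is just $x\,(S_e^*)'(x) = \bigl(x^2S_e'(x)\bigr)' \ge 0$. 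Edges with $z_e=0$ carry no flow in any Wardrop flow and can be discarded throughout.

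The core of the argument is the following identity. For a single edge $e$ with $z_e>0$ and any flow value $v_e$, the substitution $u=t/z_e$ together with the definition $S_e^*(u)=\bigl(uS_e(u)\bigr)'$ and the fundamental theorem of calculus give
\begin{align*}
\int_0^{v_e}S_e^*(t/z_e)\,\intd t \;=\; z_e\int_0^{v_e/z_e}\bigl(uS_e(u)\bigr)'\,\intd u \;=\; z_e\cdot\frac{v_e}{z_e}\,S_e(v_e/z_e)\;=\;v_e\,S_e(v_e/z_e).
\end{align*}
Summing over $e\in E$, the Beckmann potential of $\vec v$ with respect to $(S_e^*)_{e\in E}$ equals $\sum_{e\in E}v_e\,S_e(v_e/z_e)$, which is exactly the total routing cost of $\vec v$ with respect to $(S_e)_{e\in E}$.

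Combining the two steps finishes the proof: $\vec v^*$ is an optimum flow for $(S_e)_{e\in E}$ iff it minimizes $\vec v\mapsto\sum_{e\in E}v_e S_e(v_e/z_e)$ over $\flows$, iff (by the identity) it minimizes the Beckmann potential of $(S_e^*)_{e\in E}$ over $\flows$, iff (by the first step and the cited equivalence) it is a Wardrop flow with respect to $(S_e^*)_{e\in E}$; both directions of the ``if and only if'' follow simultaneously. I do not anticipate a genuine obstacle here, as the result is classical and the heart of the matter is the one-line identity $\int_0^{v}\bigl(tS(t)\bigr)'\,\intd t = v\,S(v)$; the only point that needs attention is verifying that $(S_e^*)_{e\in E}$ still meets the mild hypotheses under which Wardrop flows are characterized as Beckmann minimizers, which is precisely where the condition on $x^2S'(x)$ in Assumption~\ref{ass} enters.
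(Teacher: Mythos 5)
Your proof is correct. The paper does not actually prove this proposition---it is quoted from Beckmann et al.~\cite{Beckmann56} as a known result---and your argument is precisely the classical one: the substitution $u=t/z_e$ turns the Beckmann potential for the marginal costs $(S_e^*)_{e\in E}$ into the routing cost $\sum_{e\in E} v_e S_e(v_e/z_e)$ for the original latencies, and the equivalence between Wardrop flows and potential minimizers transfers the optimality characterization. Your extra care in verifying that $S_e^*$ is non-decreasing (so that the potential for $(S_e^*)_{e\in E}$ is convex and the Wardrop/minimizer equivalence applies in both directions) is exactly the point where Assumption~\ref{ass} on $x\mapsto x^2S'(x)$ is needed, and your monotonicity computation is sound.
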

In the continuous (bilevel) network design problem (CNDP) the goal is to buy capacities $z_e$ at
a price per unit $\ell_e > 0$ so as to minimize the sum of the construction cost
$C^Z(\vec v, \vec z) = \sum_{e \in E} z_e\, \ell_e$ and the routing cost $C^R(\vec v, \vec z) = \sum_{e \in E}
S_e(v_e/z_e)\, v_e$ of a resulting Wardrop equilibrium $\vec v$. Observe that $C^R(\vec v, \vec z)$ 
is well defined as, by \eqref{eq:variational-inequality}, it is the same for all Wardrop equilibria with respect to $\vec z$.
Denote  the combined cost by $C(\vec v, \vec z)=C^R(\vec v, \vec z)+C^Z(\vec v, \vec z)$.

\begin{definition}[Continuous network design problem (CNDP)]
Given a directed graph $G = (V,E)$ and for each edge $e$ a latency function $S_e$ and a construction cost $\ell_e>0$, the
continuous network design problem (CNDP) is to determine a non-negative capacity vector $\vec z = (z_e)_{e \in E}$ that
minimizes 
\begin{align}
\tag{CNDP}
\label{nd}
\begin{split}
\min_{\vec{z} \geq 0} \min_{\vec v \in \W(\vec z)} &\sum_{e\in E} \bigl( S_e(v_e/z_e)\;v_e
+z_e\;\ell_e \bigr).
\end{split}
\end{align}
\end{definition}
Relaxing the condition that $\vec v$ is a Wardrop equilibrium in \eqref{nd}, we obtain the
following relaxation of the continuous network design problem:
\begin{align}
\tag{CNDP'}
\label{rnd}
\begin{split}
\min_{\vec{z} \geq 0} \min_{\vec v \in \flows} &\sum_{e\in E} \bigl(
S_e(v_e/z_e)\;v_e
+z_e\;\ell_e \bigr).
\end{split}
\end{align}

Marcotte~\cite{Marcotte85mp} showed that for convex and unbounded latency functions, the relaxed problem \eqref{rnd} can be
solved efficiently by performing $|K|$ independent shortest path computations on the graph $G$, one for each commodity $k
\in K$. The following proposition slightly generalizes his result to arbitrary, not necessarily convex latency functions
that
satisfy Assumption~\ref{ass}.

\begin{proposition}[Marcotte \cite{Marcotte85mp}]
\label{pro:marcotte}
The relaxation \eqref{rnd} can be solved by performing $|K|$ shortest path computation problems in polynomial time.
\end{proposition}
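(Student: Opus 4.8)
The plan is to decouple the two nested minimizations in \eqref{rnd} by first optimizing over the capacity vector $\vec z$ for a \emph{fixed} feasible flow $\vec v \in \flows$, and then observing that the resulting pure flow problem is a minimum-cost multicommodity flow problem \emph{without} capacity constraints, which decomposes into $|K|$ independent shortest path computations.

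First I would fix $\vec v \in \flows$ and analyze the inner problem $\min_{\vec z \ge 0} \sum_{e} \bigl(S_e(v_e/z_e)\,v_e + z_e\,\ell_e\bigr)$, which separates over edges. For an edge $e$ with $v_e = 0$ the optimal choice is $z_e = 0$ with contribution $0$. For an edge with $v_e > 0$ I would substitute $x := v_e/z_e$ (so $z_e = v_e/x$, $x \in (0,\infty)$), turning the per-edge term into $v_e \cdot f_e(x)$ with $f_e(x) := S_e(x) + \ell_e/x$. Using Assumption~\ref{ass}, $x^2 f_e'(x) = x^2 S_e'(x) - \ell_e$ is strictly increasing and continuous; since $x \mapsto x^2 S_e'(x)$ tends to $0$ as $x \downarrow 0$ (as $S_e$ is continuously differentiable at $0$) and is unbounded, it has a unique zero $x_e^*$, characterized by $(x_e^*)^2 S_e'(x_e^*) = \ell_e$. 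Hence $f_e$ is strictly decreasing on $(0,x_e^*)$ and strictly increasing afterwards, and since $\ell_e > 0$ forces $f_e(x) \to \infty$ as $x \downarrow 0$ while unboundedness of $S_e$ forces $f_e(x) \to \infty$ as $x \to \infty$, the point $x_e^*$ is the unique global minimizer. Crucially, $x_e^*$ does not depend on $v_e$.

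Setting $c_e := f_e(x_e^*) = S_e(x_e^*) + \ell_e/x_e^* > 0$, the optimal inner value for fixed $\vec v$ thus equals $\sum_{e \in E} c_e v_e$ (with the convention $0 \cdot c_e = 0$ on edges with $v_e = 0$), attained by $z_e = v_e/x_e^*$. Substituting back, \eqref{rnd} is equivalent to $\min_{\vec v \in \flows} \sum_{e \in E} c_e v_e$, a minimum-cost multicommodity flow problem with non-negative per-unit edge costs $c_e$ and no capacity bounds. Since $\flows$ imposes only flow conservation and the objective is linear and separable over commodities, an optimal solution routes, for each $k \in K$, the whole demand $d_k$ along a shortest $s_k$--$t_k$ path with respect to the weights $(c_e)_{e \in E}$; this is $|K|$ shortest path computations. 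For the running time, each $c_e$ comes from solving the one-dimensional equation $x^2 S_e'(x) = \ell_e$, which by strict monotonicity is found by binary search (in closed form for standard families such as monomials plus a constant), after which $|K|$ runs of Dijkstra's algorithm on the non-negatively weighted graph suffice.

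The main obstacle is the inner per-edge minimization: one must extract a clean variational characterization of the optimal capacity, namely $(x_e^*)^2 S_e'(x_e^*)=\ell_e$, using only the semi-convexity hypothesis of Assumption~\ref{ass} rather than convexity of $S_e$, and one must correctly handle the degenerate edges $v_e = 0$ together with the convention $S_e(\cdot/0) = \infty$, so that the reformulation as $\min_{\vec v \in \flows} \sum_{e} c_e v_e$ is exact and the equivalence with $|K|$ shortest path problems is rigorous.
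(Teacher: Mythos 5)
Your proposal is correct and follows essentially the same route as the paper: both identify the optimal flow-to-capacity ratio $u_e$ on each edge via the equation $x^2 S_e'(x) = \ell_e$ (the paper via the KKT conditions, you via a direct one-dimensional minimization of $S_e(x)+\ell_e/x$), and both then reduce \eqref{rnd} to the linear problem $\min_{\vec v \in \flows}\sum_{e}\bigl(S_e(u_e)+\ell_e/u_e\bigr)v_e$, solved by $|K|$ shortest path computations. Your treatment of the degenerate case $v_e=0$ and of the existence and uniqueness of the minimizer is if anything slightly more explicit than the paper's.
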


\begin{remark}
To speak about polynomial algorithms and hardness, we need to specify how the instances of CNDP, in particular the latency
functions, are encoded, cf.~\cite{Ahuja93,Groetschel93,Roughgarden06}. While our hardness results hold even if all functions
are linear and given by their rational coefficients, for our approximation algorithms, we require that we can solve
(symbolically) equations involving a latency function and its derivative,
e.g.,~Equation~\eqref{eq:define_gamma}. Without this assumption, we still obtain the claimed approximation guarantees within
arbitrary precision by polynomial time algorithms.
\end{remark}

\section{Hardness}

As the main result of this section, we show that CNDP is $\classAPX$-hard both on directed and undirected networks and even for affine latency functions. The
proof of this result is technically quite involved, and we first show the weaker result that CNDP on directed networks is
$\classNP$-complete.
Due to space constraints, we here only sketch the proof of the $\classNP$-completeness for directed networks and the case that there are
edges with zero latency. For the full proof and the discussion that the problem remains hard, even if no edges with zero
latency are allowed, we refer to the appendix.                                                
                                                                                               
\begin{theorem}
\label{thm:hardness_mc}
The continuous network design problem on directed networks is $\classNP$-complete in the strong sense, even if all latency
functions are affine.
\end{theorem}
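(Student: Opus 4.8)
The plan is to reduce from \textsc{3-SAT}. Given a 3-SAT formula $\varphi$ with variables $x_1,\dots,x_n$ and clauses $C_1,\dots,C_m$, I will build a directed graph $G$ with affine latency functions, construction costs $\ell_e$, and one commodity per clause (so $|K|=m$), together with a threshold $B$, such that $\varphi$ is satisfiable if and only if CNDP on this instance has optimal value at most $B$. Membership in $\classNP$ is the easy direction: a certificate is a capacity vector $\vec z$ with polynomially bounded encoding; given $\vec z$, the induced Wardrop flow solves a convex program and hence its value $C(\vec v,\vec z)$ can be computed (or verified) in polynomial time, so the decision version lies in $\classNP$. Strong $\classNP$-hardness follows because all numbers in the construction will be polynomially bounded in the size of $\varphi$.

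For the hardness direction I would use a variable-gadget / clause-gadget construction of the kind hinted at by the \texttt{vargadget} and \texttt{clausegadget} macros in the preamble. Each variable $x_i$ gets a gadget consisting of two parallel routes (a ``true'' route and a ``false'' route) between a pair of terminals $s_{j_{x_i}},t_{j_{x_i}}$; each clause $C_j$ gets a small gadget with terminals $s_{j_{C_j}},t_{j_{C_j}}$ and is connected so that the commodity for $C_j$ can be routed cheaply only if at least one of its three literals is set to pass flow through the corresponding side of that literal's variable gadget. The construction costs $\ell_e$ and the affine latency slopes are chosen so that in the relaxation \eqref{rnd} — solved, by Proposition~\ref{pro:marcotte}, via $|K|$ independent shortest-path computations — the optimum decouples across commodities and attains a clean value $B_{\mathrm{rel}}$; moreover, when $\varphi$ is satisfiable there is a capacity vector for which the relaxed optimal flow is \emph{already} a Wardrop equilibrium, so the CNDP optimum equals $B_{\mathrm{rel}}=:B$. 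The routine calculus here is just the single-edge optimality condition: for an edge carrying flow $v$ with affine latency $\alpha+\beta v/z$ and unit cost $\ell$, minimizing $(\alpha+\beta v/z)v+\ell z$ over $z\ge 0$ gives $z=v\sqrt{\beta/\ell}$ and contributes $\alpha v+2v\sqrt{\beta\ell}$; this linearity in $v$ is what makes the relaxation a shortest-path problem and what I would exploit to calibrate the gadgets.

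The main obstacle — and, as the authors themselves flag, the crux of the proof — is the converse: showing that if $\varphi$ is \emph{unsatisfiable} then every capacity vector has cost strictly above $B$ (by a gap bounded away from $0$, which also yields strong hardness and is the seed of the later $\classAPX$-hardness). The difficulty is that the adversary may set capacities so as to violate the Wardrop conditions ``a little'' and thereby beat $B_{\mathrm{rel}}$ on some commodity while overpaying elsewhere; a naive argument that just lower-bounds the routing cost by the relaxed optimum is not tight enough. Following the idea sketched in the introduction, I would \emph{partially} relax the equilibrium constraints: keep the equilibrium (path-balance) conditions only on a carefully chosen subset of the gadget edges — enough to force, in any unsatisfiable instance, that some clause commodity must route a positive amount of flow along an ``expensive'' detour — and drop them elsewhere. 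This turns the lower-bound computation into a constrained quadratic program in the flow variables (the capacities having been eliminated via the closed-form $z_e=v_e\sqrt{\beta_e/\ell_e}$ above), whose optimal value can be estimated from below by a direct convexity/KKT argument. The remaining work is bookkeeping: verifying Assumption~\ref{ass} is met (affine functions with positive slope are fine), checking the construction has polynomial size, and, for the appendix version, re-engineering the zero-latency edges away and mirroring the whole argument on undirected graphs by replacing directed gadget edges with undirected ones whose orientation is enforced by cost asymmetries.
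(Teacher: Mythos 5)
Your high-level plan coincides with the paper's: reduce from \textsc{3-SAT} with variable and clause gadgets, show that satisfiability lets the optimum of the relaxation \eqref{rnd} be attained by a genuine Wardrop equilibrium, and obtain a gap for unsatisfiable formulas by only partially relaxing the equilibrium constraints and bounding a constrained quadratic program. However, the step you defer is exactly the step that carries the proof, and as written your plan for it would fail. You propose to eliminate the capacities via the unconstrained closed form $z_e = v_e\sqrt{\beta_e/\ell_e}$ and then optimize a QP in the flows alone. But substituting that $z_e$ on every edge makes each edge contribute exactly its relaxation cost $\alpha_e v_e + 2v_e\sqrt{\beta_e\ell_e}$, so the resulting bound is just the value of \eqref{rnd} and yields no gap at all. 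The entire content of the lower bound is that on certain literal edges this substitution is \emph{forbidden}: in the paper, the clause edge has latency $4 + v/z$, so if a clause commodity stays on its clause edge in equilibrium, the three literal edges of its alternative path must satisfy $\sum_e v_e/z_e \geq 4$, which forces their capacities strictly below the unconstrained optimum; the QP must therefore keep both $v_e$ and $z_e$ as variables, with this inequality and flow lower bounds as constraints, and its value ($1/8$) is what separates the two cases. Relatedly, you locate the extra cost in ``a clause commodity forced onto an expensive detour''; in the paper that is only the easy second case (gap $2$), while the main mechanism is the forced under-provisioning of capacity on literal edges that the clause commodity does \emph{not} use.

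Two further ingredients are missing and are not routine bookkeeping. First, the combinatorial lemma that makes the QP applicable: if $\phi$ is unsatisfiable, then in any flow there is a clause all three of whose literal edges carry at least $1/2$ a unit of variable-commodity flow (otherwise, rounding the fractional assignment --- each variable sends at least half of its unit demand one way --- would satisfy every clause, a contradiction). Second, your commodity set is inconsistent: you declare $|K|=m$ (clause commodities only) yet rely on variable gadgets with their own terminals; the variable commodities, one per variable with unit demand, are indispensable, since it is their routing that encodes the assignment and their capacity purchases (on the \emph{negations} of the chosen literals) that give the blocked, infinite-latency edges which keep the clause commodities on their clause edges in the satisfiable case. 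Until the gadget constants are fixed and the constrained quadratic program is actually set up and solved, the reduction is a program rather than a proof.
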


\begin{proof}[Sketch of proof]
We reduce from \textsc{3-SAT}. Let a Boolean formula $\phi$ in conjunctive normal form be given and for
$\kappa, \nu \in \N$, let $K(\phi) = \{1,2,\dots,\kappa\}$ and $V(\phi) = \{x_1,x_2,\dots,x_{\nu}\}$ denote the set of its
clauses and variables, respectively. For each variable $x_i \in V(\phi)$, we introduce a \emph{variable commodity} $j_{x_i}$
with
unit demand, and for each clause $k \in K(\phi)$ we introduce a \emph{clause commodity} $j_k$ with unit demand. For each
literal $l$ and each clause $k$, there is a \emph{literal edge} $e_{l,k}$ with latency function
$S_{e_{l,k}}(v_{e_{l,k}},z_{e_{l,k}}) = v_{e_{l,k}}/z_{e_{l,k}}$ and construction cost $\ell_{e_{l,k}} = 1$. Further, for
some $\epsilon>0$ and for each clause $k$, there is a \emph{clause edge} $e_k$ with $S_{e_k}(v_{e_k}/z_{e_k}) =
4+v_{e_k}/z_{e_k}$ 
and construction cost $\ell_{e_k} = (\epsilon/2)^2$.
Every variable commodity $j_{x_i}$ has two feasible paths, one consist of the literal edges $\{e_{x_i,k} : k \in K(\phi)\}$
corresponding to the positive literal $x_i$, the other one consists of the literal edges $\{e_{\bar{x}_i,k} : k \in
K(\phi)\}$ corresponding to the negative literal $\bar{x}_i$. In that way, each route choice of the variable commodities
corresponds to a fractional assignment of the variables. For each clause $k = l_k \vee l_k' \vee l_k''$, the clause
commodity $j_k$ has two feasible paths as well, one consists of the clause edge $e_k$, the other one contains the three
literal edges $e_{l_k,k}$, $e_{l_k',k}$, and $e_{l_k'',k}$. We add some additional edges with zero latency to this path in
order to obtain a network structure, see Figure~\ref{fig:hardness_mc} in the appendix.

Let us first assume that $\phi$ has a solution $\vec y = (y_{x_i})_{x_i \in V(\phi)}$ and let $\bar{\vec y} =
(\bar{y}_{x_i})_{x_i \in V(\phi)}$ be the negation of $\vec y$. Then, an optimal solution to the so-defined instance of
CNDP is follows. For each variable commodity $j_{x_i}$, we buy capacity $1$ on the path consisting of the edges
$\{e_{\bar{y}_{x_i},k} : k \in K(\phi)\}$ and we route the unit demand of variable commodity $j_{x_i}$ over the edges of
that path. For each clause commodity $j_k$, we route the unit demand over the clause edge $e_k$. Using that $\vec y$ is a
solution of $\phi$, we derive that for each clause, there is a literal $l_k^*$ that occurs in $\vec y$, and thus, $l_k^*$
does not occur in
$\bar{\vec y}$. However, this implies that for each clause $k = l_k \vee l_k' \vee l_k''$, at least one of the three literal
edges $e_{l_k,k}$, $e_{l_k',k}$, and $e_{l_k'',k}$ has capacity $0$ and, thus, infinite latency. Thus, the clause
commodity $j_k$ has only one path with finite length and we conclude that the so-defined flow is a Wardrop equilibrium. This
solution has total cost $2\kappa\nu + (4+\epsilon)\kappa$ which can be shown to be minimal as it coincides with
the total cost of the relaxation of the problem without the equilibrium constraints.

If $\phi$ does not admit a solution, we show that each feasible solution has cost strictly larger than $2\kappa\nu +
(4+\epsilon)\kappa$. Assume by contradiction that there is a solution $(\vec v, \vec z)$ with
cost at most $2\kappa\nu + (4+\epsilon)\kappa$. We claim that in $\vec v$, each clause commodity $e_k$
uses its clause edge, i.e., $v_{e_k} >0$. To see this, note that each unit of flow of the clause commodities that is routed
over the three corresponding literal edges contributes at least $6$ to the total cost of a solution while each unit of flow that is
routed over a clause edge contributes at most $(4+\epsilon)$ to the total cost. This implies, that the total cost is at
least $2\kappa\nu + (4+\epsilon)\kappa + (2-\epsilon)$ if one of the clause commodities does not use its clause edge. 
However, since $\phi$ does not admit a solution, we cannot prevent a clause commodity from using three of the corresponding
literal edges without reducing the capacity on at least one of these edges below $1$. Reducing the capacity on the literal
edges below $1$, however, comes at a cost, since the resulting capacities are then strictly smaller than in the relaxation
of the problem. By solving an associated constrained quadratic program, we show that the total cost of any feasible solution
is at least $2\kappa\nu + (4+\epsilon)\kappa + 1/8$, if $\phi$ does not admit a solution.
\end{proof}

With a more involved construction and a more detailed analysis, we can show that CNDP is in fact $\classAPX$-hard. For
this proof, we use a similar construction as in the proof of Theorem~\ref{thm:hardness_mc}. However, instead from
\textsc{3-SAT}, we reduce from a specific variant of \textsc{MAX-3-SAT}, which is $\classNP$-hard to approximate.

\begin{theorem}
\label{thm:hardness_apx}
The continuous network design problem on directed networks is $\classAPX$-hard, even if all latency function are affine.
\end{theorem}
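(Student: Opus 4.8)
The plan is to give a gap-preserving reduction that reuses the gadget architecture from the proof of Theorem~\ref{thm:hardness_mc}, but starting from the restricted version of \textsc{MAX-3-SAT} in which every clause contains exactly three literals and every literal occurs in exactly two clauses (so every variable occurs four times); this variant is known to be \classNP-hard to approximate within some fixed constant, and its bounded-occurrence structure will be essential. Given such a formula $\phi$ with $\kappa$ clauses and $\nu = 3\kappa/4$ variables, I would build the CNDP instance essentially as before: a unit-demand variable commodity $j_{x_i}$ for each variable, with one route through the two literal edges $e_{x_i,k}$ and one through the two edges $e_{\bar x_i,k}$; a unit-demand clause commodity $j_k$ for each clause, with a short route over a clause edge ($S_{e_k}=4+v/z$, construction cost $(\epsilon/2)^2$) and a long route over the clause's three literal edges ($S=v/z$, construction cost $1$); plus the zero-latency connector edges that turn this into a genuine network. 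Because every literal edge now meets exactly one variable commodity and at most one clause commodity, all edge flows are bounded by a constant and the relaxation optimum $B$ of \eqref{rnd} is an explicitly computable quantity proportional to $\kappa$; this boundedness is precisely what will let me localise the slack estimate of Theorem~\ref{thm:hardness_mc} so that it scales with the number of violated clauses rather than yielding only a single additive $1/8$.

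First I would establish completeness: from a Boolean assignment satisfying $\mu\kappa$ of the $\kappa$ clauses I construct a CNDP solution of cost $B + c_1(1-\mu)\kappa$ for an explicit constant $c_1>0$; in particular, when $\phi$ is satisfiable the optimal cost equals $B$. As in Theorem~\ref{thm:hardness_mc}, route each variable commodity over the literal edges of the falsified literal with unit capacity, so that every satisfied clause has a literal edge of capacity $0$ and its clause commodity is forced, at equilibrium, onto the clause edge, contributing to the base cost $B$. For each of the at most $(1-\mu)\kappa$ violated clauses, all three literal edges carry positive flow, so I instead route that clause commodity along its long route (slightly adjusting the incident capacities so that the resulting flow is again a Wardrop equilibrium) and charge the fixed additive overhead to that clause.

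The heart of the argument, and the step I expect to be the main obstacle, is the soundness direction with a \emph{per-clause} lower bound. Given any feasible $(\vec v,\vec z)$ of cost $C$, I read a fractional assignment off the route split of each variable commodity, round it to a Boolean assignment $\vec y$ (each variable commodity to the pure route carrying the majority of its flow), and must show that $\vec y$ violates at most $(C-B)/c_2$ clauses for an explicit $c_2>0$ — equivalently, that each clause violated by $\vec y$ forces at least $c_2$ of additive slack in $C-B$. The key lemma would be a \emph{local} version of the constrained quadratic program solved in Theorem~\ref{thm:hardness_mc}: if the rounded assignment does not satisfy clause $k$, then on the three literal edges of $k$ the incident variable commodities already impose flow bounded away from zero, so making $\vec v$ a Wardrop equilibrium on the clause gadget forces either a non-negligible drop of some literal edge's capacity below its value in the relaxation — penalised through the convex trade-off between $z_e\ell_e$ and routing cost — or a non-negligible fraction of the clause commodity's flow onto the long route — penalised through the latency gap between $4+v/z$ and $v/z$ — and in either case the local penalty is $\Omega(1)$. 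Summing these penalties over the violated clauses, and using the two-occurrences property to guarantee that the edges charged by distinct clauses are essentially disjoint and carry bounded flow (this is the delicate point, since split variable commodities do not create capacity-zero edges and must be handled by the convexity of the routing cost), yields $C-B\geq c_2(1-\mu)\kappa$. Combining $c_1$, $c_2$ and $B=\Theta(\kappa)$ produces a fixed $\alpha>1$ for which inapproximability of \textsc{MAX-3-SAT} with two occurrences per literal transfers to $\alpha$-hardness of CNDP; the reduction only uses affine (indeed linear, plus a constant) latencies, and the modifications already employed for Theorem~\ref{thm:hardness_mc} remove the zero-latency edges and adapt the construction to undirected graphs.
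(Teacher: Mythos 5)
Your plan is correct and follows essentially the same route as the paper's proof: reduce from the symmetric bounded-occurrence variant of \textsc{MAX-3-SAT} (two occurrences per literal, so the relaxation optimum is $\Theta(\kappa)$), reuse the clause/variable gadgets of Theorem~\ref{thm:hardness_mc}, and localise the constrained quadratic program so that each violated clause forces an $\Omega(1)$ additive penalty via the lower bound $v_e^V\ge 1/2$ on its three literal edges. Your two deviations — sending a violated clause's commodity over the long route with the clause edge at capacity $0$ (rather than inflating the literal-edge capacities to keep it on the clause edge, as the paper does), and majority rounding in the soundness direction (rather than the paper's concavity-of-$\sum_k m_k$ integrality argument) — both work, merely yielding slightly worse explicit constants, which is immaterial for $\classAPX$-hardness.
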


With a similar construction, we can also show $\classAPX$-hardness for CNDP on undirected networks as
well, see Theorem~\ref{thm:hardness_apx_un} in the appendix. For our hardness results, we use instances with different
sinks. In contrast, CNDP can be solved efficiently for networks
with a single sink. 

\begin{proposition}
\label{pro:single_sink}
In networks with only one sink vertex $t$, the continuous network design problem can be solved in polynomial time.
\end{proposition}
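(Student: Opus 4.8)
The plan is to exploit the common sink to decouple the capacity decision from the routing decision, reducing the whole instance to a single shortest-path(-tree) computation together with a scalar optimization per edge. The first step is the per-edge analysis. Fix an edge $e$ and a putative positive edge flow $v_e$; writing $x=v_e/z_e$, the contribution of $e$ to the objective of \eqref{nd} equals $v_e\bigl(S_e(x)+\ell_e/x\bigr)$, which is \emph{proportional} to $v_e$ with a factor $\bar c_e:=\inf_{x>0}\bigl(S_e(x)+\ell_e/x\bigr)$ that does not depend on $v_e$. By Assumption~\ref{ass} the map $\psi_e(x):=S_e(x)+\ell_e/x$ is continuous on $(0,\infty)$ with $\psi_e(x)\to+\infty$ both as $x\to 0^+$ and as $x\to\infty$ (using that $S_e$ is unbounded), so the infimum is attained; moreover $\psi_e'(x)$ has the sign of $x^2S_e'(x)-\ell_e$, and since $x\mapsto x^2S_e'(x)$ is strictly increasing from $0$ to $\infty$ (again Assumption~\ref{ass}), there is a unique minimiser $x_e^{\star}$, namely the solution of $x^2 S_e'(x)=\ell_e$ — which by the Remark can be computed, or at least approximated, in polynomial time. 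Note $\bar c_e>0$ since $\ell_e>0$.

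Second, I would establish the lower bound $C(\vec v,\vec z)\ge\sum_{e\in E}\bar c_e v_e$ for \emph{every} $\vec z\ge 0$ and $\vec v\in\flows$, and hence for every feasible pair of \eqref{nd} (indeed for \eqref{rnd} as well): this is a term-by-term comparison, the three cases $v_e>0,z_e>0$; $v_e>0,z_e=0$; and $v_e=0$ all being immediate from the definition of $\bar c_e$ and the convention that the latency is $\infty$ when $z_e=0$. Since $\vec v$ is a feasible multi-commodity flow into the common sink $t$ and $\bar c_e\ge 0$, a standard flow/path decomposition (cycles only increase cost) gives $\sum_{e}\bar c_e v_e\ge\sum_{k\in K} d_k\cdot\mathrm{dist}_{\bar c}(s_k,t)=:\mathrm{LB}$, where $\mathrm{dist}_{\bar c}$ denotes shortest-path distance with respect to the edge lengths $\bar c_e$. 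So $\mathrm{LB}$ lower-bounds the optimum of CNDP.

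Third comes the matching construction, which is where the single-sink hypothesis is essential: a shortest-path in-tree toward $t$ simultaneously realises a shortest $s_k$-$t$ route for \emph{every} commodity, so all demands can be routed through one common tree. Concretely, compute such a tree $T$ with respect to the lengths $\bar c_e$ (legitimate since $\bar c_e>0$), route $d_k$ along the unique $s_k$-$t$ path of $T$ to obtain a flow $\vec v$, and set $z_e:=v_e/x_e^{\star}$ on edges with $v_e>0$ and $z_e:=0$ elsewhere. Each per-edge contribution is then exactly $\bar c_e v_e$, so $C(\vec v,\vec z)=\sum_e\bar c_e v_e=\mathrm{LB}$, and it remains to verify $\vec v\in\W(\vec z)$. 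On the support $H$ of $\vec v$ the latency of $e$ is the constant $S_e(x_e^{\star})$, while every edge outside $H$ has $z_e=0$ and hence infinite latency; so any finite-latency $s_k$-$t$ path consists of edges of $H\subseteq T$, and since $T$ is a tree this path is unique, namely the one carrying the flow of commodity $k$. Hence every commodity routes only on minimum-latency paths, so $\vec v$ is a Wardrop flow for $\vec z$ (and \eqref{eq:variational-inequality} holds), $(\vec v,\vec z)$ is feasible, and it attains $\mathrm{LB}$, hence is optimal. The procedure consists of $|E|$ scalar root-findings and one shortest-path-tree computation, so it runs in polynomial time, with the same precision caveat as in the Remark.

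The one genuinely load-bearing point is the feasibility check in the last step: that routing along a $\bar c$-shortest-path tree and then choosing the capacities that flatten the latency of each used edge automatically yields a Wardrop flow. This works precisely because the support can be taken to be a tree, leaving no alternative finite-latency route to undercut it; with several sinks it breaks down, consistent with the $\classAPX$-hardness established above.
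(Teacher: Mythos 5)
Your proposal is correct and follows essentially the same route as the paper: you compute the same per-edge optimal ratio $x_e^\star=u_e$ solving $x^2S_e'(x)=\ell_e$, the same reduced edge costs $\bar c_e = S_e(u_e)+\ell_e/u_e$, route all demands along a shortest-path in-tree toward $t$, set $z_e=v_e/u_e$ on the support, and verify the Wardrop property via the uniqueness of finite-latency paths in the tree. The only difference is cosmetic: you re-derive the lower bound explicitly where the paper invokes its Proposition~\ref{pro:marcotte} on the relaxation.
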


\section{Approximation}
Given the $\classAPX$-hardness of the problem, we study the approximation of CNDP. We first
provide a detailed analysis of the approximation guarantees of two different approximation algorithms. Then, as the
arguably most interesting result of this section, we provide an improved approximation guarantee for taking the better of
the two algorithms. The approximation guarantees proven in this section depend on the set $\S$ of allowable cost functions
and are in fact closely related to the \emph{anarchy value} value $\alpha(\S)$ introduced by Roughgarden~\cite{Rough02} and
Correa et al.~\cite{CorSS04}. Intuitively, the anarchy value of a set of
latency functions $\S$ is the worst case ratio between the routing cost of a Wardrop equilibrium and that of a system
optimum of an instance in which all latency functions are contained in $\S$. Roughgarden~\cite{Rough02} and Correa et
al.~\cite{CorSS04} show that $\alpha(\S) = 1/(1-\mu(\S))$, where

\begin{align}
\label{def:mu}
\mu(\S)&=\sup_{S\in \S}\sup_{x \geq 0}
\max_{\gamma \in [0,1]} \gamma \cdot \Bigl(1- \frac{S(\gamma\,
x)}{S(x)}\Bigr).
\end{align}
For a set $\S$ of latency functions, we denote by $\gamma(\S)$ the argmaximum $\gamma$ in \eqref{def:mu} for which
$\mu(\S)$ is achieved.
The following lemma gives an alternative representation of $\mu(S)$ that will be useful in
the remainder of this section.

\begin{lemma}
\label{lem:equivalent}
For a latency function $S$,
\begin{align*}
\sup_{x \geq 0} \max_{\gamma \in [0,1]}
\Bigl\{\gamma\,\Bigl(1-\frac{S(\gamma\,x)}{S(x)}\Bigr)\Bigr\}
= \sup_{x \geq 0} \Bigl\{\gamma \cdot \frac{S'(x)\,x}{S(x)+S'(x)\,x} : S(x) + S'(x)\,x =
S(x/\gamma)\Bigr\}.
\end{align*}
\end{lemma}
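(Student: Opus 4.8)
The plan is to fix a value $x>0$ throughout ($x=0$ plays no role in either supremum, and for $x>0$ strict monotonicity of $S$ gives $S(x)>S(0)\ge 0$, so all quotients below are well defined) and to analyse the inner optimisation of
\[
h_x(\gamma):=\gamma\Bigl(1-\tfrac{S(\gamma x)}{S(x)}\Bigr),\qquad\gamma\in[0,1].
\]
I would carry this out in three steps: (i) locate a maximiser of $h_x$ on $[0,1]$ via a first-order condition; (ii) substitute that condition back into $h_x$ so that its optimal value takes the shape appearing on the right-hand side; and (iii) recognise the substitution $w:=\gamma x$ as the bridge between the two sides, under which the first-order condition becomes precisely the constraint $S(w)+w\,S'(w)=S(w/\gamma)$. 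Steps (i)--(iii) then give the inequality ``$\le$'', and running them backwards gives ``$\ge$''.

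For step (i): since $S$ is non-decreasing, $h_x\ge 0$ on $[0,1]$; moreover $h_x(0)=h_x(1)=0$ while $h_x(\gamma)>0$ for $\gamma\in(0,1)$ by strict monotonicity. Hence the continuous function $h_x$ attains its maximum over the compact interval at an interior point $\gamma^*\in(0,1)$, at which $h_x'(\gamma^*)=0$ because $S\in C^1$. Differentiating yields $h_x'(\gamma)=1-\frac{1}{S(x)}\bigl(S(\gamma x)+\gamma x\,S'(\gamma x)\bigr)$, so the stationarity condition reads $S(x)=S(\gamma^* x)+\gamma^* x\,S'(\gamma^* x)$; that is, $S(x)$ equals the marginal cost $S^*(y)=S(y)+y\,S'(y)$ evaluated at $y=\gamma^* x$. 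Step (ii) is then elementary: using $S(x)-S(\gamma^* x)=\gamma^* x\,S'(\gamma^* x)$ and $S(x)=S(\gamma^* x)+\gamma^* x\,S'(\gamma^* x)$ in numerator and denominator of $1-S(\gamma^* x)/S(x)$ gives
\[
\max_{\gamma\in[0,1]}h_x(\gamma)=h_x(\gamma^*)=\gamma^*\cdot\frac{\gamma^* x\,S'(\gamma^* x)}{S(\gamma^* x)+\gamma^* x\,S'(\gamma^* x)}.
\]
Setting $w:=\gamma^* x$, so that $x=w/\gamma^*$, the stationarity condition turns into $S(w)+w\,S'(w)=S(w/\gamma^*)$, so the pair $(w,\gamma^*)$ satisfies the right-hand constraint, and the last display is exactly the right-hand objective evaluated at $w$. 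Taking the supremum over $x>0$ shows that the left-hand side does not exceed the right-hand side.

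For the reverse inequality I would start from an arbitrary admissible pair for the right-hand side, i.e.\ $x>0$ and $\gamma\in[0,1]$ with $S(x)+S'(x)\,x=S(x/\gamma)$; this forces $\gamma\in(0,1]$ (positivity so that $S(x/\gamma)$ is meaningful, and $\gamma\le1$ since $S(x/\gamma)\ge S(x)$ forces $x/\gamma\ge x$ by monotonicity). Putting $w:=x/\gamma$, so $\gamma w=x$, the constraint becomes $S(w)=S(\gamma w)+\gamma w\,S'(\gamma w)$, which is exactly the stationarity condition $h_w'(\gamma)=0$ for the parameter $w$. Hence the computation of step (ii) applies verbatim and gives $h_w(\gamma)=\gamma\cdot\frac{S'(x)\,x}{S(x)+S'(x)\,x}$, the right-hand objective at $x$. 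Since trivially $\max_{\gamma'\in[0,1]}h_w(\gamma')\ge h_w(\gamma)$, the left-hand side is at least this value, and taking the supremum over all admissible pairs completes the argument.

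The only part that is not a routine calculation is the passage in step (i) from ``stationary point'' to ``global maximiser on $[0,1]$'': I must be sure that the maximum is attained in the open interval, so that differentiability is available. I expect to settle this purely through the sign behaviour of $h_x$ (nonnegative on $[0,1]$, zero at both endpoints, strictly positive in between), which uses only that $S$ is strictly increasing; in particular, neither convexity nor the semi-convexity assumed in Assumption~\ref{ass} is needed for this lemma.
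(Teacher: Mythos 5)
Your proposal is correct and follows essentially the same route as the paper's proof: establish that the inner maximum is attained at an interior point $\gamma^*\in(0,1)$, derive the first-order condition $S(x)=S(\gamma^* x)+\gamma^* x\,S'(\gamma^* x)$, and substitute $y=\gamma^* x$ to turn that condition into the right-hand constraint and the optimal value into $\gamma\cdot S'(y)\,y/(S(y)+S'(y)\,y)$. You are somewhat more explicit than the paper in separating the two inequality directions and in checking that admissible pairs for the right-hand side necessarily have $\gamma\in(0,1]$, but these are refinements of the same argument rather than a different approach.
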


\subsection{Two Approximation Algorithms}

The first algorithm that we call \bte (cf.~Algorithm~1) was already proposed by Marcotte
\cite[Section 4.3]{Marcotte85mp} and
analyzed for monomial latency functions. Our contribution is a more
general analysis of \bte that works for arbitrary sets of latency
functions~$\S$, requiring only Assumption~\ref{ass}. The second algorithm, that we call \scale (cf.~Algorithm 2), is a new
algorithm that we introduce in this paper.

For both approximation algorithms, we first compute an optimum solution $(\vec v^*, \vec z^*)$ 
to a relaxation of CNDP without the equilibrium constraints, i.e., we compute a solution $(\vec v^*, \vec z^*)$ to the problem\linebreak
$\min_{\vec z \geq 0} \min_{v\in \flows} \sum_{e\in E} \bigl(S_e(v_e/z_e)\;v_e
+z_e\;\ell_e\bigr)$, which can be done in polynomial time (Proposition~\ref{pro:marcotte}).
Then, in both algorithms, we reduce the capacity vector $\vec z^*$, and determine a 
Wardrop equilibrium for the new capacity vector. The algorithms differ in the way we adjust the 
capacity vector $\vec z^*$. 
While in \bte, we reduce the edge capacities individually such that the optimum solution to the relaxation 
\eqref{rnd} is a Wardrop equilibrium, in \scale, we scale all capacities uniformly by a factor $\lambda$ (cf. line
\ref{l:p}-\ref{l:lambda}) and compute a Wardrop equilibrium for the scaled capacities.

\begin{figure}[tb]
\noindent\begin{minipage}[t]{0.49\textwidth}
\begin{algorithm}[H]
\begin{algorithmic}[1]
\STATE $(\vec v^*, \vec z^*) \gets $ solution to relaxation \eqref{rnd}.
\FORALL{$e\in E$}
\STATE $\delta_e \gets v_e^*/z^*_e$
\STATE $\gamma_e \gets \text{solution to } S_e(\delta_e)\!+\!S'_e(\delta_e)\delta_e=S_e(\frac{\delta_e}{\gamma_e})$ 
\STATE $z_e \gets \gamma_e z_e^*$
\ENDFOR
\RETURN{($\vec v^*, \vec z$)}
\end{algorithmic}
\caption{\bte}\label{alg:bte}
\end{algorithm}
\end{minipage}
\begin{minipage}{0.009\textwidth}
~
\end{minipage}
\begin{minipage}[t]{0.49\textwidth}
\begin{algorithm}[H]
\begin{algorithmic}[1]
\STATE $(\vec v^*, \vec z^*) \gets $ solution to relaxation \eqref{rnd}.
\STATE $p \gets C^R(\vec v^*, \vec z^*)/C(\vec v^*, \vec z^*)$\label{l:p}
\STATE $\lambda \gets \mu(\S)+\sqrt{\mu(\S)\frac{p}{1-p}}$\label{l:lambda}\phantom{\scalebox{0.86}{$\Bigg|$}}
\STATE Compute Wardrop equilibrium $\vec v$\newline with respect to scaled capacities $\lambda \vec z^*$.
\RETURN{($\vec v, \lambda \vec z^*$)}
\end{algorithmic}
\caption{\scale}\label{alg:us}
\end{algorithm}
\end{minipage}
\end{figure}

We first show that the approximation guarantee of \bte is at most $(1+\mu(\S))$. For the proof of this result, we use the
first order optimality conditions for the vector of capacities $\vec v^*$ obtained as a solution to the relaxed problem
\eqref{rnd} in combination with the variational inequalities technique used in the price of anarchy literature (e.g.
Roughgarden~\cite{Rough02} and Correa et
al.~\cite{CorSS04}).  

\begin{theorem}
\label{thm:bring-to-equilibrium}
The approximation guarantee of \bte is at most $(1+\mu(\S))$.
\end{theorem}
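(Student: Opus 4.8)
The plan is to compare the cost $C(\vec v^*, \vec z)$ of the output of \bte against the optimum $\mathrm{OPT}$ of \eqref{nd}, using that $\mathrm{OPT}$ is bounded below by the optimum of the relaxation \eqref{rnd}, which is exactly $C(\vec v^*, \vec z^*)$. Since in \bte we only shrink capacities ($z_e = \gamma_e z_e^* \le z_e^*$), the construction cost can only drop: $C^Z(\vec v^*, \vec z) \le C^Z(\vec v^*, \vec z^*)$. So the only thing to control is the routing cost $C^R(\vec v^*, \vec z) = \sum_e S_e(v_e^*/z_e) v_e^*$, and we need to show $C^R(\vec v^*, \vec z) \le C^R(\vec v^*, \vec z^*) + \mu(\S)\, C(\vec v^*, \vec z^*)$, or an inequality of that flavor. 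First I would observe that by construction of $\gamma_e$ (line 4 of Algorithm~1), the flow $\vec v^*$ is a Wardrop equilibrium with respect to $\vec z$: indeed $S_e(v_e^*/z_e) = S_e(\delta_e/\gamma_e) = S_e(\delta_e) + S_e'(\delta_e)\delta_e = S_e^*(\delta_e)$, the marginal cost evaluated at $\delta_e = v_e^*/z_e^*$. By Proposition~\ref{pro:WardropVsOptimum} and the fact that $\vec v^*$ is an \emph{optimum} flow for the latency functions $t\mapsto S_e(t/z_e^*)$ in the relaxed problem (this is the content of the first-order optimality / shortest-path characterization behind Proposition~\ref{pro:marcotte}), $\vec v^*$ is a Wardrop flow for the marginal functions $S_e^*(\cdot/z_e^*)$, hence satisfies the variational inequality \eqref{eq:variational-inequality} with latencies $S_e(v_e^*/z_e) = S_e^*(v_e^*/z_e^*)$ against any other feasible flow.

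Next I would run the standard variational-inequality / $\mu$-argument. Let $(\vec v^{\mathrm{OPT}}, \vec z^{\mathrm{OPT}})$ be an optimal solution to \eqref{nd}. Since $\vec v^*$ is a Wardrop flow for capacities $\vec z$, testing \eqref{eq:variational-inequality} against $\vec v^{\mathrm{OPT}}$ gives
\begin{align*}
C^R(\vec v^*, \vec z) = \sum_e S_e(v_e^*/z_e)\, v_e^* \le \sum_e S_e(v_e^*/z_e)\, v_e^{\mathrm{OPT}}.
\end{align*}
Now for each edge I would bound $S_e(v_e^*/z_e)\, v_e^{\mathrm{OPT}}$ in terms of $S_e(v_e^{\mathrm{OPT}}/z_e^{\mathrm{OPT}})\, v_e^{\mathrm{OPT}}$ plus a $\mu(\S)$-fraction of something. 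The natural move is the pointwise inequality underlying the definition of $\mu(\S)$: for any latency function $S$, any flow values and any capacities, $S(a)\, b \le S(b')\, b' \cdot(\text{something}) + \mu(\S)\cdot(\dots)$; more precisely, writing $x = v_e^*/z_e$ and using that $v_e^{\mathrm{OPT}} = (v_e^{\mathrm{OPT}}/z_e) z_e$, one gets $\bigl(S_e(x) - S_e(v_e^{\mathrm{OPT}}/z_e)\bigr) v_e^{\mathrm{OPT}} \le \mu(\S)\, S_e(x)\, v_e^*$ by the substitution $\gamma = v_e^{\mathrm{OPT}}/v_e^*$ in \eqref{def:mu}, at least when $v_e^{\mathrm{OPT}} \le v_e^*$; the case $v_e^{\mathrm{OPT}} > v_e^*$ is handled because then $S_e(v_e^*/z_e) \le S_e(v_e^{\mathrm{OPT}}/z_e)$ and the term is already dominated. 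Summing over $e$, the $S_e(v_e^{\mathrm{OPT}}/z_e)\, v_e^{\mathrm{OPT}}$ terms assemble into something bounded by $C^R(\vec v^{\mathrm{OPT}}, \vec z^{\mathrm{OPT}}) + C^Z$-type terms — here I need to account for the fact that the capacities in the bound are $\vec z$, not $\vec z^{\mathrm{OPT}}$; since $z_e \le z_e^*$ and we have the relaxation optimality $C(\vec v^*, \vec z^*) \le C(\vec v^{\mathrm{OPT}}, \vec z^{\mathrm{OPT}}) = \mathrm{OPT}$, I would route the estimate through $C(\vec v^*, \vec z^*)$ rather than through $\vec z^{\mathrm{OPT}}$ directly. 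Collecting: $C^R(\vec v^*, \vec z) \le C^R(\vec v^*,\vec z^*) + \mu(\S)\, C^R(\vec v^*, \vec z)$ would be too strong; the clean version is $C^R(\vec v^*, \vec z) - C^R_{\text{``part of relaxation''}} \le \mu(\S)(\dots)$, leading after adding $C^Z(\vec v^*, \vec z) \le C^Z(\vec v^*, \vec z^*)$ to $C(\vec v^*, \vec z) \le (1+\mu(\S)) C(\vec v^*, \vec z^*) \le (1+\mu(\S))\,\mathrm{OPT}$.

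The main obstacle, and where I expect the real work to be, is the edge-by-edge inequality connecting $S_e(v_e^*/z_e)\, v_e^{\mathrm{OPT}}$ to the optimum's routing cost with the correct $\mu(\S)$ constant, because the capacity that appears on the ``equilibrium side'' is $z_e = \gamma_e z_e^*$ while on the ``optimum side'' it is $z_e^{\mathrm{OPT}}$; one has to be careful not to compare latencies at mismatched capacities. I would resolve this by exploiting the first-order optimality conditions of the relaxation \eqref{rnd} for $\vec z^*$ — i.e., that at the relaxed optimum the marginal construction cost $\ell_e$ equals the marginal benefit of extra capacity, which yields an identity relating $z_e^* \ell_e$ to $x S_e'(x) v_e^*$ evaluated at $x = \delta_e$ — and feeding this into the $\mu$-bound so that the ``leftover'' terms collapse exactly into a $\mu(\S)$-fraction of $C^Z(\vec v^*, \vec z^*)$ plus a controllable fraction of $C^R(\vec v^*, \vec z^*)$. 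This is precisely the interplay the paper flags (``first order optimality conditions for the vector of capacities $\vec v^*$ ... in combination with the variational inequalities technique''), so the bookkeeping, rather than any new idea, is the crux.
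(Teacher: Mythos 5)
Your high-level framing is right in places: the relaxation value $C(\vec v^*,\vec z^*)$ lower-bounds $\mathrm{OPT}$, and $\vec v^*$ is a Wardrop flow for the shrunken capacities $\vec z$ by Proposition~\ref{pro:WardropVsOptimum}. But the two quantitative steps you propose do not close, and the paper's proof avoids both. The detour through the variational inequality against $(\vec v^{\mathrm{OPT}},\vec z^{\mathrm{OPT}})$ is unnecessary and, as you yourself observe, runs into a capacity mismatch ($\vec z$ versus $\vec z^{\mathrm{OPT}}$) that the definition of $\mu(\S)$ --- which compares $S(\gamma x)$ with $S(x)$ for one and the same function --- cannot bridge; there is no relation between $\vec z$ and $\vec z^{\mathrm{OPT}}$ to exploit. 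The paper never compares against $\vec v^{\mathrm{OPT}}$ at all: it computes $C(\vec v^*,\vec z)$ \emph{exactly} in terms of the relaxation. The KKT condition of \eqref{rnd} gives $\ell_e=S'_e(\delta_e)\,\delta_e^2$ with $\delta_e=v_e^*/z_e^*$, whence $C(\vec v^*,\vec z^*)=\sum_e\bigl(S_e(\delta_e)+S'_e(\delta_e)\,\delta_e\bigr)v_e^*$; by the defining equation for $\gamma_e$ the new routing cost is $\sum_e S_e(\delta_e/\gamma_e)\,v_e^*=\sum_e\bigl(S_e(\delta_e)+S'_e(\delta_e)\,\delta_e\bigr)v_e^*$, i.e.\ it equals $C(\vec v^*,\vec z^*)$ exactly, while the new construction cost is $\sum_e\gamma_e\,\ell_e\,z_e^*=\sum_e\gamma_e\,S'_e(\delta_e)\,\delta_e\,v_e^*$. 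Lemma~\ref{lem:equivalent} then gives $\gamma_e\,S'_e(\delta_e)\,\delta_e\le\mu(\S)\bigl(S_e(\delta_e)+S'_e(\delta_e)\,\delta_e\bigr)$ edge by edge, and the theorem follows.

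Moreover, your proposed accounting --- ``the construction cost can only drop, so it suffices to show the routing cost increases by at most $\mu(\S)\,C(\vec v^*,\vec z^*)$'' --- is quantitatively false. The routing-cost increase is exactly $\sum_e S'_e(\delta_e)\,\delta_e\,v_e^*$, and its ratio to $C(\vec v^*,\vec z^*)$ is governed by $S'(x)x/(S(x)+S'(x)x)$, not by $\mu(\S)$; for $S(x)=x$ this ratio is $1/2$ while $\mu(\S)=1/4$, so your target inequality fails and would only yield a $3/2$-approximation for affine latencies instead of $5/4$. The factor $\gamma_e<1$ by which the construction cost shrinks is not a disposable bonus but the essential ingredient: the \emph{net} cost increase is $\sum_e\gamma_e\,S'_e(\delta_e)\,\delta_e\,v_e^*$, and Lemma~\ref{lem:equivalent} is precisely the statement that this $\gamma_e$, defined by $S_e(\delta_e/\gamma_e)=S_e(\delta_e)+S'_e(\delta_e)\,\delta_e$, brings the ratio down to $\mu(\S)$. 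You gesture at the KKT identity at the end, but the bookkeeping you set up does not feed it into the place where it is needed.
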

\begin{proof}
Let $(\vec v^*,\vec z^*)$ be the relaxed solution computed in the first step of \bte. By the necessary Karush-Kuhn-Tucker
optimality conditions, $(\vec v^*,\vec z^*)$ satisfies
\begin{align}
\label{eq:opt_capacity}
\ell_e=S'_e(v_e^*/z^*_e)(v_e^*/z_e^*)^2, \text{ for all } e \in E \text{ with }z^*_e>0\,.
\end{align} 
Eliminating $\ell_e$ in the statement of the relaxed problem \eqref{rnd} we obtain the following expression for the total
cost of the relaxation: 
\begin{align}
\label{ref}
C(\vec{v^*},\vec{z^*}) &= \sum_{e\in E} \Big(S_e(v_e^*/z_e^*) +S'_e(v_e^*/z_e^*)
(v_e^*/z_e^*) \Big) v_e^*\,.
\end{align}
For each $e \in E$ let $\delta_e = v_e^*/z_e^*$, if $z_e^* > 0$, and $\delta_e = 0$, otherwise. We define a new vector of
capacities $\vec z$ by
$z_e = \gamma_e \cdot z_e^*,e\in E$, where $\gamma_e\in [0,1]$ is a solution to the equation 
\begin{align}
\label{eq:define_gamma}
 S_e(\delta_e)+S'_e(\delta_e)\,\delta_e=S_e(\delta_e/\gamma_e).
\end{align}
By Proposition \ref{pro:WardropVsOptimum},
the flow $\vec v^*$  is a Wardrop flow with respect to $\vec z$. 
We are interested in bounding $C(\vec v^*, \vec z)$. To this end, we calculate
\begin{align}
C(\vec v^*,\vec z)
&\overset{\phantom{MM}}{=} \sum_{e\in E} \left( S_e(\delta_e/\gamma_e)v_e^* +\ell_e\,z_e \right)
\nonumber \overset{\eqref{eq:define_gamma}}{=} \sum_{e\in E} \Bigl(\big(S_e(\delta_e)
+S'_e(\delta_e)\,\delta_e\big)v^*_e+\gamma_e\,\ell_e\,z_e^*\Bigr)
\nonumber\\
&\overset{\eqref{eq:opt_capacity}}{=} \sum_{e\in E}  \Bigl(\bigl(S_e(\delta_e)
+S'_e(\delta_e)\,\delta_e\bigr)v^*_e + \gamma_e\,S_e'(\delta_e)\,\delta_e\,v_e^*\Bigr)\,.
\label{eq:mid_bte}
\end{align}
By \eqref{def:mu},\eqref{eq:define_gamma}, and Lemma \ref{lem:equivalent}, we have
$
\gamma_e\,S_e'(\delta_e)\,\delta_e 
\le \mu(\S) \left( S_e(\delta_e) + S_e'(\delta_e)\,\delta_e \right)$.
Combining this inequality with \eqref{eq:mid_bte}, gives
\begin{align*}
C(\vec v^*,\vec z)
&\overset{\phantom{\eqref{def:mu}}}{\leq} (1+\mu(\S)) \sum_{e\in E}  \Bigl(\bigl(S_e(\delta_e)
+S'_e(\delta_e)\,\delta_e\Bigr)v^*_e \overset{\eqref{ref}}{=} (1+\mu(\S))\,C(\vec v^*,\vec z^*).\qedhere
\end{align*}
\end{proof}

We proceed by showing that \scale achieves the same approximation guarantee of $1 + \mu(\S)$. Recall that \scale first
computes a relaxed solution $(\vec v^*, z^*)$. Then, this relaxed solution is used to compute an optimal scaling factor
$\lambda \leq 1$ with which all capacities are scaled subsequently. The algorithm then returns the scaled capacity vector
$\lambda \vec z^*$ together with a correspond Wardrop equilibrium $v \in \W(\lambda \vec z^*)$.

An (worse) approximation guarantee of $2$ can be infered directly from a bicriteria result of Roughgarden and
Tardos~\cite{Roughgarden02} who showed that for any instance the routing cost of a Wardrop equilibrium is not worse than a
system optimum that ships twice as much flow. This implies that for $\lambda = 1/2$ we have $C(v, \lambda z^*) \leq 2
C(v^*,z^*)$, as claimed. 

For the proof of the following result, we take a different road that allows us to express the approximation
guarantee of \scale as a function of the parameter $p$ defined as the fraction of the total cost $C(\vec v^*, \vec z^*)$ of
the relaxed solution allotted to the routing costs $C^R(\vec v^*, \vec z^*)$. This is an important ingredient for the
analysis of the best-of-two algorithm.
\begin{theorem}
\label{thm:scale-uniformly}
The approximation guarantee of \scale is at most $(1+\mu(\S))$.
\end{theorem}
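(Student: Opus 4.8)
The plan is to show that scaling down the relaxed capacities by the factor $\lambda = \mu(\S)+\sqrt{\mu(\S)\,p/(1-p)}$ produces a feasible solution whose cost is at most $(1+\mu(\S))$ times the relaxed optimum $C(\vec v^*,\vec z^*)$. Write $R^* = C^R(\vec v^*,\vec z^*)$ and $Z^* = C^Z(\vec v^*,\vec z^*)$, so that $p = R^*/(R^*+Z^*)$ and $1-p = Z^*/(R^*+Z^*)$. The construction cost of the scaled solution is immediate: since all capacities are multiplied by $\lambda$, we get $C^Z(\vec v,\lambda\vec z^*) = \lambda\,Z^*$. The work is entirely in bounding the routing cost $C^R(\vec v,\lambda\vec z^*)$ of the Wardrop equilibrium $\vec v\in\W(\lambda\vec z^*)$.

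\smallskip
\noindent\textbf{Bounding the routing cost.} I would use the variational inequality \eqref{eq:variational-inequality} for $\vec v$ with respect to capacities $\lambda\vec z^*$, tested against the feasible flow $\vec v^*$: this gives $\sum_e S_e(v_e/(\lambda z_e^*))\,v_e \le \sum_e S_e(v_e/(\lambda z_e^*))\,v_e^*$. The next step is the standard Correa--Schulz--Stier-Moses / Roughgarden trick: bound each term $S_e(v_e/(\lambda z_e^*))\,v_e^*$ by relating $S_e$ evaluated at the equilibrium argument to $S_e$ evaluated at $v_e^*/z_e^*$. Setting $\delta_e = v_e^*/z_e^*$, the definition \eqref{def:mu} of $\mu(\S)$ (in the $\gamma$-form, applied with the scaling factor playing the role of $\gamma$) yields a bound of the shape
\begin{align*}
S_e\!\bigl(v_e/(\lambda z_e^*)\bigr)\,v_e^* \;\le\; \mu(\S)\,S_e\!\bigl(v_e/(\lambda z_e^*)\bigr)\,v_e \;+\; \tfrac{1}{\lambda}\,S_e(\delta_e)\,v_e^*,
\end{align*}
or more precisely a version of it adapted to the fact that we scaled capacities by $\lambda$ rather than scaling flow. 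Summing over $e$, combining with the variational inequality, and rearranging gives
\begin{align*}
(1-\mu(\S))\,C^R(\vec v,\lambda\vec z^*) \;\le\; \tfrac{1}{\lambda}\sum_{e\in E} S_e(\delta_e)\,v_e^* \;=\; \tfrac{1}{\lambda}\,R^*,
\end{align*}
hence $C^R(\vec v,\lambda\vec z^*) \le \frac{1}{\lambda(1-\mu(\S))}\,R^*$. (I should double-check the exact exponent of $\lambda$ that appears; the scaling of capacities by $\lambda$ changes $S_e(v_e/z_e)$ to $S_e(v_e/(\lambda z_e^*))$, and since $\delta_e/\lambda$ is the argument when $v_e = v_e^*$, the factor that lands in front of $R^*$ should indeed be $1/\lambda$ after the $\mu$-inequality absorbs the rest.)

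\smallskip
\noindent\textbf{Combining and optimizing over $\lambda$.} Adding the two pieces,
\begin{align*}
C(\vec v,\lambda\vec z^*) \;\le\; \frac{R^*}{\lambda(1-\mu(\S))} \;+\; \lambda\,Z^*.
\end{align*}
Dividing by $C(\vec v^*,\vec z^*) = R^* + Z^*$ and writing $R^* = p\,C(\vec v^*,\vec z^*)$, $Z^* = (1-p)\,C(\vec v^*,\vec z^*)$, the approximation ratio is at most $\frac{p}{\lambda(1-\mu(\S))} + \lambda(1-p)$. I would then plug in the algorithm's choice $\lambda = \mu(\S)+\sqrt{\mu(\S)\,p/(1-p)}$ (which is precisely the minimizer, up to the $(1-\mu(\S))$ normalization, of a function of this type), and verify by a direct algebraic computation that the resulting value equals $1+\mu(\S)$. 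The identity to check is
\begin{align*}
\frac{p}{\bigl(\mu+\sqrt{\mu p/(1-p)}\bigr)(1-\mu)} + \bigl(\mu+\sqrt{\mu p/(1-p)}\bigr)(1-p) \;=\; 1+\mu,
\end{align*}
writing $\mu = \mu(\S)$; this should follow after clearing denominators and using $\sqrt{\mu p/(1-p)}\cdot\sqrt{\mu p/(1-p)} = \mu p/(1-p)$. Finally I would note that $\lambda\le 1$ (needed so that the scaled-down solution is genuinely a reduction and the $\mu$-inequality applies with $\gamma=1/\lambda\ge 1$ handled correctly — actually one wants $\gamma\in[0,1]$, so the relevant comparison is of $S_e$ at argument $\delta_e/\lambda \ge \delta_e$, which is the right direction): this holds because $p\le \mu/(1+\mu) \cdot(\text{something})$... more carefully, $\lambda\le1$ whenever $p$ is not too large, and when $p$ is large one can fall back on a separate bound; I would handle the boundary case $\lambda>1$ by observing that then $\lambda := 1$ (no scaling) already suffices.

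\smallskip
\noindent\textbf{Main obstacle.} The delicate point is getting the direction of the $\mu(\S)$-inequality right after scaling \emph{capacities} rather than \emph{flow}. In the classical price-of-anarchy argument one compares an equilibrium flow to a system optimum with the same capacities; here the capacities differ between $\vec v$ (which sees $\lambda\vec z^*$) and the reference flow $\vec v^*$ (whose natural capacities are $\vec z^*$). Reconciling these — i.e. correctly identifying which argument of $S_e$ plays the role of $\gamma x$ and which plays the role of $x$ in \eqref{def:mu} — and making sure the leftover term is exactly $\frac1\lambda R^*$ and not something with a different power of $\lambda$ or with $v_e$'s that fail to collapse, is where I expect the real care to be needed. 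Everything after that is a one-variable optimization that the algorithm's formula for $\lambda$ has been reverse-engineered to make come out cleanly.
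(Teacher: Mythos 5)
Your overall strategy is the paper's: test the variational inequality against $\vec v^*$, apply a $\mu(\S)$-type per-edge bound, and let the choice of $\lambda$ balance routing against construction cost; the construction-cost part and the variational-inequality step are fine. But the per-edge inequality you write down is false, and it is exactly the point you flagged as delicate. Your bound
$S_e\bigl(v_e/(\lambda z_e^*)\bigr)v_e^* \le \mu(\S)\,S_e\bigl(v_e/(\lambda z_e^*)\bigr)v_e + \tfrac{1}{\lambda}S_e(\delta_e)\,v_e^*$
silently replaces $S_e(\delta_e/\lambda)$ by $S_e(\delta_e)/\lambda$, i.e.\ assumes degree-one homogeneity of $S_e$. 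For $S(x)=x^2$ (so $\mu=2\sqrt{3}/9$) the claim reduces, after clearing the $\lambda^2 z^2$ factor, to $v_e^2v_e^*\le \mu v_e^3+\lambda (v_e^*)^3$, which fails at $v_e^*=v_e/\sqrt{3}$ for every $\lambda<1$ (at $\lambda=1$ it is tight). Consequently your routing bound $C^R(\vec v,\lambda\vec z^*)\le R^*/(\lambda(1-\mu(\S)))$ is unsupported, and the closing ``identity'' you propose to verify is simply false: at $\mu=1/4$, $p=4/5$ one gets $\lambda=5/4$ and your left-hand side evaluates to $64/75+1/4=331/300\approx 1.10$, not $5/4$. (An upper bound strictly below $1+\mu$ at the worst-case $p$ should itself have been a warning sign, since for large $p$ your expression even drops below $1$, contradicting the fact that the relaxation is a lower bound.)

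The correct version keeps the \emph{unscaled} reference term and puts the $\lambda$ on the equilibrium side: substituting $x=v_e^*$, $y=v_e$, rescaling by $z_e^*$ and then $y\mapsto\lambda y$, one obtains
$S_e\bigl(v_e/(\lambda z_e^*)\bigr)v_e^* - S_e\bigl(v_e^*/z_e^*\bigr)v_e^* \le \tfrac{\mu(\S)}{\lambda}\, S_e\bigl(v_e/(\lambda z_e^*)\bigr)v_e$,
which after summation gives the self-referential bound $C^R(\vec v,\lambda\vec z^*)\le p\,C(\vec v^*,\vec z^*)+\tfrac{\mu(\S)}{\lambda}C^R(\vec v,\lambda\vec z^*)$, hence $C^R(\vec v,\lambda\vec z^*)\le \frac{p}{1-\mu(\S)/\lambda}\,C(\vec v^*,\vec z^*)$, and a total of $\lambda\bigl(\frac{p}{\lambda-\mu(\S)}+1-p\bigr)C(\vec v^*,\vec z^*)=\bigl(\sqrt{p}+\sqrt{\mu(\S)(1-p)}\bigr)^2 C(\vec v^*,\vec z^*)\le(1+\mu(\S))\,C(\vec v^*,\vec z^*)$. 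Note also that your worry about $\lambda\le 1$ is a red herring: the derivation above is valid for every $\lambda>0$, and the worst case $p=1/(1+\mu(\S))$ actually yields $\lambda=1+\mu(\S)>1$, so falling back to ``no scaling'' there would destroy the bound rather than rescue it.
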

\begin{proof}
The algorithm first computes an optimum solution $(\vec v^*, \vec z^*)$ of the relaxed problem \eqref{rnd}. Then $p \in
[0,1]$ is defined as the fraction of $C(\vec v^*, \vec z^*)$ that corresponds to the routing cost
$C^R(\vec v^*, \vec z^*)$, i.e., $C^R(\vec v^*,\vec z^*) = \sum_{e\in E} S_e(v_e^*/z_e^*)\,v_e^*
= p\,C(\vec v^*,\vec z^*)$. Now, we define  $\smash{\lambda=\mu(\S)+\sqrt{\mu(\S)\frac{p}{1-p}}}$ and consider the capacity
vector
$\lambda \vec z^*$, in which the capacities of the optimal solution to the relaxation 
are scaled uniformly by $\lambda$. 
Finally, we compute a Wardrop equilibrium with respect to capacities $\lambda \vec z^*$.
Let $\vec v$ the corresponding equilibrium flow. We now bound the routing and
installation
cost of $(\vec v,\lambda \vec z^*)$ separately. For the installation
cost, we obtain
\begin{align*}
C^Z(\vec v,\lambda \vec z^*)
&= \sum_{e\in E}\lambda\, \ell_e\, z_e=\lambda (1-p)\,C(\vec v^*,\vec z)
\end{align*}
and for the routing cost
\begin{align}
C^R(\vec v,\lambda \vec z^*)
&= \sum_{e\in E} S_e\Bigl(\frac{v_e}{\lambda z_e^*}\Bigr)\,v_e \leq \sum_{e\in E} S_e\Bigl(\frac{v_e}{\lambda
z_e^*}\Bigr)\,v_e^*
= p\,C(\vec v^*,\vec z^*) + \sum_{e\in E} \left( S_e\Bigl(\frac{v_e}{\lambda z_e^*}\Bigr)\,v_e^*-             
S_e\Bigl(\frac{v_e^*}{z_e^*}\Bigr)\,v_e^*\right),\label{eq:routing_cost}
\end{align}
where the first inequality uses the variational inequality \eqref{eq:variational-inequality}.
We proceed to bound $S_e(\frac{v_e}{\lambda z_e^*})\,v_e^*- S_e(\frac{v_e^*}{z_e^*})\,v_e^*$
in terms of the routing cost $S_e(\frac{v_e}{\lambda z_e^*})\,v_e$ for that edge $e$. To this end, note that for each edge
$e \in E$ we have 
\begin{align}
\frac{S_e(\frac{v_e}{\lambda z_e^*})v_e^*-
S_e\bigl(\frac{v_e^*}{z_e^*}\bigr)v_e^*}{S_e(\frac{v_e}{\lambda z_e^*})v_e}
&\leq \sup_{S \in \S} \sup_{x,y,z \geq 0} \!\!\frac{S(\frac{y}{\lambda z})x-S(\frac{x}{z})x}{S(\frac{y}{\lambda z})y}
= \sup_{S \in \S} \sup_{x,y \geq 0} \!\!\frac{S(\frac{y}{\lambda})x-S(x)x}{S(\frac{y}{\lambda})y}
= \sup_{S \in \S} \sup_{x,y \geq 0} \!\!\frac{S(y)\,x-S(x)x}{S(y)\lambda y}.\notag
\intertext{This implies $y \geq x$ and we may substitute $x =\gamma\,y$ with $\gamma \in [0,1]$. We then obtain for each
edge $e \in E$ that}
\frac{S_e(\frac{v_e}{\lambda z_e^*})v_e^*-
S_e\bigl(\frac{v_e^*}{z_e^*}\bigr)v_e^*}{S_e(\frac{v_e}{\lambda z_e^*})v_e}
&\leq \sup_{S \in \S} \sup_{y \geq 0} \!\max_{\gamma \in [0,1]} \frac{\gamma S(y)-\gamma S(\gamma\,y)}{\lambda\,S(y)}
= \sup_{S \in \S} \sup_{y \geq 0} \!\max_{\gamma \in [0,1]} \frac{\gamma}{\lambda}\Bigl(1-\frac{S(\gamma\,
y)}{S(y)}\Bigr) = \frac{\mu(\S)}{\lambda}.\label{eq:sup_bound}
\end{align}
Plugging \eqref{eq:sup_bound} in \eqref{eq:routing_cost}, we obtain for the routing cost
$C^R(\vec v,\lambda \vec z^*) \leq p\,C(\vec v^*,\vec z^*)+\frac{\mu(\S)}{\lambda}\,C^R(\vec v,\lambda \vec
z^*)$ or, equivalently, $C^R(\vec v,\lambda \vec z^*) \leq \frac{p}{1-\mu(\S)/\lambda} C(\vec v^*,\vec
z^*)$.
Thus, we can bound the total cost of the outcome of \scale by
\begin{align*}
C(\vec v,\lambda \vec z^*) &=C^R(\vec v,\lambda \vec z^*)+C^Z(\vec v,\lambda \vec z^*) \leq \frac{p}{1-\mu(\S)/\lambda}
C(\vec v^*,\vec z^*) +\lambda(1-p)\,C(\vec v^*,\vec z^*) \\
& = \lambda \Big(\frac{p}{\lambda-\mu(\S)}+1-p\Big)C(\vec v^*,\vec z^*).
\end{align*}
Since
$
\lambda = \mu(\S) + \sqrt{\mu(\S) \frac{p}{1-p}},
$
we obtain
\begin{align}
\frac{C(\vec v,\lambda \vec z^*)}{C(\vec v^*,\vec z^*)} 
\le p + 2\sqrt{p(1-p)\mu(\S)} + \mu(\S)(1-p)
= \bigl(\sqrt{p} + \sqrt{\mu(\S)(1-p)}\bigr)^2.
\label{eq:mid_scale}
\end{align}
Elementary calculus shows that 
$\bigl(\sqrt{p} + \sqrt{\mu(\S)(1-p)}\bigr)^2$ attains its maximum at $p=\frac{1}{1+\mu(S)}$.
Substituting this value into \eqref{eq:mid_scale} gives $C(\vec v,\lambda \vec z^*) / C(\vec v^*,\vec z^*) \le 1+ \mu(\S)$,
as claimed.
\end{proof}

For particular sets $\S$ of latency functions, we compute upper bounds upper bounds on $\mu(\S)$ in order to obtain an
explicit upper bound on the approximation guarantees of \bte and \scale. We then obtain the following corollary of
Theorem~\ref{thm:bring-to-equilibrium} and Theorem~\ref{thm:scale-uniformly}.

\begin{corollary}
\label{cor:simple}
For a set $\S$ of latency functions satisfying Assumption~\ref{ass}, the approximation guarantee 
of \bte and \scale 
is at most 
\begin{itemize}
\itemsep-0.2em
\item[(a)] $2$, without further requirements on $\S$.
\item[(b)] $\frac{5}{4}$, if $\S$ contains concave latencies only,
\itemsep-0.4em
\item[(c)] $1 +\frac{\Delta}{\Delta+1}\bigl(\frac{1}{\Delta+1}\bigr)^{1/\Delta}$, if $\S$ 
contains only polynomials with non-negative coefficients and degree at most $\Delta$, 
i.e., every $S\in\S$ is of the form $S(x) = \sum_{j=0}^{\Delta}a_j x^j$ with $a_j\geq 0$ for all $j$. 
\end{itemize}
\end{corollary}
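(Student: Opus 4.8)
The plan is to derive all three bounds from Theorem~\ref{thm:bring-to-equilibrium} and Theorem~\ref{thm:scale-uniformly}, which already establish that both \bte and \scale return a solution of cost at most $(1+\mu(\S))$ times the optimum. Hence the only remaining work is to bound the quantity
\[
\mu(\S) = \sup_{S \in \S} \sup_{x \geq 0} \max_{\gamma \in [0,1]} \gamma\Bigl(1 - \frac{S(\gamma x)}{S(x)}\Bigr)
\]
from above for each of the three function classes, and to add $1$.

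For part~(a), I would only use that under Assumption~\ref{ass} every $S \in \S$ is non-decreasing and non-negative, so $0 \le S(\gamma x) \le S(x)$ for $\gamma \in [0,1]$; this gives $\gamma\bigl(1 - S(\gamma x)/S(x)\bigr) \le \gamma \le 1$, hence $\mu(\S) \le 1$ and the guarantee is at most $2$. For part~(b), when $S$ is in addition concave, concavity together with $S(0) \ge 0$ yields $S(\gamma x) \ge \gamma S(x) + (1-\gamma)S(0) \ge \gamma S(x)$, so $1 - S(\gamma x)/S(x) \le 1-\gamma$ and therefore $\gamma\bigl(1 - S(\gamma x)/S(x)\bigr) \le \gamma(1-\gamma) \le \tfrac14$, the last step being the elementary maximization attained at $\gamma = \tfrac12$. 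Thus $\mu(\S) \le \tfrac14$ and the guarantee is at most $\tfrac54$.

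For part~(c) I would write $S(x) = \sum_{j=0}^{\Delta} a_j x^j$ with $a_j \ge 0$ and observe that for fixed $x > 0$
\[
\gamma\Bigl(1 - \frac{S(\gamma x)}{S(x)}\Bigr) = \frac{\sum_{j=0}^{\Delta} a_j x^j\,\gamma(1-\gamma^j)}{\sum_{j=0}^{\Delta} a_j x^j}
\]
is a convex combination of the numbers $\gamma(1-\gamma^j)$, $j=0,\dots,\Delta$, hence bounded by $\max_{0\le j\le\Delta}\gamma(1-\gamma^j)$. Since $\gamma^j$ is non-increasing in $j$ on $[0,1]$, this maximum is $\gamma(1-\gamma^\Delta)$, so the problem reduces to maximizing $g(\gamma) = \gamma - \gamma^{\Delta+1}$ over $[0,1]$. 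Setting $g'(\gamma) = 1 - (\Delta+1)\gamma^\Delta = 0$ gives $\gamma^\ast = (\Delta+1)^{-1/\Delta}$ and $g(\gamma^\ast) = \tfrac{\Delta}{\Delta+1}\bigl(\tfrac{1}{\Delta+1}\bigr)^{1/\Delta}$, which bounds $\mu(\S)$ and yields the claimed guarantee after adding $1$.

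The computations are all elementary; the only step needing a little care is the reduction in part~(c) to the single monomial $x^\Delta$, namely recognizing the expression as a convex combination of the ``pure monomial'' values $\gamma(1-\gamma^j)$ and that these are monotone in $j$ — once that is in place the remaining one-variable optimization is routine. I would also add a short remark that each of these classes does contain functions satisfying Assumption~\ref{ass} (for instance $\sqrt{x}$ for the concave case and any $\sum_j a_j x^j$ with $a_\Delta>0$ for the polynomial case), so the stated bounds are not vacuous.
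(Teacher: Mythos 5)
Your proposal is correct and follows essentially the same route as the paper's own proof: both reduce the corollary to bounding $\mu(\S)$ via Theorems~\ref{thm:bring-to-equilibrium} and~\ref{thm:scale-uniformly}, use monotonicity for (a), the concavity inequality $S(\gamma x)\ge \gamma S(x)+(1-\gamma)S(0)$ for (b), and the observation that the polynomial expression is a weighted average of the monomial values $\gamma(1-\gamma^j)$, maximized at $j=\Delta$, for (c). The one-variable optimizations and the resulting constants match the paper exactly.
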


%

\subsection{Best-of-Two Approximation}

In this section we show that although both \bte and \scale achieve an approximation guarantee of $(1+\mu(\S))$ taking the
better of the two algorithms we obtain a strictly better performance guarantee.

The key idea of the proof is to extend the analysis of the \bte algorithm in order to express its approximation guarantee
as a function of the parameter $p$ that measures the proportion of the routing cost in the total cost of a relaxed
solution. This allows us to determine the
worst-case $p$ for which the approximation guarantee of the both algorithm is maximized. 

\begin{theorem}
\label{thm:better}
Taking the better solution of \bte and \scale has an approximation guarantee of at most
$
\frac{(\gamma(\S) + \mu(\S) +1)^2}{(\gamma(\S) + \mu(\S) + 1)^2 - 4\mu(\S)\gamma(\S)},
$
which is strictly smaller than $1+\mu(\S)$.
\end{theorem}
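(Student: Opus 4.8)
The plan is to parametrize the approximation guarantee of each of the two algorithms as a function of the quantity $p = C^R(\vec v^*, \vec z^*)/C(\vec v^*, \vec z^*)$, the fraction of the relaxed optimum spent on routing, and then take the pointwise minimum of the two guarantees and maximize over $p \in [0,1]$. From the proof of Theorem~\ref{thm:scale-uniformly} we already have that \scale achieves ratio $\rho_{\mathrm{US}}(p) = \bigl(\sqrt{p} + \sqrt{\mu(\S)(1-p)}\bigr)^2$; crucially this is the bound \emph{before} optimizing over $p$, so it is genuinely a function of $p$. Note $\rho_{\mathrm{US}}(0) = \mu(\S) < 1+\mu(\S)$ and $\rho_{\mathrm{US}}(1) = 1 < 1+\mu(\S)$, so \scale is strictly better than the uniform bound unless $p$ sits near the maximizer $p = 1/(1+\mu(\S))$.

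The first substantive step is to redo the analysis of \bte to expose its dependence on $p$. Going back to \eqref{eq:mid_bte}, the cost of the \bte output is $\sum_e \bigl((S_e(\delta_e)+S_e'(\delta_e)\delta_e) v_e^* + \gamma_e S_e'(\delta_e)\delta_e v_e^*\bigr)$. The first summand equals $C(\vec v^*,\vec z^*)$ by \eqref{ref}. For the second summand, rather than bounding $\gamma_e S_e'(\delta_e)\delta_e$ against $S_e(\delta_e)+S_e'(\delta_e)\delta_e$ (which gave the $\mu(\S)$ factor and threw away information), I would bound it in two different ways and combine: on one hand against $\mu(\S)(S_e(\delta_e)+S_e'(\delta_e)\delta_e)$ as before, but on the other hand observe that the argmaximizer structure in Lemma~\ref{lem:equivalent} ties $\gamma_e$ to $\gamma(\S)$, so that $\gamma_e S_e'(\delta_e)\delta_e \le \gamma(\S)\,S_e(\delta_e)$ — i.e.\ this term is controlled by the \emph{routing} part $\sum_e S_e(\delta_e)v_e^* = p\,C(\vec v^*,\vec z^*)$ scaled by $\gamma(\S)$, while the ``extra'' part $\sum_e S_e'(\delta_e)\delta_e v_e^* = (1-p)C(\vec v^*,\vec z^*)$ contributes the marginal term. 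Carefully splitting $\gamma_e S_e'(\delta_e)\delta_e v_e^*$ into a piece proportional to $S_e(\delta_e)v_e^*$ and a piece proportional to $S_e'(\delta_e)\delta_e v_e^*$, with coefficients governed by $\gamma(\S)$ and $\mu(\S)$, should yield a bound of the form $\rho_{\mathrm{BTE}}(p) \le 1 + a\,p + b\,(1-p)$ for suitable $a,b$ expressed via $\gamma(\S),\mu(\S)$; in particular $\rho_{\mathrm{BTE}}$ is \emph{affine and increasing} in $p$ on the relevant range (more routing cost makes \bte worse, since \bte does not touch the flow), with $\rho_{\mathrm{BTE}}(0) < 1+\mu(\S)$ and value $1+\mu(\S)$ only attained as $p$ grows.

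With $\rho_{\mathrm{US}}(p)$ concave in $p$ (a square root of a concave function, squared — one checks it is concave on $[0,1]$) and peaking at $p = 1/(1+\mu(\S))$ with value $1+\mu(\S)$, and $\rho_{\mathrm{BTE}}(p)$ affine increasing with $\rho_{\mathrm{BTE}}(p) < 1+\mu(\S)$ for $p$ below some threshold exceeding $1/(1+\mu(\S))$, the best-of-two guarantee is $\rho(p) = \min\{\rho_{\mathrm{BTE}}(p), \rho_{\mathrm{US}}(p)\}$, and I would maximize $\rho$ over $p\in[0,1]$. The maximum is attained either at the crossing point $\rho_{\mathrm{BTE}}(p) = \rho_{\mathrm{US}}(p)$ (to the left of $1/(1+\mu(\S))$, $\rho_{\mathrm{BTE}}$ is smaller and decreasing-dominant isn't the issue — actually to the left $\rho_{\mathrm{BTE}}$ is the binding one since it is smaller there and increasing, pushing the max rightward; to the right of the peak $\rho_{\mathrm{US}}$ decreases) — so the worst case is where the increasing curve $\rho_{\mathrm{BTE}}$ meets the descending branch of $\rho_{\mathrm{US}}$, i.e.\ at the unique $p^\star > 1/(1+\mu(\S))$ with $\rho_{\mathrm{BTE}}(p^\star) = \rho_{\mathrm{US}}(p^\star)$. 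Solving this equation (a quadratic in $\sqrt{p}$ or in $p$) and substituting back should produce exactly
\[
\frac{(\gamma(\S) + \mu(\S) +1)^2}{(\gamma(\S) + \mu(\S) + 1)^2 - 4\mu(\S)\gamma(\S)}.
\]
Finally, to see this is strictly below $1+\mu(\S)$, rearrange: it is equivalent to $(\gamma(\S)+\mu(\S)+1)^2 < (1+\mu(\S))\bigl((\gamma(\S)+\mu(\S)+1)^2 - 4\mu(\S)\gamma(\S)\bigr)$, i.e.\ $4\mu(\S)\gamma(\S)(1+\mu(\S)) < \mu(\S)(\gamma(\S)+\mu(\S)+1)^2$, i.e.\ (dividing by $\mu(\S)>0$) $4\gamma(\S)(1+\mu(\S)) < (\gamma(\S)+(1+\mu(\S)))^2$, which is just AM–GM, strict whenever $\gamma(\S)\neq 1+\mu(\S)$ — and $\gamma(\S)\le 1 < 1+\mu(\S)$ since $\mu(\S)>0$ for any nonconstant latency class. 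The main obstacle I anticipate is the first step: getting the $p$-dependent bound for \bte tight enough, which requires carefully using the argmaximum characterization of $\gamma(\S)$ from Lemma~\ref{lem:equivalent} to split the term $\gamma_e S_e'(\delta_e)\delta_e v_e^*$ correctly between the routing and marginal parts, rather than the crude single bound used in Theorem~\ref{thm:bring-to-equilibrium}; the subsequent optimization over $p$ is elementary calculus.
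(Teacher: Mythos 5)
Your overall strategy is exactly the paper's: express the guarantee of each algorithm as a function of $p = C^R(\vec v^*,\vec z^*)/C(\vec v^*,\vec z^*)$, take the pointwise minimum, and maximize over $p$; and your closing AM--GM argument for why the resulting ratio is strictly below $1+\mu(\S)$ is correct. However, the one step you yourself flag as the main obstacle --- the $p$-dependent bound for \bte --- is where the proposal goes wrong, and it goes wrong in a way that inverts the whole geometric picture. The inequality you propose, $\gamma_e\,S_e'(\delta_e)\,\delta_e \le \gamma(\S)\,S_e(\delta_e)$, is false in general: for $S(x)=x^{\Delta}$ one has $S'(x)\,x = \Delta\,S(x)$, so the left side is not controlled by the routing term. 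The correct (and much simpler) move, starting from \eqref{eq:mid_bte} and \eqref{ref}, is to observe that the surplus $\sum_e \gamma_e S_e'(\delta_e)\delta_e v_e^*$ is at most $\gamma(\S)$ times $\sum_e S_e'(\delta_e)\delta_e v_e^*$, which by \eqref{eq:opt_capacity} is precisely the \emph{installation} cost $(1-p)\,C(\vec v^*,\vec z^*)$. This gives $\rho_{\mathrm{BTE}}(p) \le 1+\gamma(\S)(1-p)$: affine and \emph{decreasing} in $p$, with $\rho_{\mathrm{BTE}}(0)=1+\gamma(\S)$, which \emph{exceeds} $1+\mu(\S)$ whenever $\gamma(\S)>\mu(\S)$ (always the case for polynomial classes, where $\mu(\S)=\gamma(\S)\cdot\Delta/(\Delta+1)$). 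Your intuition that ``more routing cost makes \bte worse'' is backwards: \bte's penalty relative to the relaxation is the increase in routing cost caused by shrinking capacities, and that increase is proportional to the capacity share $(1-p)$, not to $p$.

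Consequently the worst case is not, as you claim, a crossing at some $p^\star > 1/(1+\mu(\S))$ on the descending branch of $\rho_{\mathrm{US}}$; it is the unique crossing $p^* \in \bigl(0,\tfrac{1}{1+\mu(\S)}\bigr)$ on the \emph{ascending} branch, where the decreasing line $1+\gamma(\S)(1-p)$ meets $\bigl(\sqrt{p}+\sqrt{\mu(\S)(1-p)}\bigr)^2$ (cf.\ Lemma~\ref{lem:maxof2}, which solves this and obtains $p^*=\frac{(\gamma(\S)-\mu(\S)+1)^2}{(\gamma(\S)-\mu(\S)+1)^2+4\mu(\S)}$ and the stated ratio). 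With your increasing bound you would be solving a different equation at a different point and would not recover the claimed formula; moreover, since you assert $\rho_{\mathrm{BTE}}(p)<1+\mu(\S)$ for all $p$ up to a threshold past the peak, your picture would suggest a best-of-two bound achieved where $\rho_{\mathrm{US}}$ alone already fails to improve on $1+\mu(\S)$, which cannot yield a strict improvement. So the gap is concrete: you need the capacity-side split $\sum_e S_e'(\delta_e)\delta_e v_e^* = (1-p)\,C(\vec v^*,\vec z^*)$ rather than a routing-side bound, and the monotonicity of $\rho_{\mathrm{BTE}}$ must be reversed before the final optimization over $p$ can be carried out correctly.
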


\begin{proof}
Recall from \eqref{eq:mid_scale} that the approximation guarantee of the algorithm \scale is 
\linebreak$\smash{\bigl(\sqrt{p} + \sqrt{\mu(\S)(1-p)}\bigr)^2}$, where $\smash{p=C^R(\vec v^*, \vec z^*)/C(\vec v^*, \vec z^*)}$. 
We extend our analysis of \bte using this parameter $p$.
With the notation in Theorem~\ref{thm:bring-to-equilibrium}, by \eqref{eq:mid_bte}, \bte returns a feasible
solution
$(\vec v^*,\vec z)$ with
\begin{align*}
C(\vec v^*,\vec z)
&= \sum_{e\in E}  \Bigl(\bigl(S_e(\delta_e)
+S'_e(\delta_e)\,\delta_e\bigr)v^*_e + \gamma_e\,S_e'(\delta_e)\,\delta_e\,v_e^*\Bigr)
=
p\,C(\vec v^*,\vec z^*)+\sum_{e\in E} S'_e(\delta_e)\,\delta_e\,v^*_e (1+\gamma_e)\\
&\le
p\,C(\vec v^*,\vec z^*)+(1+\gamma(\S)) \sum_{e\in E} S'_e(\delta_e)\,\delta_e\, v_e^*
=
p\,C(\vec v^*,\vec z^*)+(1+\gamma(\S))(1-p)\,C(\vec v^*,\vec z^*)\\
&=
\big( 1+ \gamma(\S)(1-p)\big) \, C(\vec v^*,\vec z^*).
\end{align*}
Thus, by
taking the best of the two heuristics, we obtain an approximation guarantee of
\begin{align*}
\max_{p\in (0,1)}\min\left\{1 + \gamma(\S)(1-p), \Bigl(\sqrt{p} +
\sqrt{\mu(\S)(1-p)}\Bigr)^2\right\}.
\end{align*}
The maximum of this expression is attained for 
\begin{align}
p=p^* := \frac{(\gamma(\S)-\mu(\S)+1)^2}{(\gamma(\S)-\mu(\S)+1)^2 + 4\mu(\S)}
\label{eq:p_star}
\end{align}
which yields the claimed improved upper bound (cf. Lemma~\ref{lem:maxof2} in the appendix for details).
%
%
\end{proof}

It is not necessary to run both approximation algorithms to get this approximation
guarantee.  After computing the optimum solution to the relaxation \eqref{rnd}, we can determine 
the value for $p=C^R(\vec v^*, \vec z^*)/C(\vec v^*, \vec z^*)$ and proceed with \scale if
$p\le p^*$ (cf.  \eqref{eq:p_star}) and with \bte otherwise.

Fort particular sets $\S$ of latency functions, we evaluate $\mu(\S)$ and $\gamma(\S)$ and obtain the following corollary of
Theorem~\ref{thm:better}.

\begin{corollary}
\label{cor:better}
For a set $\S$ of latency functions satisfying Assumption~\ref{ass}, the approximation guarantee in Theorem~\ref{thm:better}
is at most 
\begin{itemize}
\itemsep-0.2em
\item[(a)] $\frac{9}{5}$, without further requirements on $\S$,
\item[(b)] $\frac{49}{41}\approx 1.195$, if $\S$ contains concave latencies only.
\itemsep-0.3em
\item[(c)] $1 +\frac{4\Delta(\Delta+1)}{2(2\Delta+1)(\Delta+1)^{1+1/\Delta} + (\Delta+1)^{2(1+1/\Delta)} + 1}$, if $\S$
contains only polynomials with non-negative coefficients and degree at most $\Delta$, 
i.e., every $S\in\S$ is of the form $S(x) = \sum_{j=0}^{\Delta}a_j x^j$ with $a_j\geq 0$ for all $j$. 
\end{itemize}
\end{corollary}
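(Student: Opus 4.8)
The plan is to obtain Corollary~\ref{cor:better} as a direct specialization of the closed-form bound
$B(\mu,\gamma) := \tfrac{(\gamma+\mu+1)^2}{(\gamma+\mu+1)^2 - 4\mu\gamma}$
furnished by Theorem~\ref{thm:better}: no new argument about \bte or \scale is needed; for each of the three classes one only has to (i) evaluate $\mu(\S)$ together with its argument maximum $\gamma(\S)$, and (ii) simplify $B(\mu(\S),\gamma(\S))$. For part~(a) I would first note, straight from \eqref{def:mu}, that $\gamma(\S)\in[0,1]$ and that $1-S(\gamma x)/S(x)\in[0,1]$ for every nondecreasing $S$, so $\mu(\S)\le 1$; hence it suffices to check $B(\mu,\gamma)\le 9/5$ on the whole square $[0,1]^2$. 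Writing $B=\bigl(1-4\mu\gamma/(\mu+\gamma+1)^2\bigr)^{-1}$, this amounts to maximizing $h(\mu,\gamma):=\mu\gamma/(\mu+\gamma+1)^2$; a one-line computation gives $\partial_\mu h=\gamma(\gamma+1-\mu)/(\mu+\gamma+1)^3\ge 0$ on $[0,1]^2$ and symmetrically $\partial_\gamma h\ge 0$, so $h$ is nondecreasing in each argument and attains its maximum $h(1,1)=1/9$ at the corner, whence $B\le(1-4/9)^{-1}=9/5$. Tightness (that this is the right constant for the class) follows by taking $S(x)=x^k$ with $k\to\infty$, for which $\mu\to 1$ and $\gamma\to 1$ while Assumption~\ref{ass} stays satisfied.

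For part~(b) I would invoke the known fact (Correa et al.~\cite{CorSS04}, Roughgarden~\cite{Rough02}; compare the proof of Corollary~\ref{cor:simple}(b)) that over concave latency functions $\mu(\S)=1/4$, attained in the limit by the linear function at $\gamma(\S)=1/2$; substituting gives $\gamma+\mu+1=7/4$ and $4\mu\gamma=1/2$, hence $B=(49/16)/(49/16-8/16)=49/41$. For part~(c) I would use, exactly as in the proof of Corollary~\ref{cor:simple}(c), that among polynomials of degree at most $\Delta$ with non-negative coefficients the extremal function is the monomial $x^\Delta$ (since $S(\gamma x)/S(x)$ is a convex combination of the values $\gamma^j$, minimized by putting all weight on $j=\Delta$), for which $\gamma\mapsto\gamma(1-\gamma^\Delta)$ is maximized at $\gamma(\S)=(\Delta+1)^{-1/\Delta}$, so that $\mu(\S)=\tfrac{\Delta}{\Delta+1}\gamma(\S)$. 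Substituting $\mu=\tfrac{\Delta}{\Delta+1}\gamma$ into $B$ and writing $B-1=4\mu\gamma/\bigl((\mu+\gamma+1)^2-4\mu\gamma\bigr)$, I would multiply numerator and denominator by $(\Delta+1)^2$: the numerator becomes $4\Delta(\Delta+1)\gamma^2$, and in the denominator the coefficient of $\gamma^2$ is $(2\Delta+1)^2-4\Delta(\Delta+1)=1$ --- the crucial cancellation --- leaving $\gamma^2+2(2\Delta+1)(\Delta+1)\gamma+(\Delta+1)^2$. Dividing through by $\gamma^2$ and re-substituting $1/\gamma=(\Delta+1)^{1/\Delta}$ turns $(\Delta+1)/\gamma$ and $(\Delta+1)^2/\gamma^2$ into $(\Delta+1)^{1+1/\Delta}$ and $(\Delta+1)^{2(1+1/\Delta)}$, which is exactly the stated expression (and, as a sanity check, $\Delta=1$ recovers $49/41$).

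The main obstacle is part~(a): one must pin down the worst case over all $\S$ satisfying merely Assumption~\ref{ass}, which reduces to the monotonicity of $B(\mu,\gamma)$ in each argument on $[0,1]^2$ (the short partial-derivative check above) plus producing admissible latencies approaching $\mu=\gamma=1$; parts~(b) and~(c) are then routine substitutions into $B$, the only slightly delicate point being the identity $(2\Delta+1)^2=4\Delta(\Delta+1)+1$ that makes the closed form in~(c) clean.
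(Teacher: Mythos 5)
Your proposal is correct and follows essentially the same route as the paper: bound (or compute) $\mu(\S)$ and its argmaximum $\gamma(\S)$ for each class, then exploit the monotonicity of $\frac{(\gamma+\mu+1)^2}{(\gamma+\mu+1)^2-4\mu\gamma}$ in both arguments and substitute. The only difference is that you make explicit two steps the paper leaves implicit — the partial-derivative check of monotonicity on $[0,1]^2$ and the algebraic simplification in part~(c) via the identity $(2\Delta+1)^2=4\Delta(\Delta+1)+1$ — both of which are carried out correctly.
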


\section{Conclusion}

We reconsidered the classical continuous network design problem (CNDP). To the best of our knowledge, we established the
first hardness results for CNDP. Specifically, we have shown the $\classAPX$-hardness of CNDP both on directed and
undirected networks and even if all latency functions are affine. We then turned to the approximation of the problem.
First, we provided a thorough analysis of an algorithm proposed and studied by Marcotte~\cite{Marcotte85mp} for monomial
latency functions. 
We showed a general approximation guarantee depending on the set of allowed cost functions which is related to the \emph{anarchy value} of the set of cost functions.
Second, we proposed and studied a different approximation algorithm that turned out to provide
the same approximation guarantee. As our arguably most interesting result concerning approximation, we then showed that
taking the best of the two algorithms, we can guarantee a strictly better approximation factor.  

In the transportation literature, further variants of CNDP have been investigated. One such example 
are situations in which the network designer is only interested in minimizing total travel time but
investments are restricted, e.g., by budget constraints. More generally, suppose there is a convex function
$g:\R^m\rightarrow \R^k$, $k\in \N$ such
that for any feasible solution $\vec{z}$ the condition $g(\vec{z})\leq \vec{0}$ must be satisfied. 
The function $g$, for instance, can represent edge-specific budget constraints  $\ell_ez_e\leq B_e$ for $e\in E$ and/or  a
global budget constraint $\sum_{e\in E}\ell_e z_e\leq B$. We arrive at the following budgeted continuous network design
problem (bCNDP):
\begin{align}
\tag{bCNDP}
\label{budgets}
\min_{\vec z \geq 0}\min_{\vec v \in \W(\vec z)}   \sum_{e\in E} S_e(v_e/z_e)\;v_e
 \;\; s.t.: g(\vec{z})\leq \vec{0}.
\end{align}

Using existing results from the price of anarchy literature (Roughgarden~\cite{Rough02} and Correa et
al.~\cite{CorSS04}), we can show that there is a $4/3$-approximation for affine latencies and assuming $\classP \neq
\classNP$, for any $\epsilon>0$, there is no polynomial time approximation algorithm with a performance
guarantee better than $4/3-\epsilon$, see Theorem~\ref{thm:budget} in the appendix.
For proving the lower bound, we use edge-specific budget constraints and mimic a construction from Roughgarden \cite{Roughgarden06}. It is an
interesting open problem whether such a lower bound can also be achieved if we allow only a global budget constraint.  

\newpage
\bibliographystyle{abbrv}
\bibliography{master-bib}

\clearpage
\appendix
\setcounter{section}{1}
\section*{Appendix}

\subsection*{Missing Material of Section~\ref{sec:intro}}

\begin{table}[h]
\caption{Approximation guarantees of the algorithms \bte, \scale, and the best of the two for convex latency
functions, concave latency functions and sets of polynomials with non-negative coefficients depending on the maximal degree
$\Delta$. The approximation guarantees stated for convex latency functions even hold for sets of semi-convex
latency functions as in Assumption~\ref{ass}. For \bte, the approximation guarantees marked with $(^\star)$ have been
obtained before in \cite{Marcotte85mp}.\\[-2.5\baselineskip]}
\label{tab:results}
\begin{center}
\footnotesize
\begin{tabular*}{0.62\linewidth}{
@{}l
@{\extracolsep{0ex}}   c
@{\extracolsep{\fill}} r
@{\extracolsep{4ex}}   r
}
\toprule
& & \multicolumn{2}{c}{\raisebox{0.1cm}{\rule{1.3cm}{0.5pt}}\,\,Approximation
guarantees\,\,\raisebox{0.1cm}{\rule{1.3cm}{0.5pt}} }\\
\multicolumn{2}{l}{\multirow{2}{*}{Functions}} & \multicolumn{1}{c}{\bte}  & \multirow{2}{*}{Better of the two}\\
& & \multicolumn{1}{c}{\scale}\\
\midrule
concave   & & $5/4 = 1.25\phantom{0^\star}$ & $49/41 \approx 1.195$\\
convex & & $2\phantom{.000^\star}$       & $9/5 = 1.8\phantom{00}$\\
\midrule
polynomials & $\Delta$\\
&0 & $1\phantom{.000^\star}$  & $1\phantom{.000}$\\
&1/4 &$3381/3125 \approx 1.082\phantom{^\star}$   & $\approx 1.064$\\
&1/3 &$283/256 \approx 1.105\phantom{^\star}$  & $\approx 1.083$\\
&1/2 &$31/27 \approx 1.148\phantom{^\star}$  & $1849/1657 \approx  1.116$\\
&1 & $5/4 = 1.25\phantom{0}^\star$ 
  & $49/41 \approx 1.195$\\
&2 & $1 + \frac{2}{9}\sqrt{3} \approx 1.385^\star$ 
  & $\frac{311}{479} + \frac{180}{479} \sqrt{3} \approx 1.300$\\
&3 & $1 + \frac{3}{16} \sqrt[3]{4^2} \approx 1.472^\star$ 
  & $\approx 1.369$\\
&4 & $1 + \frac{4}{25} \sqrt[4]{5^3} \approx 1.535^\star$ 
  & $\approx 1.418$\\
& $\infty$ & $2\phantom{.000}^\star$ 
  & $9/5 = 1.8\phantom{00}$\\
\bottomrule\\[-30pt]
\end{tabular*}
\end{center}
\end{table}

\subsection*{Proof of Proposition~\ref{pro:marcotte}}

\begin{proof}
As the latency of all edges diverges to $\infty$ as the capacity approaches $0$ we obtain $z_e>0$ if and only if $v_e
>0$ for all edges $e \in E$. The Karush-Kuhn-Tucker conditions of the relaxed problem \eqref{rnd} imply that
\begin{align*}
\frac{\d}{\d z_e} \sum_{e\in E} \bigl(S_e(v_e/z_e)\;v_e+z_e\;\ell_e \bigr) = 0,
\end{align*}
or, equivalently, $\ell_e = (v_e/z_e)^2 S_e'(v_e/z_e)$ for all $e \in E$ with
$z_e>0$. Using that $x^2S'_e(x)$ is non-decreasing and unbounded, for each $e \in E$ there is a  solution to the equation
$x^2\,S_e'(x) = l_e$ which we denote by $u_e$. Since $\ell_e>0$, we
derive that $u_e>0$ as well. By definition, $u_e$ is the unique optimal ratio of $v_e/z_e$ for
edge $e$ with $z_e >0$ in an optimal solution of \eqref{rnd}. Substituting $z_e = v_e/u_e$ in
\eqref{rnd}, we obtain the equivalent mathematical problem
\begin{align*}
\min_{\vec v \in \flows} \sum_{e\in E} \bigl( S_e(u_e) + \ell_e/u_e\bigr) v_e,
\end{align*}
which can be solved by performing $|K|$ independent shortest path computations, one for each
commodity $k \in K$.
\end{proof}

\subsection*{Proof of Theorem~\ref{thm:hardness_mc}}

\begin{proof}
CNDP lies in $\classNP$ as a vector of capacities $\vec z$ is a polynomial certificate. Given $\vec z$, we can compute in
polynomial time a corresponding Wardrop equilibrium and the total cost $C(\vec v, \vec z)$.

To show the $\classNP$-hardness of the problem, we reduce from \textsc{3-SAT}. Let
$\phi$ be a Boolean formula in conjunctive normal form. We denote the set of variables and
clauses of $\phi$ with $V(\phi)$ and $K(\phi)$, respectively, and set $\nu = |V(\phi)|$ and $\kappa = |K(\phi)|$. The set
$L(\phi)$ of literals of $\phi$ contains for each variable $x_i \in V(\phi)$ the positive literal $x_i$ and the negative
literal $\bar{x}_i$, i.e., $L(\phi) = \{x_i \in V(\phi)\} \cup \{\bar{x}_i : x_i \in
V(\phi)\}$. In the following, we will associate clauses with the set of literals that they
contain.

We now explain the construction of a continuous network design problem based on $\phi$ that
has the property that, for some $\epsilon \in (0,1/8)$, an optimal solution has total cost less or equal to
$(4+\epsilon)\kappa + 2\kappa\nu$ if and only if $\phi$ has a solution. Let $\epsilon \in (0,1/8)$ be arbitrary.
For each clause $k \in K(\phi)$, we introduce a \emph{clause edge} $e_k$ with latency
function $S_{e_k}(v_{e_k}/z_{e_k}) = 4 + v_{e_k}/z_{e_k}$ and construction cost $\ell_{e_k} = (\epsilon/2)^2$. For each
literal $l \in L(\phi)$ and each clause $k \in K(\phi)$, we introduce a \emph{literal edge} $e_{l,k}$ with latency function
$S_{e_{l,k}}(v_{e_{l,k}}/z_{e_{l,k}}) = v_{e_{l,k}}/z_{e_{l,k}}$ and cost $\ell_{e_{l,k}} = 1$. We denote the set of clause
edges and literal edges by $E_K$ and $E_L$, respectively.

For each variable $x_i \in V(\phi)$, there is a \emph{variable commodity} $j_{x_i}$ with source $s_{j_{x_i}}$, sink
$t_{j_{x_i}}$ and demand $d_{j_{x_i}} = 1$. This commodity has two feasible paths, one path uses exclusively the literal
edges $\{e_{x_i,k} : k \in K(\phi)\}$ that correspond to the non-negated variable $x_i$, the
correspond to the negated variable $\bar{x}_i$. In that way, each feasible path of the variable
commodity $j_{x_i}$ corresponds to a $\true$/$\false$ assignment of the variable $x_i$. For each clause $k = l_{k} \vee
l_{k}' \vee l_{k}''$, we introduce a \emph{clause commodity} $j_k$ with source $s_{j_k}$, sink $t_{j_k}$ and demand
$d_{j_k} = 1$. The clause commodity may either choose its corresponding clause edge $e_k$ or the corresponding literal edges
that occur in $k$, i.e., $e_{{l_k},k}$, $e_{{l'_k},k}$, and $e_{l''_k,k}$. For notational convenience, we set $E_k =
\{e_{l_k,k},e_{l_k',k},e_{l_k'',k}\}$.
We add some additional edges with latency $0$ to obtain a network; see Figure~\ref{fig:hardness_mc} where these edges are
dashed. Note that the problem remains $\classNP$-hard, even if we do not allow edges with zero latency, see
Remark~\ref{rem:zero_latency} after this proof.

\begin{figure}[tb]
\begin{center}
\begin{tikzpicture}[x=0.7cm,y=0.7cm]
\path[use as bounding box] (-1,-1.5) rectangle (21.5, 3.5);

\vargadget{1}{0}{0}{1}
\vargadget{2}{7.5}{0}{2}
\node at (14,0) [draw=none,fill=none] {$\dots$};
\vargadget{n}{15.5}{0}{\nu}
\clausegadget{1}{2}{3}{1}
\clausegadget{2}{8}{3}{2}
\node at (13.5,3) [draw=none,fill=none] {$\dots$};
\clausegadget{n}{16}{3}{\kappa}
\draw[->,aux,dashed] (sk1)  to[bend right=10] (v1r1);
\draw[->,aux,dashed] (v1r2) .. controls (5,3) and (7,-2) .. (v2l1);
\draw[->,aux,dashed] (v2l2) .. controls (13,-4) and (15,2.7) .. (vnr1);
\draw[->,aux,dashed] (vnr2) .. controls (15,3) and (6,1) .. (tk1);
\draw[->,dot,aux] (sk2) to[bend right=10] ++(-130:1) node
(dummy1)[draw=none,fill=none] {};
\draw[<-,dot,aux] (tk2) to[bend left=10] ++(-50:1) node
(dummy2)[draw=none,fill=none] {};
\draw[->,dot,aux] (skn) to[bend right=10] ++(-130:1) node
(dummy1)[draw=none,fill=none] {};
\draw[<-,dot,aux] (tkn) to[bend left=10] ++(-50:1) node
(dummy2)[draw=none,fill=none] {};
\end{tikzpicture}
\end{center}
\caption{\label{fig:hardness_mc}
Network used to show the hardness of the continuous network design
problem. Clause~$1$ is equal to $x_1 \vee \bar{x}_2 \vee x_{\nu}$. Dashed edges have zero latency.}
\end{figure}
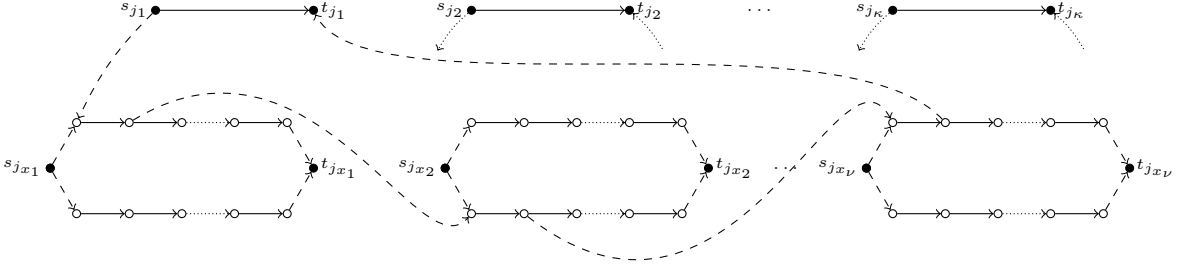

First, we show that an optimal solution of the so-defined instance of the continuous network
design problem $P$ has total cost less or equal to $(4+\epsilon)\kappa + 2\kappa\nu$, if
$\phi$ has a solution. To this end, let $\vec y = (y_{x_i})_{x_i \in V(\phi)}$ be a solution of
$\phi$. Then, a feasible solution of $P$ is as follows: For each \emph{positive} literal
$x_i$ that is selected in the solution $y_i$, we buy capacity $1$ for the corresponding
\emph{negative} literal edges $\{e_{\bar{x}_i,k} : k \in K(\phi)\}$, and vice versa. Formally, we set
\begin{align*}
z_{a_{l,k}} &=
\begin{cases}
1, &\text{if } l = x_i \text{ and } y_{x_i} = {\tt false},\\
1, &\text{if } l = \bar{x}_i \text{ and } y_{x_i} = {\tt true},\\
0, &\text{otherwise.}
\end{cases}
\end{align*}

For each clause edge $e_k$, $k \in K(\phi)$, we buy capacity $2/\epsilon$. This particular capacity vector $\vec z =
(z_e)_{e \in E}$ implies that each variable commodity $j_{x_i}$ has a unique path of finite length, i.e., the path using the
edges corresponding to the \emph{negation} of the corresponding literal in $\vec y$. Using that $\vec y$ is a solution of
$\phi$, we further obtain that for each clause commodity $j_k$ at least one
of the edges in $E_k$ has capacity zero and, thus, infinite latency. This implies that, in the unique
Wardrop equilibrium, the demand of each clause commodity $j_k$ is routed along the corresponding
clause edge $e_k$. For the total cost of this solution, we obtain
\begin{align}
C(\vec v, \vec z) &= \sum_{e \in E_K} \bigl( (4+v_e/z_e)v_e + (\epsilon/2)^2 z_e \bigr) + \sum_{e \in E_L} \bigl(
(v_e/z_e)v_e + z_e\bigr)\notag\\
&= \sum_{e \in E_K} \bigl( (4+\epsilon/2) + (\epsilon/2) \bigr) + \frac{1}{2}\sum_{e \in E_L} \bigl(1 + 1\bigr)=
(4+\epsilon)\kappa
+ 2\kappa\nu.\label{eq:opt_value}
\end{align}
Hence, an optimal solution has cost not larger than \eqref{eq:opt_value} if $\phi$ has a
solution.

We proceed to prove that the total cost of an optimal solution are strictly larger than \eqref{eq:opt_value} if $\phi$
does \emph{not} admit a solution. Let $\vec z = (z_e)_{e \in E}$ be an optimal solution of
$P$ and let $\vec v = (v_e)_{e \in E}$ be a corresponding Wardrop flow. We distinguish two cases.

\emph{First case:} $v_{e_k} >0$ for all $k \in K(\phi)$, i.e., each clause commodity $j_k$ sends flow over the corresponding
clause edge $e_k$.

Before we prove the thesis for this case, we need some additional notation. For the
Wardrop flow $v_e$ on edge $e \in E$, let $v_e^V$ and $v_e^K$ denote the flow on $e$ that is due
to the variable commodities and the clause commodities, respectively. 

We claim that there is a clause $\tilde{k} \in K(\phi)$, $\tilde{k} = l_{\tilde{k}} \vee l_{\tilde{k}}' \vee
l_{\tilde{k}}''$ such that the flow of the variable commodities on each of the corresponding literal edges in $E_{\tilde{k}}
= \{e_{l_{\tilde{k}},\tilde{k}},e_{l_{\tilde{k}}',\tilde{k}},e_{l_{\tilde{k}}'',\tilde{k}}\}$ is at least $1/2$, i.e.,
\begin{align}
\label{eq:k_tilde}
v^V_{e_{l_{\tilde{k}},\tilde{k}}} &\geq 1/2, &
v^V_{e_{l_{\tilde{k}}',\tilde{k}}} &\geq 1/2, & &\text{and} & v^V_{e_{l_{\tilde{k}}'',\tilde{k}}} &\geq 1/2.
\end{align}
For a contradiction, let us assume that for each clause~$k = l_k \vee l_k' \vee k_k''$ there is a literal $l^*_k \in
\{l_k,l_k',l_k''\}$ such that $\smash{v^V_{e_{l^*_k,k}} < 1/2}$. As each variable $x_i \in V(\phi)$ splits its unit demand
between the path consisting of the positive literal edges $\{e_{x_i,k} : k \in K(\phi)\}$ and the path consisting of the
negative literal edges $\{e_{\bar{x}_i,k} : k \in K(\phi)\}$, at most one of these two paths is used
with a flow strictly smaller than $1/2$. Thus, the assignment vector $\vec y$ defined as 
\begin{align*}
y_{x_i} =
\begin{cases}
\true,  &\text{ if } v^V_e < 1/2 \text{ for all } e \in \{e_{x_i,k} : k \in K(\phi)\},\\
\false, &\text{ if } v^V_e < 1/2 \text{ for all } e \in \{e_{\bar{x}_i,k} : k \in K(\phi)\},\\
\true,  &\text{ otherwise,}
\end{cases}
\end{align*}
is well-defined. By construction, $\vec y$ satisfies all clauses, which is a contradiction to the assumption that no such
assignment exists. We conclude that there is a clause $\tilde{k}$ such that \eqref{eq:k_tilde} holds.

We proceed to bound the total cost of a solution. As $\vec v$ is a Wardrop equilibrium in which the clause commodity
$j_{\tilde{k}}$ uses at least partially the clause edge $e_{\tilde{k}}$, we further derive that $\sum_{e \in E_{\tilde{k}}}
v_{e}/z_{e} \geq v_{e_{\tilde{k}}}/z_{e_{\tilde{k}}} > 4$. We bound the total cost of the solution $(\vec v, \vec z)$ by
observing
\begin{align*}
C(\vec v, \vec z)
&= \sum_{e \in E_L} \bigl( v_e^2/z_e + z_e \bigr) + \sum_{e \in E_K} \bigl( (4+v_e/z_e)v_e + (\epsilon/2)^2 z_e\bigr)\\
&\geq \sum_{e \in E_L}\overline{\min}_{z_e \geq0} \bigl(v_e^2/z_e + z_e \bigr) + \sum_{e \in E_K}
\overline{\min}_{z_e\geq 0}\bigl( (4+v_e/z_e)v_e + (\epsilon/2)^2 z_e\bigr),
\end{align*}
where we slightly abuse notation by writing $\overline{\min}_{z_e \geq 0}$ shorthand for $\min_{z_e \geq 0 : \vec v \in
\mathcal{W}(z)}$. We obtain an upper bound by relaxing $\overline{\min}_{z_e \geq 0}$ to $\min_{z_e \geq 0}$ for the edges
in $E_L \setminus E_{\tilde{k}}$ and $E_K$. Hence,  
\begin{align*}
C(\vec v, \vec z) \geq \sum_{e \in E_L \setminus E_{\tilde{k}}} \min_{z_e \geq0} \bigl(v_e^2/z_e + z_e \bigr)
    + \sum_{e \in E_{\tilde{k}}} \overline{\min_{z_e \geq 0}} \bigl( v_e^2/z_e + z_e \bigr) +
\sum_{e \in E_K} \min_{z_e\geq
0}\bigl( (4+v_e/z_e)v_e + (\epsilon/2)^2 z_e\bigr).
\end{align*}
Calculating the respective minima, we obtain
\begin{align}
\label{eq:expression0}
C(\vec v, \vec z) &=    \sum_{e \in E_L \setminus E_{\tilde{k}}} 2v_e + \sum_{e \in E_{\tilde{k}}}
\underbrace{\overline{\min_{z_e \geq0}} \bigl( v_e^2/z_e + z_e \bigr)}_{\geq 2v_e} + \!\!\sum_{e \in E_K} (4+\epsilon) v_e.
\end{align}
Each clause commodity $j_k$ can route its demand either over the clause edge $e_k$ or over the three literal edges in
$E_k$. Every fraction of the demand routed over the clause edge contributes $4+\epsilon$ to the expression on the right
hand side of \eqref{eq:expression0} while it contributes at least $6$ when routed over the literal edges. Thus, the right
hand side of \eqref{eq:expression0} is minimized when the clause commodities do not use the
literal edges at all. We then obtain

\begin{align*}
C(\vec v, \vec z)
&\geq \sum_{e \in E_L \setminus E_{\tilde{k}}} 2v_e^V + \sum_{e \in E_{\tilde{k}}} \overline{\min_{z_e \geq0}} \bigl(
(v_e^V)^2/z_e + z_e
\bigr) + (4+\epsilon)|E_K|\\
&= 2\Bigl(\kappa\nu-\sum_{e \in E_{\tilde{k}}} v_e^V\Bigr) + (4+\epsilon)\kappa+\sum_{e \in
E_k} \overline{\min_{z_e \geq0}} \bigl( (v_e^V)^2/z_e + z_e
\bigr),\\
&= 2\kappa\nu + (4+\epsilon)\kappa+\sum_{e \in
E_{\tilde{k}}} \overline{\min_{z_e \geq0}} \bigl( (v_e^V)^2/z_e + z_e  - 2v_e^V\bigr),\\
&> 2\kappa\nu + (4+\epsilon)\kappa+ Q,
\end{align*}
where $Q$ is the solution to the constrained minimization problem
\begin{align}
Q &= \min_{\substack{v^V_e,z_e> 0\\e \in E_{\tilde{k}}}} \sum_{e \in E_{\tilde{k}}} \bigl((v_e^V)^2/z_e +
z_e - 2v_e^V \bigr)\notag\\
&\text{s.t.: } \sum_{e \in E_{\tilde{k}}} v_{e}^V/z_e \geq 4 \label{eq:side_constraint1}\\
&\hphantom{\text{s.t.: } \sum_{e \in E_{\tilde{k}}}/z_e} v_{e}^V \geq 1/2 \text{ for all } e \in
E_{\tilde{k}}.\label{eq:side_constraint2}
\end{align}
Side constraint \eqref{eq:side_constraint1} is a relaxation of the requirement that $\vec v$ is a Wardrop
equilibrium as the latency of the literal edges is strictly larger than $4$. Side constraint \eqref{eq:side_constraint2} is
due to the fact that for clause $\tilde{k}$ the three corresponding literal edges $\smash{e_{l_{\tilde{k}},\tilde{k}}}$,
$\smash{e_{l_{\tilde{k}}',{\tilde{k}}}}$, and $\smash{e_{l_{\tilde{k}}'',\tilde{k}}}$ are used with a flow of at least $1/2$
by the variable commodities. The optimal solution to the constraint
optimization problem $Q$ is equal to $Q = 1/8$ and is attained for $v_e^V = 1/2$ and
$z_e = 3/8$ for all $e \in E_{\tilde{k}}$. This implies that the total cost of a solution is not smaller than
$(4+\epsilon)\kappa +
2\kappa\nu + 1/8$, which finishes the first case of this proof.

\emph{Second case:} There is a clause commodity $j_{\tilde{k}}$ that does not use its clause edge $e_{\tilde{k}}$,
i.e., $v_{e_{\tilde{k}}} = 0$. As for first case, we observe
\begin{align*}
C(\vec v, \vec z) &= \sum_{e \in E_L} \bigl( v_e^2/z_e + z_e \bigr) + \sum_{e \in E_K} (4v_e+v_e^2/z_e +
(\epsilon/2)^2z_e) \geq \sum_{e \in E_L} 2v_e + \sum_{e \in E_K} (4+\epsilon)v_e.\\
\intertext{Using that $j_{\tilde{k}}$ does not use its clause edge, we derive that the flow on the
literal edges amounts to $\nu\kappa + 3$ and we obtain}
C(\vec v, \vec z) &\geq 2(\kappa\nu + 3) + (4+\epsilon)(\kappa-1) = 2\kappa\nu+
(4+\epsilon)\kappa+2,
\end{align*}
which concludes the proof.
\end{proof}

In the following remark we discuss that although the hardness proof of Theorem~\ref{thm:hardness_mc} used edges with
zero latency, the hardness result continues to hold even if edges with zero latency are not allowed.

\begin{remark}
\label{rem:zero_latency}
The continuous network design problem is $\classNP$-hard in the strong sense, even if no edges with zero latency are
allowed.
\end{remark}

\begin{proof}[Sketch of proof.]
Let $M$ be an upper bound on the total cost of an optimal solution to a continuous network design problem constructed in the
proof
of Theorem~\ref{thm:hardness_mc} and let $E_0$ be the set of edges with zero latency. We replace each edge $e \in E_0$, $ e
= (s,t)$, $s,t \in V$ by an edge $e' = (s,t)$ with latency function $S_{e'}(v_{e'}/z_{e'}) = v_{e'}/z_{e'}$ and construction
cost $\ell_{e'} = (\frac{\epsilon}{2M})^2$. For each new edge $e'$, we introduce an additional commodity $i_{e'}$ with
source $s$, sink $t$ and demand $M$. To route the flow of commodity $i_{e'}$, each solution has to
buy a sufficient capacity for the edge $e'$. For $z_{e'} = 4M^3/\epsilon$ the additional total cost on edge $e'$ are
$\epsilon$. Thus, the routing cost and the total cost on the new edges can be made arbitrarily small. In conclusion, we can
approximate the behavior of edges with zero latency within arbitrary precision by edges with unbounded latency functions. 
\end{proof}

\subsection*{Proof of Theorem~\ref{thm:hardness_apx}}

\begin{proof}
We reduce from a symmetric variant of \textsc{4-OCC-MAX-3-SAT} which is $\classNP$-hard to approximate, see Berman et
al.~\cite{Berman03}. An
instance of \textsc{4-OCC-MAX-3-SAT}, is given by a Boolean formula~$\phi$ in conjunctive normal form with the property that
each
clause contains exactly three literals and each variable occurs exactly four times. The problem to determine the
maximal number of clauses that can be satisfied simultaneously is known to be $\classNP$-hard to approximate within a factor
of $1016/1015 - \delta \approx 1.00099-\delta$ for any $\delta>0$, even for the special case that each variable occurs
exactly twice as a positive literal and exactly twice as a negative literal, see a follow-up paper by the same
authors~\cite{Berman03symmetric}. 

Let us again denote by $V(\phi)$, $K(\phi)$, and $L(\phi)$ the set of variables, clauses and literals of $\phi$ and let
$\nu = |V(\phi)|$ and $\kappa = |K(\phi)|$. It is convenient to assume that $K(\phi) = \{1,\dots,\kappa\}$ and $V(\phi) = \{x_1,\dots,x_{\nu}\}$. As every variable occurs exactly four times and every clause contains exactly
three literals, we have $4\nu = 3\kappa$. 
We slightly adjust the construction in the proof of Theorem~\ref{thm:hardness_mc} to make use of the information that each
literal occurs in exactly two clauses. We proceed to explain the construction of an instance of CNDP relative to a fixed
parameter $\epsilon \in (0,1/8)$. For a literal $l \in L(\phi)$, let $k_l,k_l' \in K(\phi)$ be the 
clauses that contain the literal $l$. We introduce two \emph{literal edges} $e_{l,k_l}$ and $e_{l,k_l'}$ with latency
function $S_e(v_e/z_e) = v_e/z_e$ and construction cost $\ell_e = 1$. For each variable $x_i \in V(\phi)$, we introduce a
corresponding \emph{variable commodity} $j_{x_i}$ with source $s_{j_{x_i}}$, sink $t_{j_{x_i}}$ and demands $d_{x_i} = 1$
that may then either choose the path consisting of the edges $e_{x_i,k_{x_i}}$ and
$e_{x_i,k_{x_i}'}$ that correspond to the positive literal $x_i$ or the edges $e_{\bar{x}_i,k_{\bar{x}_i}}$ and
$e_{\bar{x}_i,k_{\bar{x}_i}'}$ that correspond to the negative literal $\bar{x}_i$. We construct the network such that in the directed path containing the edges $e_{x_i,k}$ and $e_{x_i,k'}$ the edge $e_{x_i,k}$ appears before the edge $e_{x_i,k'}$ if and only if $k < k'$, i.e., the corresponding clause $k$ has a smaller index than the respective clause $k'$. As in the proof
of Theorem~\ref{thm:hardness_mc}, for each clause $k \in K(\phi)$, $k = l_k \vee l_k' \vee l_k''$, we introduce a
\emph{clause edge} $e_k$ with latency $S_e(v_e/z_e) = 4 + v_e/z_e$ and construction cost $\ell_e =
(\epsilon/2)^2$. For each clause $k \in K(\phi)$, there is a \emph{clause commodity} $j_k$ with source $s_{j_k}$, sink
$t_{j_k}$ and demand $d_{j_k} = 1$. The clause commodity $j_k$ may choose either the clause edge $e_k$ or a path that
contains all the corresponding literal edges
$e_{l_k,k}, e_{l_k',k}, e_{l_k'',k}$. The set of literal edges and clause edges is denoted by $E_L$ and $E_K$, respectively.
For notational convenience, for a clause $k \in K(\phi), k = l_k \vee l_k' \vee l_k''$, we
set $E_k = \{e_{l_k,k}, e_{l_k',k}, e_{l_k'',k}\}$. We add some additional edges with zero latency to obtain a network, see
Figure~\ref{fig:hardness_apx}. Because in each path for a variable commodity the clauses appear in increasing order of their index, adding these additional edges with zero latency does not add any further paths to the literal or variable commodities.

The hardness result continues to hold, even if edges with zero latency are not allowed, see
Remark~\ref{rem:zero_latency} after the proof of Theorem~\ref{thm:hardness_mc}.

\begin{figure}[bt]
\begin{center}
\begin{tikzpicture}[x=0.7cm,y=0.7cm]
\path[use as bounding box] (-1,-1.5) rectangle (21.5, 3.5);
\newcommand{\vargadgetapx}[4]{
\footnotesize
\path (#2,#3) node (v#10) [terminal,label=left:$s_{j_{x_{#4}}}$] {}
to  ++(60:1.5) node (v#1r1) {}
--  ++(0:1.5) node (v#1r2) {}
--  ++(0:1.5) node (v#1r3) {}
--  ++(-60:1.5) node (v#14) [terminal,label=right:$t_{j_{x_{#4}}}$] {}
    ++(-120:1.5) node (v#1l3) {}
--  ++(180:1.5) node (v#1l2) {}
--  ++(180:1.5) node (v#1l1) {}
--  ++(120:1.5) node (v#10) [terminal] {};
\draw[dashed,->] (v#10) -- (v#1r1);
\draw[->] (v#1r1) -- (v#1r2);
\draw[->] (v#1r2) -- (v#1r3);
\draw[dashed,->] (v#1r3) -- (v#14);
\draw[dashed,->] (v#10) -- (v#1l1);
\draw[->] (v#1l1) -- (v#1l2);
\draw[->] (v#1l2) -- (v#1l3);
\draw[dashed,->] (v#1l3) -- (v#14);
}

\vargadgetapx{1}{0}{0}{1}

\vargadgetapx{2}{7.5}{0}{2}
\node at (14,0) [draw=none,fill=none] {$\dots$};
\vargadgetapx{n}{15.5}{0}{\nu}
\clausegadget{1}{2}{3}{1}
\clausegadget{2}{8}{3}{2}
\node at (13.5,3) [draw=none,fill=none] {$\dots$};
\clausegadget{n}{16}{3}{\kappa}
\draw[->,aux,dashed] (sk1)  to[bend right=10] (v1r1);
\draw[->,aux,dashed] (v1r2) .. controls (5,3) and (7,-2) .. (v2l1);
\draw[->,aux,dashed] (v2l2) .. controls (13,-4) and (15,2.7) .. (vnr1);
\draw[->,aux,dashed] (vnr2) .. controls (15,3) and (6,1) .. (tk1);
\draw[->,dot,aux] (sk2) to[bend right=10] ++(-130:1) node
(dummy1)[draw=none,fill=none] {};
\draw[<-,dot,aux] (tk2) to[bend left=10] ++(-50:1) node
(dummy2)[draw=none,fill=none] {};
\draw[->,dot,aux] (skn) to[bend right=10] ++(-130:1) node
(dummy1)[draw=none,fill=none] {};
\draw[<-,dot,aux] (tkn) to[bend left=10] ++(-50:1) node
(dummy2)[draw=none,fill=none] {};
\end{tikzpicture}
\end{center}
\caption{
\label{fig:hardness_apx}
Network used to show the $\classAPX$-hardness of the continuous network design
problem. Clause~$1$ is equal to $x_1 \vee \bar{x}_2 \vee x_{\nu}$. Dashed edges have zero latency.}
\end{figure}

We claim that the so-defined instance of CNDP has a solution with total cost in the interval
\begin{align*}
\Bigl[10\kappa + |\tilde{K}|/4, (10+\epsilon)\kappa + (1/4+\epsilon/2)|\tilde{K}|\Bigr]
\end{align*}
if and only if the minimum number of unsatisfied clauses is $|\tilde{K}|$.

First, we show that an optimal solution has total cost not larger than $(10+\epsilon)\kappa + |\tilde{K}|/4$ if $\phi$ has a
solution $\vec y$ that violates $|\tilde{K}|$ clauses only. To this end, let $\vec y = (y_{x_i})_{x_i \in V(\phi)}$ be such
a
solution and let $\tilde{K}$ be the set of clauses that is not satisfied by $\vec y$. Consider the tuple $(\vec v, \vec
z)$ defined as
\begin{align*}
z_{e_{l,k}} &=
\begin{cases}
1,   &\text{if } k \notin \tilde{K} \text{ and } l = \neg y_i \text{ for some } i \in V(\phi),\\
\frac{1}{4/3 + \epsilon/6}, &\text{if } k \in \tilde{K} \text{ and } l = \neg y_i \text{ for some } i \in V(\phi),\\
0,   &\text{otherwise,}
\end{cases}
& &\text{for all } l \in L(\phi), k \in \{k_l,k_l'\},\\
v_{e_{l,k}} &=
\begin{cases}
1,   &\text{\phantom{/4}if } l = \neg y_i \text{ for some } i \in V(\phi),\\
0,   &\text{\phantom{/4}otherwise,}
\end{cases}
& &\text{for all } l \in L(\phi), k \in \{k_l,k_l'\},\\
z_{e_k} &= 2/\epsilon,
& &\text{for all } k \in K(\phi).\\
v_{e_k} &= 1,
& &\text{for all } k \in K(\phi).
\end{align*}

First, we show that the tuple $(\vec v, \vec z)$ is a solution to CNDP. To this end, it suffices to prove that $\vec v$ is a
Wardrop equilibrium for the latency functions defined by $\vec z$. We will argue for each commodity separately that it only uses shortest paths, starting with an arbitrary clause commodity $j_k$ that
corresponds to a non-satisfied clause $k \in \tilde{K}$, $k = l_k \vee l_k' \vee l_k''$. Such a clause uses the clause edge
$e_k$ with latency $4+\epsilon/2$. On the other hand, the corresponding literal edges $e_{l_k,k}$, $e_{l_k',k}$, and $e_{l_k'',k}$ have
capacity $\frac{1}{4/3+\epsilon/6}$ and carry one unit of flow of the variable commodities. Thus, their latencies sum up to $4+\epsilon/2$, implying that clause commodity $j_k$ is in equilibrium.
Next, consider a clause commodity $j_k$ that corresponds to a satisfied
clause $k \in K(\phi) \setminus \tilde{K}$, $k = l_k \vee l_k' \vee l_k''$. As $k$ is satisfied by $\vec y$, there is a
literal $l_k^* \in \{l_k, l_k', l_k''\}$ such that $l_k^* = y_{x_i}$ for some $x_i \in V(\phi)$. This implies that
$z_{e_{l_k^*,k}} = 0$ and, thus, edge $e_{l_k^*,k}$ has infinite latency. We derive that clause commodity $j_k$ has a unique path of
finite latency and this path is used in $\vec v$. Finally, consider a variable commodity $j_{x_i}$, $x_i \in V(x_i)$. As we
buy either the capacity for the edges $\smash{\{e_{x_i,k_{x_i}}, e_{x_i,k_{x_i}'}\}}$ corresponding to the
positive literal or the edges $\smash{\{e_{\bar{x}_i,k_{\bar{x}_i}}, e_{\bar{x}_i,k_{\bar{x}_i}'}\}}$ that
correspond to the negative literal, but not both, commodity $j_{x_i}$ has only one path with finite latency, and it uses that path in $\vec v$.

We proceed to calculate the total cost of the solution $(\vec v, \vec z)$. Every literal edge that corresponds to a
satisfied clause
and the negation of a literal in $y_i$ has capacity $1$ and flow $1$ and thus causes a total cost of $2$. In contrast to
this, each literal edge that corresponds to a violated clause and the negation of a literal in $y_i$ has capacity $\frac{1}{4/3 + \epsilon/6}$ and flow $1$ and, thus, causes a total cost of
\begin{align*}
\frac{4}{3} + \frac{\epsilon}{6} + \frac{1}{4/3 + \epsilon/6} \leq \frac{25}{12} + \frac{\epsilon}{6}.
\end{align*}
Further, each clause edge has capacity $2/\epsilon$ and is used by
$1$ unit of flow and, thus, contributes $4+\epsilon$ to the total cost. We calculate
\begin{align*}
C(\vec v, \vec z) &\leq 3 \bigl( 2(\kappa-|\tilde{K}|) + (25/12 + \epsilon/6)|\tilde{K}|\bigr) + (4+\epsilon)\kappa\\
&= (10+\epsilon)\kappa + (1/4 + \epsilon/2)|\tilde{K}|.
\end{align*}

We proceed to prove that an optimal solution of CNDP has total cost not smaller than $10\kappa +
\tilde{K}/4$ if each solution $\vec y$ of $\phi$ violates at least $\tilde{K}$ clauses. To this end, we need some
additional notation. For an edge flow $\vec v$, let $\vec v^V$ denote the edge flow that is due to the variable commodities
and $\vec v^K$ denote the edge flow that is due to the clause commodities. For a clause $k \in K(\phi)$, let $m_k(\vec v^V)
= \min_{e \in E_k} v^V_e$. In addition, we
set $\bar{K}(\vec v^K) = \{k \in K : \vec v^K_{e_k} > 0\}$, i.e., $\bar{K}(\vec v^K)$ is the set of clauses $k$ that uses
(at least partially) its clause edge $e_k$.

We bound $C(\vec v, \vec z)$ by observing
\begin{align*}
C(\vec v, \vec z)
&= \sum_{e \in E_L} \bigl( v_e^2/z_e + z_e \bigr) + \sum_{e \in E_K} \bigl( (4+v_e/z_e)v_e + (\epsilon/2)^2 z_e\bigr)\\
&\geq \sum_{e \in E_L} \overline{\min_{z_e \geq0}} \bigl(v_e^2/z_e + z_e \bigr)
     + \sum_{e \in E_K} \overline{\min_{z_e\geq 0}} \bigl( (4+v_e/z_e)v_e + (\epsilon/2)^2 z_e\bigr),
\end{align*}
where we again slightly abused notation writing $\overline{\min}_{z_e \geq 0}$ shorthand for $\min_{z_e \geq 0 : \vec v \in \W(z)}$.
We obtain a lower bound on the total cost observing that the latency of the clause edges is at least $4$. Thus,
\begin{align}
\label{eq:expression}
C(\vec v, \vec z) \geq \sum_{e \in E_L} \underbrace{\overline{\min_{z_e \geq0}} \bigl(v_e^2/z_e + z_e \bigr)}_{\geq 2} +
4\sum_{e \in E_K} v_e
\end{align}
Every unit of flow of a clause commodity $j_k$ with $k \in \bar{K}(\vec v^K)$, $k=l_k \vee l_k' \vee l_k''$ contributes at least $6$
to the right hand side of \eqref{eq:expression} when routed over the corresponding literal edges $e_{l_k,k}$, $e_{l_k',k}$,
 and
$e_{l_k'',k}$, but contributes only $4$ when routed over the corresponding clause edge $e_k$. Thus, we
obtain a lower bound assuming that each clause commodities $j_k$, $\smash{k \in \bar{K}(\vec v^K)}$ exclusively uses its clause edge, i.e.,
\begin{align}
\label{eq:expression2}
C(\vec v, \vec z)
&\geq 4|\bar{K}(\vec v^V)| + \sum_{k \in K(\phi)} \sum_{e \in E_k} \overline{\min_{z_e \geq0}} \bigl( (v_e^V +
v_e^K)^2/z_e + z_e \bigr).
\end{align}
With the same arguments, we observe that every clause commodity $j_k$ with $k \in K(\phi) \setminus \bar{K}(\vec v^K)$ contributes at least
$6$ to the right hand side of \eqref{eq:expression2} when routed over the literal edges, but contributes only $4$
when routed over
the clause edge. Thus, we obtain a lower bound assuming that $K(\phi) = \bar{K}(\vec v^K)$, i.e., every clause commodity
$j_k$ routes its demand exclusively over the corresponding clause edge $e_k$. Then,
\begin{align*}
C(\vec v, \vec z)
&\geq 4\kappa + \sum_{k \in K(\phi)} \sum_{e \in E_k} \overline{\min_{z_e \geq0}} \bigl(v_e^V)^2/z_e
+ z_e \bigr)\\
&= 4\kappa +  \!\!\!\sum_{\substack{k \in K(\phi)\\m_k(\vec v^V) = 0}} \sum_{e \in E_k} \overline{\min_{z_e \geq0}}
\bigl(v_e^V)^2/z_e + z_e \bigr) + \!\!\!\sum_{\substack{k \in K(\phi)\\m_k(\vec v^V) > 0}} \sum_{e \in E_k}
\overline{\min_{z_e \geq0}} \bigl(v_e^V)^2/z_e + z_e \bigr)
\end{align*}
Note that for all clauses $k \in K(\phi)$ with $m_k(\vec v^V) = 0$ at least one of the corresponding clause edges is not
used by the variable commodities and, thus, we can set the capacity of this edge to $0$. This implies that the
corresponding clause commodity $k$ stays at its clause edge and we can optimize the capacity of the remaining
edges in $E_k$ irrespective of the equilibrium constraints. We obtain
\begin{align*}
C(\vec v, \vec z)
&\geq 4\kappa + 6|k \in K(\phi) : m_k(\vec v^V) = 0| + \!\!\sum_{\substack{k \in K(\phi)\\m_k(\vec v^V) > 0}}
\sum_{e \in E_k} \overline{\min_{z_e \geq0}} \bigl( (v_e^V)^2/z_e + z_e\bigr)\\
&\geq 10\kappa + \!\!\!\sum_{\substack{k \in K(\phi)\\m_k(\vec v^V) > 0}} \sum_{e \in E_k} \overline{\min_{z_e \geq0}}
\bigl( (v_e^V)^2/z_e + z_e - 2v_e^V\bigr)\\
&\geq 10\kappa + \sum_{k \in K(\phi)} Q_k,
\intertext{where each $Q_k$ is the solution to the constrained minimization problem}
Q_k &= \min_{\substack{v^V_e,z_e> 0\\e \in E_k}} \sum_{e \in E_k} \bigl((v_e^V)^2/z_e +
z_e - 2v_e^V \bigr)\\
&\text{s.t.: } \sum_{e \in E_k} v_{e}^V/z_e \geq 4\\
&\hphantom{\text{s.t.: } \sum_{e \in E_k}/z_e} v_{e}^V \geq m_k \text{ for all } e \in E_k.
 v\end{align*}
The optimal solution to this problem is equal to $Q_k = m_k(\vec v^V)/4$ and is attained for $v_e^V = m_k(\vec v^V)$ and
$z_e = 3m_k(\vec v^V)/4$ for all $e \in E_k$, $k \in K(\phi)$. We obtain
\begin{align*}
C(\vec v, \vec z) \geq 10\kappa+ \sum_{k \in K(\phi)} m_k(\vec v^V)/4.
\end{align*}

To finish the proof it suffices to show that $\sum_{k \in K(\phi)} m_k(\vec v^V) \geq \tilde{K}$ for each flow of the variable
commodities $\vec v^V$. To this end, let $\vec v^V$ be a flow that minimizes $\sum_{k \in K(\phi)} m_k(v^V)$. We claim that
it is without loss of generality to assume that $\vec v^V$ is integral. To see this claim, suppose that the flow for all variable commodities except $j_{x_i}$ is fixed and consider the variable commodity $j_{x_i}$. Let $p$ denote the portion of the flow sent over the path
consisting of the positive literal edges $e_{x_i,k_{x_i}}$ and $e_{x_i,k_{x_i}'}$. By definition, only the clauses $k_{x_i}$ and $k_{x_i'}$ contain the literal $x_i$ and only the clauses $k_{\bar{x}_i}$ and $k_{\bar{x}_i}'$ contain the literal $\bar{x}_i$. Then, we can calculate the contribution of these four clauses to $\sum_{k \in K(\phi)} m_k(\vec v^v)$ as follows:
\begin{align*}
\sum_{k \in \{k_{x_i},k_{x_i}',k_{\bar{x}_i},k_{\bar{x}_i}'\}} m_k(\vec v^V) &= \sum_{k \in \{k_{x_i},k_{x_i}',k_{\bar{x}_i},k_{\bar{x}_i}'\}} \min_{e \in E_k} v^V_e \\
&= \sum_{k \in \{k_{x_i},k_{x_i}'\}} \min\Bigl(p,
\min_{e \in E_{k} \setminus e_{x_i,k}} \vec v^V_e\Bigr) + \sum_{k \in \{k_{\bar{x}_i},k_{\bar{x}_i}'\}} \min\Bigl(1-p,
\min_{e \in E_{k} \setminus e_{\bar{x}_i,k}} \vec v^V_e\Bigr)
\end{align*}

For a fixed flow $\vec v^V$ on the literal edges not involving $x_i$, this expression is concave in $p$. Hence, the minimum is
attained for either $p=0$ or $p=1$. Put differently, for any flow of the other variable commodities, the expression $\sum_{k \in K(\phi)} m_k(\vec v^V)$ is minimized when variable commodity $j_{x_i}$ routes all of its demand on one path. Iterating this argument for all variable commodities, we conclude that is without loss of generality to assume that $\vec v^V$ is integral.

For an integral flow $\vec v^V$ of the variable commodities, consider the $\true/\false$ assignment $\vec y = (y_{x_i})_{x_i \in
V(\phi)}$ defied as $y_i = \true$ if and only if $v_{y_i,k_{y_i}} = 0$. As this assignment satisfies at most $K^*$ clauses,
we have that $\sum_{k \in K(\phi)} m_k(v^V) \geq \tilde{K}$.

Plugging everything together, we obtain that the total cost of an optimal solution to CNDP lies in the range
\begin{align}
\label{eq:interval}
\Bigl[10\kappa + |\tilde{K}|/4,\,(10+\epsilon)\kappa + (1/4 + \epsilon/2)|\tilde{K}|\Bigr]
\end{align}
if $|\tilde{K}|$ clauses cannot be satisfied. 

Berman et al.~\cite{Berman03,Berman03symmetric} construct a family of symmetric instances of \textsc{4-OCC-MAX-3-SAT} with
$\kappa = 1016n$, $n \in \N$ that has the property that for any $\delta \in (0,1/2)$ it is $\classNP$-hard to distinguish
between the systems where $(1016-\delta)n$ clauses can be satisfied and systems where at most $(1015+\epsilon)n$ clauses can
be satisfied. Using \eqref{eq:interval}, the corresponding instances of CNDP have the property that they have total cost at most
$(10+\epsilon) 1016n +\delta n(\frac{1}{4} + \frac{\epsilon}{2})$, if at least $(1016-\delta)n$ clauses can be satisfied,
and total cost at least $10 \cdot 1016n + 1/4 - \frac{\delta n}{4}$, if at most $(1015+\delta)n$ clauses can be satisfied.
As we let
$\epsilon$ and $\delta$ go to zero, we derive that it is $\classNP$-hard to approximate CNDP by any factor better than
$10160.25/10160 \approx 1.000024$. This proves the 
$\classAPX$-hardness of the problem.
\end{proof}

\subsection*{Hardness for undirected networks}

\begin{theorem}
\label{thm:hardness_apx_un}
The continuous network design problem on undirected networks is $\classAPX$-hard, even if all latency functions are affine.
\end{theorem}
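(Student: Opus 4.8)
The plan is to follow the proof of Theorem~\ref{thm:hardness_apx} almost verbatim, again reducing from the symmetric variant of \textsc{4-OCC-MAX-3-SAT} of Berman et al.~in which every literal occurs exactly twice, but now building an \emph{undirected} network. I would reuse all the gadgets of Figure~\ref{fig:hardness_apx}: the literal edges $e_{l,k}$ with latency $v_e/z_e$ and cost $1$, the clause edges $e_k$ with latency $4+v_e/z_e$ and cost $(\epsilon/2)^2$, the variable commodities $j_{x_i}$, the clause commodities $j_k$, the rule that along every variable path the clause indices appear in increasing order, and the connector edges of zero latency --- but declare every edge undirected. All latencies stay affine, and afterwards the (verbatim) argument of Remark~\ref{rem:zero_latency} removes the zero-latency edges if desired. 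The intended feasible solution and the intended cost bounds would be exactly as in the directed case, so that the optimum of the resulting CNDP instance lies in $[\,10\kappa+|\tilde K|/4,\ (10+\epsilon)\kappa+(1/4+\epsilon/2)|\tilde K|\,]$ precisely when $|\tilde K|$ is the minimum number of simultaneously unsatisfiable clauses; letting $\epsilon\to 0$ would then yield the same $10160.25/10160$ inapproximability as in Theorem~\ref{thm:hardness_apx}.

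The only point that genuinely changes --- and where the real work lies --- is re-establishing which routes the commodities may use. In the directed network each variable commodity has exactly two $s_k$--$t_k$ paths and each clause commodity exactly two, and these facts are used silently throughout the cost analysis; in an undirected network an edge can be traversed either way, so a connector edge could in principle be used ``against the grain'' and create a shortcut. I would therefore prove a structural lemma to the effect that no deviation from the intended paths ever helps: under any capacity vector, every simple $s_k$--$t_k$ path other than the two intended ones is forced to thread through additional variable gadgets, and hence to traverse strictly more literal edges (together possibly with a clause edge), than an intended path. Granted this lemma, the entire quantitative argument of Theorem~\ref{thm:hardness_apx} carries over unchanged: the upper bound by exhibiting the explicit capacity vector built from a near-optimal assignment (for which the equilibrium property is checked locally, most deviating edges having capacity and hence infinite latency), and the lower bound via the constrained quadratic programs $Q_k$ with optimal value $m_k(\vec v^V)/4$, together with the concavity argument that reduces $\sum_k m_k(\vec v^V)$ to an integral variable flow and hence to a truth assignment --- since moving flow onto a deviating path only raises the edge flows $v_e$ and $v_e^V$ that drive the lower-bound chain of inequalities.

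The hard part is the structural lemma itself --- not because it is deep, but because it is the one step that is not mechanical. I would prove it by exploiting the monotonicity already built into the construction: the connector edges stitch the variable gadgets into a single chain whose only branch points are the clause sources and sinks $s_{j_k},t_{j_k}$ (each of degree two) and the interior vertices of the variable gadgets, and along every intended variable path the clause indices strictly increase. A case analysis over this chain then shows that any attempt to leave an intended path at a clause source/sink forces one to re-enter, along the clause commodity's long path, a gadget whose entry vertex has already been visited, so that every alternative walk connecting $s_k$ and $t_k$ that avoids zero-capacity edges either repeats a vertex (and thus carries no flow in a shortest-path decomposition of a Wardrop flow, nor need be used by a cost-minimizing designer) or traverses the extra literal edges claimed above. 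I expect this bookkeeping to be somewhat tedious but entirely routine, and it is the only obstacle; with it in hand, the undirected gadget behaves exactly like the directed one and the proof concludes as before.
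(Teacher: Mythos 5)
There is a genuine gap, and it sits exactly where you locate the ``real work'': the structural lemma is not routine, and in the form you state it (``every alternative simple path traverses strictly more literal edges'') it is very likely false for the unmodified construction. The trouble is the zero\nobreakdash-latency connector edges. In the directed gadget of Figure~\ref{fig:hardness_apx}, the two literal edges of a literal $l$ share a middle vertex, and that vertex is incident to \emph{two} clause connectors (the outgoing connector of $l$'s first clause and the incoming connector of $l$'s second clause). Once edges are undirected, a walk can arrive along one of these connectors and leave along the other without traversing any literal edge, so the zero\nobreakdash-latency subgraph admits long simple walks that hop between clause paths at these shared vertices. Consequently a clause commodity can, for instance, reach $t_{j_k}$ by entering the gadget of its last literal along the \emph{other} clause containing that literal, bypassing the blocked literal edge entirely; and a variable commodity can in principle leave its gadget through a connector and return through another. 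Such detours need not use more literal edges than the intended path, and even when they do, ``more literal edges'' does not imply ``higher latency,'' since the adversarial capacity vector in the lower\nobreakdash-bound direction can make literal edges arbitrarily cheap. Your sketch of the lemma (re\nobreakdash-entering a gadget at an already\nobreakdash-visited vertex) does not address this connector\nobreakdash-switching phenomenon, and both the equilibrium verification in the upper bound and the assignment correspondence $\sum_k m_k(\vec v^V)\ge|\tilde K|$ in the lower bound depend on ruling it out.

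The paper resolves this not by a combinatorial path argument but by changing the construction: the auxiliary edges are partitioned into five types and given carefully chosen \emph{positive constant} latencies ($50$ for the edges adjacent to variable sources/sinks, $100$ for a new separator edge inserted \emph{between} the two literal edges of each literal --- which destroys the shared middle vertex --- $0$, $20$, and $40$ for the remaining types). The intended paths then incur a known constant overhead (shifting the cost interval from $10\kappa+\cdot$ to $200\kappa+\cdot$), while every deviating path provably picks up at least as much constant latency as the intended one, by a short quantitative comparison rather than by edge counting. If you want to salvage your approach you would either have to prove a much stronger (and instance\nobreakdash-dependent) statement about the connectivity of the zero\nobreakdash-latency subgraph, or adopt the paper's device of penalizing the auxiliary edges; the latter is what actually makes the undirected reduction go through. (A minor additional point: Remark~\ref{rem:zero_latency} introduces new high\nobreakdash-demand commodities, which in an undirected graph also create new routes, so even that step is not verbatim.)
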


\begin{figure}[bt]
\begin{center}
\begin{tikzpicture}[x=0.7cm,y=0.7cm]
\path[use as bounding box] (-1,-1.5) rectangle (21.5, 3.5);
\newcommand{\vargadgetapxun}[4]{
\footnotesize
\path (#2,#3) node (v#10) [terminal,label=left:$s_{j_{x_{#4}}}$] {}
to  ++(60:1.5) node (v#1r1) {}
--  ++(0:1) node (v#1r2) {}
--  ++(0:1) node (v#1r2b) {}
--  ++(0:1) node (v#1r3) {}
--  ++(-60:1.5) node (v#14) [terminal,label=right:$t_{j_{x_{#4}}}$] {}
    ++(-120:1.5) node (v#1l3) {}
--  ++(180:1) node (v#1l2b) {}
--  ++(180:1) node (v#1l2) {}
--  ++(180:1) node (v#1l1) {}
--  ++(120:1.5) node (v#10) [terminal] {};
\draw[dashed] (v#10) -- (v#1r1);
\draw (v#1r1) -- (v#1r2);
\draw[dashed] (v#1r2) -- (v#1r2b);
\draw (v#1r2b) -- (v#1r3);
\draw[dashed] (v#1r3) -- (v#14);
\draw[dashed] (v#10) -- (v#1l1);
\draw (v#1l1) -- (v#1l2);
\draw[dashed] (v#1l2) -- (v#1l2b);
\draw (v#1l2b) -- (v#1l3);
\draw[dashed] (v#1l3) -- (v#14);
}
\newcommand{\vargadgetapxunano}[4]{
\footnotesize
\path (#2,#3) node (v#10) [terminal,label=left:$s_{j_{x_{#4}}}$] {}
to  node [lbl,left] {type one~~~~~} ++(60:1.5) node (v#1r1) {}
--  ++(0:1) node (v#1r2) {}
--  ++(0:1) node (v#1r2b) {}
--  ++(0:1) node (v#1r3) {}
--  ++(-60:1.5) node (v#14) [terminal,label=right:$t_{j_{x_{#4}}}$] {}
    ++(-120:1.5) node (v#1l3) {}
--  ++(180:1) node (v#1l2b) {}
--  ++(180:1) node (v#1l2) {}
--  ++(180:1) node (v#1l1) {}
--  ++(120:1.5) node (v#10) [terminal] {};
\draw[dashed] (v#10) -- (v#1r1);
\draw (v#1r1) -- (v#1r2);
\draw[dashed] (v#1r2) to node [lbl,below] {type two} (v#1r2b);
\draw (v#1r2b) -- (v#1r3);
\draw[dashed] (v#1r3) -- (v#14);
\draw[dashed] (v#10) -- (v#1l1);
\draw (v#1l1) -- (v#1l2);
\draw[dashed] (v#1l2) -- (v#1l2b);
\draw (v#1l2b) -- (v#1l3);
\draw[dashed] (v#1l3) -- (v#14);
}

\vargadgetapxunano{1}{0}{0}{1}

\vargadgetapxun{2}{7.5}{0}{2}
\node at (14,0) [draw=none,fill=none] {$\dots$};
\vargadgetapxun{n}{15.5}{0}{\nu}
\clausegadgetun{1}{2}{3}{1}
\clausegadgetun{2}{8}{3}{2}
\node at (13.5,3) [draw=none,fill=none] {$\dots$};
\clausegadgetun{n}{16}{3}{\kappa}
\node at (2.8,2.7) [draw=none,fill=none] {type five};
\draw[aux,dashed] (sk1)  to[bend right=10] node [lbl,left] {type three~~~~~} (v1r1);
\draw[aux,dashed] (v1r2) .. controls (5,3) and (7,-2) .. (v2l1);
\node at (5,1.6) [draw=none,fill=none] {type four};
\draw[aux,dashed] (v2l2) .. controls (13,-4) and (15,2.7) .. (vnr1);
\draw[aux,dashed] (vnr2) .. controls (15,3) and (6,1) .. (tk1);
\draw[dot,aux] (sk2) to[bend right=10] ++(-130:1) node
(dummy1)[draw=none,fill=none] {};
\draw[dot,aux] (tk2) to[bend left=10] ++(-50:1) node
(dummy2)[draw=none,fill=none] {};
\draw[dot,aux] (skn) to[bend right=10] ++(-130:1) node
(dummy1)[draw=none,fill=none] {};
\draw[dot,aux] (tkn) to[bend left=10] ++(-50:1) node
(dummy2)[draw=none,fill=none] {};
\end{tikzpicture}
\end{center}
\caption{
\label{fig:hardness_apx_un}
Network used to show the $\classAPX$-hardness of the continuous network design
problem on undirected graphs. Clause~$1$ is equal to $x_1 \vee \bar{x}_2 \vee x_{\nu}$. The clause edges (straight lines in
the upper part of the graph) and the literal edges (straight edges in the lower part of the graph) are connected via
different auxiliary edges (dashed). The auxiliary edges have different constant latencies dependent on their type. Type one
edges are auxiliary edges adjacent to a source or a target of a variable commodity. Type two edges are auxiliary edges
connecting two literal edges that correspond to the same literal. Type three edges are auxiliary edges adjacent to the a
source or a target of a clause commodity. Type four edges are auxiliary edges connecting two literal edges that correspond
that correspond to different literals that appear together in a clause. Type five edges connect the source of a clause
commodity with the respective clause edge.}
\end{figure}

\begin{proof}[Sketch of proof]
As in the proof of Theorem~\ref{thm:hardness_apx}, we reduce from a symmetric variant of 
\textsc{4-OCC-MAX-3-SAT} where each variable occurs exactly twice negated and twice unnegated. 
We will closely mimic the  proof of Theorem~\ref{thm:hardness_apx} and only
sketch how to adjust it to the undirected case.

We use a construction similar to the directed case, see Figure~\ref{fig:hardness_apx_un}. We carefully choose the latency of
the auxiliary edges in order to prevent the commodities from taking undesired paths. For each variable commodity $j_{x_i}$,
let us call the two dashed edges adjacent to $s_{j_{x_i}}$ and the two edges adjacent to $t_{j_{x_i}}$ \emph{type one
edges}. Further, let us call the dashed edge between the edges $e_{x_i,k}$ and $e_{x_i,k'}$ and between $e_{\bar{x}_i,k}$
and $e_{\bar{x}_i,k'}$ \emph{type two edges}. For each clause commodity $j_i$, we call the dashed edge connecting $s_{j_i}$ the a variable gadget
and the dashed edge adjacent to $t_{j_i}$ \emph{type three edges}. We call the dashed edges connecting two literal edges
corresponding to different variables but the same clause \emph{type four edges}. Finally, we call the dashed edges that
connect the source node of a clause commodity with the respective clause edge \emph{type five edges}.

We set the latency of the type one edges to $50$, of the type two edges to $100$, of the type three edges to $0$, of the type four edges to $20$, and of the type five edges to $40$.

 We claim that the total cost of an optimal solution to CNDP lies in the range
\begin{align}
\label{eq:interval_apx}
\Bigl[200\kappa + |\tilde{K}|/4,\,(200+\epsilon)\kappa + (1/4 + \epsilon/2)|\tilde{K}|\Bigr]
\end{align}
if exactly $|\tilde{K}|$ clauses cannot be satisfied.

To see the upper bound in \eqref{eq:interval_apx}, fix an assignment of the variables that satisfies $\kappa-
|\tilde{K}|$ clauses and construct a solution to CNDP analogously to the proof of the directed case, i.e., route all
clause commodities along the clause edges, all variable edges along the negation of the assignment of the variable and
choose the installed capacities as in the proof of Theorem~\ref{thm:hardness_apx}. We will show that with these capacities 
the constructed flow is a Wardrop equilibrium. Since the auxiliary edges have
non-zero latency, compared to the solution in the directed case, the latency cost of each clause commodity increased by $40$
and the latency cost of each variable commodity increased by $200$. Thus, the total cost increased by $40\kappa +
\frac{3}{4} \cdot 200 \kappa = 190 \kappa$ giving a total cost of $(200+\epsilon)\kappa + (1/4 + \epsilon/2)|\tilde{K}|$. It is
left to argue that this solution still constitutes a Wardrop equilibrium although all edges can now be used in both
directions. To this end, note that each clause commodity uses its clause edge and experiences a 
total latency of $40 + 4 + \epsilon/2$ = $44 + \epsilon/2$. However, each other path available to a clause commodity uses
either a type two edge (with latency $100$), \emph{or} two type three edges, two type four edges (each with latency $20$), and the three corresponding literal edges (with latencies summing up to $4+\epsilon/2$, as before).
Thus, no clause commodity wants to deviate to another path and the constructed solution is a Wardrop equilibrium analogously to the directed
case.

For the lower bound, we argue as follows. If no variable commodity uses a type three edge or a type four edge, then each variable commodity has to split its flow between the path corresponding to the positive and the negative literal, respectively, and the lower bound can be proven analogously to the directed case.

So we are left with cases that a variable commodity uses a type three edge or a type four edge. Let us first assume that we
have an optimal solution, in which a variable commodity uses a type four edge. We may assume without loss of generality that every literal edge that carries flow has a latency of at most $5$, because we could decrease the total cost by increasing the capacity on these edges, otherwise. (However, we may not decrease the latency below $4 + \epsilon/2$ because this might give an incentive to the clause commodities to use these edges as well.) Every path available to a variable
commodity uses at least two type one edges as these edges are adjacent to the source and target of each variable commodity.
It is also not hard to see that every path available to a variable commodity has to use at least either two additional type
one edges or one type two edge. Using that the variable commodity also uses a type four edge, this implies that the latency
of the variable commodity is at least $200+20$. However, it would also be feasible to route that variable along the path
corresponding to the positive literal say while installing an additional capacity of $1/5$ on the two literal edges of the
positive literal resulting in a total cost of $200+10+2/5 < 220$. This low capacity would not prevent any of the clause
commodities from using their clause edge and has a lower total cost. Thus, we may conclude that no variable commodity uses a
type four edge. As any path of a variable commodity that uses a type three edge also uses a type five edge with latency
$40$, we may conclude that no variable commodity uses 
such an edge as well.
\end{proof}

\subsection*{Proof of Proposition~\ref{pro:single_sink}}

\begin{proof}
We solve the relaxed problem (CNDP'). As in the proof of Proposition~\ref{pro:marcotte}, for each edge $e
\in E$, we find a solution to the equation $x^2S_e'(x) = l_e$, which we denote by $u_e$. Then, we find an unsplittable flow
that minimizes
\begin{align}
\label{eq:marcotte_relax}
\min_{\vec v \in \flows} \sum_{e\in E} \bigl( S_e(u_e) + \ell_e/u_e\bigr) v_e,
\end{align}
Let $T$ be a shortest path tree routed in $t$ w.r.t.\ the edge weights $w_e = S_e(u_e) + \ell_e/u_e$. By
construction, each commodity $k$ has a unique path in $T$ that connects the source $s_k$ to the joint sink $t$. For each
$e
\in T$, let $d_e$ be the sum of the demands of the commodities that use edge $e$ in $T$ along its path. For each edge $e \in
T$ we buy capacity $z_e = d_e/u_e$ and route a flow of $d_e$. All other edges have zero capacity and, thus, infinite
latency. By construction, the total cost of this solution equals \eqref{eq:marcotte_relax}. Also, the resulting flow is a
Wardrop equilibrium as every commodity $k$ has a unique path from $s_k$ to $t$ that uses only edges with non-zero capacity.
\end{proof}

\subsection*{Proof of Lemma~\ref{lem:equivalent}}

\begin{proof}
The expression $\sup_{x \geq
0} \max_{\gamma \in [0,1]} \gamma\,\bigl(1-\frac{S(\gamma\,x)}{S(x)}\bigr)$ is non-negative and strictly positive for
$\gamma \in (0,1)$, thus, the inner maximum is attained for $\gamma \in (0,1)$. Hence, $\gamma$ satisfies the
first order optimality conditions
\begin{align*}
& & 0 &= \Bigl(1-\frac{S(\gamma\,x)}{S(x)}\Bigr) - \gamma\,x \cdot \frac{S'(\gamma\,x)}{S(x)}\\
&\Leftrightarrow & S(x) &= S(\gamma\,x) + \gamma\,x\,S'(\gamma\,x)
\end{align*}
By substituting $y = \gamma\,x$, we obtain
\begin{align*}
\sup_{x \geq 0} \max_{\gamma \in [0,1]} \gamma\,\Bigl(1-\frac{S(\gamma\,x)}{S(x)}\Bigr)
&= \sup_{y \geq 0} \Bigl\{\gamma
\,\Bigl(1 -\frac{S(y)}{S(y/\gamma)}\Bigr) : \gamma \in
[0,1] \text{ with } S(y/\gamma) = S(y) + S'(y)\,y\Bigr\}\\
&= \sup_{y \geq 0} \Bigl\{\gamma\cdot\frac{S'(y)\,y}{S(y) + S'(y)\,y} : \gamma \in
[0,1] \text{ with } S(y/\gamma) = S(y) + S'(y)\,y\Bigr\},
\end{align*}
which proves the lemma.
\end{proof}

\subsection*{Additional material for the proof of Theorem~\ref{thm:better}}
\begin{lemma}
\label{lem:maxof2}
For all $\gamma,\mu\in (0,1]$, we have
\begin{align}
\label{eq:maxof2}
\max_{p\in (0,1)}\min\left\{1 + \gamma(1-p), \Bigl(\sqrt{p} +
\sqrt{\mu(1-p)}\Bigr)^2\right\} 
&= 
\frac{(\gamma + \mu +1)^2}{(\gamma + \mu + 1)^2 - 4\mu\gamma} < 1+\mu.
\end{align}
\end{lemma}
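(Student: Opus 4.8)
The plan is to regard the bracketed quantity as $\min\{f(p),g(p)\}$ for the two functions $f(p):=1+\gamma(1-p)$ and $g(p):=\bigl(\sqrt p+\sqrt{\mu(1-p)}\bigr)^2=p+\mu(1-p)+2\sqrt{\mu p(1-p)}$ on $[0,1]$, and to show that the outer maximum is attained exactly at the unique point $p^*$ where the two curves cross. First I would record the elementary properties: $f$ is affine and strictly decreasing on $[0,1]$, from $1+\gamma$ down to $1$; and $g$ is strictly concave on $[0,1]$ (because $p\mapsto\sqrt{p-p^2}$ is strictly concave), with $g(0)=\mu$, $g(1)=1$, and ---exactly as computed in the proof of Theorem~\ref{thm:scale-uniformly}--- a maximum value of $1+\mu$ attained at $p=\tfrac{1}{1+\mu}$. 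In particular $g$ is strictly increasing on $[0,\tfrac{1}{1+\mu}]$.

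Next I would study the sign of $h:=g-f$. It is strictly concave with $h(0)=\mu-1-\gamma<0$, $h(1)=0$, and $h\bigl(\tfrac{1}{1+\mu}\bigr)=(1+\mu)-\bigl(1+\tfrac{\gamma\mu}{1+\mu}\bigr)=\tfrac{\mu(1+\mu-\gamma)}{1+\mu}>0$, where positivity uses the bound $\gamma\le 1<1+\mu$ (valid since $\mu>0$). Since $h$ is continuous and concave, the superlevel set $\{p\in[0,1]:h(p)\ge 0\}$ is a closed subinterval; it contains $\tfrac{1}{1+\mu}$ and $1$ but not $0$, so it equals $[p^*,1]$ for some $p^*\in(0,\tfrac{1}{1+\mu})$ with $h(p^*)=0$, and strict concavity forces $h<0$ on $[0,p^*)$ and $h>0$ on $(p^*,1)$. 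Consequently $\min\{f,g\}=g$ on $[0,p^*]$ and $\min\{f,g\}=f$ on $[p^*,1]$. On $[0,p^*]\subseteq[0,\tfrac{1}{1+\mu}]$ the increasing function $g$ attains its supremum at $p^*$, and on $[p^*,1]$ the decreasing function $f$ attains its supremum at $p^*$; since $f(p^*)=g(p^*)$, the left-hand side of \eqref{eq:maxof2} equals this common value $f(p^*)$, attained at the interior point $p^*$.

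It then remains to compute $p^*$ and $f(p^*)$. Writing $q=1-p$, the equation $f(p)=g(p)$ reduces to $q(\gamma+1-\mu)=2\sqrt{\mu q(1-q)}$; since $\gamma+1-\mu\ge\gamma>0$ and $q>0$, squaring and solving gives $q^*=\tfrac{4\mu}{(\gamma-\mu+1)^2+4\mu}$, i.e. $p^*=\tfrac{(\gamma-\mu+1)^2}{(\gamma-\mu+1)^2+4\mu}$, which is precisely \eqref{eq:p_star}. Hence $f(p^*)=1+\gamma q^*=\tfrac{(\gamma-\mu+1)^2+4\mu(1+\gamma)}{(\gamma-\mu+1)^2+4\mu}$, and the two identities $(\gamma-\mu+1)^2+4\mu(1+\gamma)=(\gamma+\mu+1)^2$ and $(\gamma-\mu+1)^2+4\mu=(\gamma+\mu+1)^2-4\mu\gamma$ (both read off from $(\gamma+\mu+1)^2-(\gamma-\mu+1)^2=4\mu(\gamma+1)$) turn this into $\tfrac{(\gamma+\mu+1)^2}{(\gamma+\mu+1)^2-4\mu\gamma}$, establishing the equality in \eqref{eq:maxof2}. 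Finally, because the denominator $(\gamma+\mu+1)^2-4\mu\gamma=(\gamma-\mu+1)^2+4\mu$ is strictly positive, the inequality $\tfrac{(\gamma+\mu+1)^2}{(\gamma+\mu+1)^2-4\mu\gamma}<1+\mu$ is equivalent, after cross-multiplying and dividing by $\mu>0$, to $\bigl(\gamma+(1+\mu)\bigr)^2>4\gamma(1+\mu)$, which is the AM--GM inequality with equality only when $\gamma=1+\mu$ ---impossible since $\gamma\le 1<1+\mu$.

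The step I expect to be the main obstacle is the structural one in the middle paragraph: showing that the outer maximum lands exactly on the crossing point $p^*$, which hinges on pinning down the sign pattern of $g-f$; this is where strict concavity of $g$ and the inequality $\gamma<1+\mu$ do the real work. Everything after that is the routine algebra deferred from the statement.
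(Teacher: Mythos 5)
Your proof is correct and follows essentially the same route as the paper: both identify the maximum of the min as the unique crossing point $p^*\in(0,\tfrac{1}{1+\mu})$ of the decreasing affine function and the concave function, solve for $p^*$, and substitute. The only (harmless) differences are that you justify the sign pattern of $g-f$ more carefully via strict concavity, and you derive the strict inequality $<1+\mu$ by AM--GM on the closed form rather than by evaluating both functions at $\hat p=\tfrac{1}{1+\mu}$ as the paper does.
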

\begin{proof}
%
%
Observe that $1 + \gamma(1-p)$ is decreasing in $p$. Elementary calculus shows that 
$\Bigl(\sqrt{p} + \sqrt{\mu(1-p)}\Bigr)^2$ attains its maximum at $p=\hat{p}:=\frac{1}{1+\mu}$, is increasing when $p< \hat{p}$ and decreasing afterwards.
Now, 
$\Bigl(\sqrt{\hat{p}} + \sqrt{\mu(1-\hat{p})}\Bigr)^2=1+\mu$ and $1+\gamma(1-\hat{p})= 1+\mu\frac{\gamma}{1+\mu}<1+\mu$, 
the inequality in \eqref{eq:maxof2} follows.

Moreover, it follows that the maximum on the left hand side of \eqref{eq:maxof2}  is attained for the unique $p^*
\in (0,\hat{p})$ such that $1 + \gamma(1-p^*)= \bigl(\sqrt{p^*} + \sqrt{\mu(1-p^*)}\bigr)^2$.
Thus, $p^*$ is a solution to the equation
\begin{align*}
0
&= -(1-p^*) - \gamma(1-p^*) + 2\sqrt{p^*(1-p^*)\mu} + \mu(1-p^*)\\
&= (1-p^*)\Bigl(2\sqrt{\mu \frac{p^*}{1-p^*}} + \mu - \gamma -1\Bigr)
\intertext{and since $p^* < 1$}
0 &= 2\sqrt{\mu \frac{p^*}{1-p^*}} + \mu - \gamma(\S) -1.
\end{align*}
The unique solution to this equation is
\begin{align*}
p^* = \frac{(\gamma-\mu+1)^2}{(\gamma-\mu+1)^2 + 4\mu}.
\end{align*}
Plugging this into the left hand side of \eqref{eq:maxof2} gives
\begin{align*}
\frac{(\gamma + \mu +1)^2}{(\gamma + \mu + 1)^2 - 4\mu\gamma},
\end{align*}
which proves the identity in \eqref{eq:maxof2}.
\end{proof}

\subsection*{Proof of Corollary~\ref{cor:simple} and Corollary~\ref{cor:better}}
\begin{proof}
For the proofs of Corollary~\ref{cor:simple} and Corollary~\ref{cor:better}, we give bounds on $\mu(\S)$ and
$\gamma(\S)$ for the respective sets $\S$ of allowable latency functions. Theorem~\ref{thm:bring-to-equilibrium},
Theorem~\ref{thm:scale-uniformly} and Theorem~\ref{thm:better} then give the respective approximation guarantees.

\paragraph{\it Arbitrary latency functions.}
First, we consider case {\it (a)} of both Corollaries, where $\S$ is a class of arbitrary non-negative and non-decreasing
latencies. We observe that
\begin{align*}
\mu(\S)=\sup_{S\in \S}\sup_{x \geq 0}
\max_{\gamma \in [0,1]} \gamma \Bigl(1- \frac{S(\gamma\,
x)}{S(x)}\Bigr) &\leq 1.
\end{align*}
By definition $\gamma(\S)\le 1$.
Now Corollary~\ref{cor:simple} {\it(a)} follows immediately and Corollary~\ref{cor:better} {\it(b)} follows from the fact
that 
\begin{align}
\label{eq:approximation_guarantee}
\frac{(\gamma(\S) + \mu(\S) +1)^2}{(\gamma(\S) + \mu(\S) + 1)^2 - 4\mu(\S)\gamma(\S)}
\end{align}
is strictly increasing in $\gamma(\S)$ and $\mu(\S)$. 

\paragraph{\it Concave latency functions.}
Next, consider case {\it(b)} of both Corollaries, where $\S$ contains concave latencies only.
Observe that
\begin{align*}
\mu(\S)=\sup_{S\in \S}\sup_{x \geq 0} \max_{\gamma \in [0,1]} \gamma \Bigl(1- \frac{S(\gamma\,x)}{S(x)}\Bigr) &\leq
\sup_{S\in \S}\sup_{x \geq 0}
\max_{\gamma \in [0,1]} \gamma \Bigl(1- \gamma - \frac{(1-\gamma)S(0)}{S(x)}\Bigr)\notag\\
&\leq \max_{\gamma \in [0,1]} \gamma (1- \gamma)\notag\\
&= 1/4,
\end{align*}
where the first inequality uses the concavity of all functions $S \in \S$. Further, as shown in Lemma~\ref{lem:equivalent},
the $\gamma$ for which the inner maximum is attained, satisfies the first order optimality conditions $S(x) = S(\gamma x) +
\gamma\,x\,S'(\gamma x)$. As $S$ is concave, we derive that $\gamma\,x\,S'(\gamma x) \leq S(\gamma x)$, which implies
\begin{align*}
S(x) \geq 2S(\gamma x) \geq 2(\gamma S(x) + (1-\gamma)S(0)) \geq 2\gamma S(x),
\end{align*}
and, thus, $\gamma(\S) \leq 1/2$. Again, Corollary~\ref{cor:simple} {\it(b)} follows immediately and
Corollary~\ref{cor:better} {\it(b)} follows from the fact that \eqref{eq:approximation_guarantee} is increasing in
$\gamma(\S)$ and $\mu(\S)$.

\paragraph{\it Polynomial latency functions.} 

Finally, consider case {\it (c)} of both Corollaries, where for some fixed
maximal degree $\Delta \geq 0$, the set $\S$ contains only polynomial latency functions of type $S(x) = \sum_{j=0}^{\Delta}a_j x^j$, with $a_j
\geq 0$ for all $j$.  Denote $\vec a=(a_j)_{j\in[0,\Delta]}$.
We calculate
\begin{align*}
\mu(\S)
&=\sup_{S\in \S}\sup_{x \geq 0} \max_{\gamma \in [0,1]} \gamma \Bigl(1- \frac{S(\gamma\,x)}{S(x)}\Bigr) \\
&=\sup_{\vec a\ge 0} \sup_{x \geq 0} \max_{\gamma \in [0,1]} \gamma \Bigl(1- \frac{\sum_{j=0}^{\Delta}a_j \gamma^j x^j}{\sum_{j=0}^{\Delta}a_j  x^j}\Bigr)\\
&=\sup_{\vec a\ge 0} \sup_{x \geq 0} \max_{\gamma \in [0,1]} \gamma \Bigl(\frac{\sum_{j=0}^{\Delta}a_j  x^j (1-\gamma^j)}{\sum_{j=0}^{\Delta}a_j  x^j}\Bigr)\\
\intertext{As $(1-\gamma^j)$ is increasing in $j$ for every $\gamma\in(0,1)$, it follows that the supremum 
over $\vec a\ge 0$ is attained if $a_\Delta>0$ and $a_j=0$ for all $j\in[0,\Delta-1]$. We get}
\mu(\S)
&= \max_{\gamma \in [0,1]} \gamma (1- \gamma^\Delta)\\
&= \Bigl(\frac{1}{\Delta+1}\Bigr)^{1/\Delta}\Bigl(1-\frac{1}{\Delta+1}\Bigr)\\
&=\Bigl(\frac{1}{\Delta+1}\Bigr)^{1/\Delta}\Bigl(\frac{\Delta}{\Delta+1}\Bigr),
\end{align*}
which directly implies the statement of Corollary~\ref{cor:simple} {\it (c)}. Further, this value is attained for $\gamma(\S) =
\bigl(\frac{1}{\Delta+1}\bigr)^{1/p}$. Plugging these values in \eqref{eq:approximation_guarantee} and rearranging terms,
we obtain the approximation guarantee claimed in Corollary~\ref{cor:better} {\it (c)}.
\end{proof}

\subsection*{Convex budget constraints}

\begin{theorem}\label{thm:budget}
 Let $\S$ be a class of latency functions.
\begin{enumerate}
\itemsep-0.2em
\item The following algorithm is a $\frac{1}{1-\mu(\S)}$-approximation for \eqref{budgets}\newline
(in particular a $4/3$-approximation for affine latencies):
\begin{enumerate}
\item Compute a solution $(\vec v^*, \vec z^*)$ to relaxation \eqref{budgets}.
\item Compute a Wardrop equilibrium $\vec w \in \W(\vec z^*)$.
\item Return $(\vec w,\vec z^*)$.
\end{enumerate}
\item\label{np-hard} For affine latencies,  there is no 
polynomial time approximation algorithm with a performance guarantee
better than $4/3-\epsilon$ for any $\epsilon>0$, unless $\classP= \classNP$.
\end{enumerate}
\end{theorem}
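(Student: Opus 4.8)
The plan is to treat the two parts independently. For part~1, I would first argue that the solution $(\vec v^*,\vec z^*)$ obtained in step~(a) provides a lower bound on the optimum of \eqref{budgets}. Indeed, any feasible pair $(\vec z,\vec v)$ for \eqref{budgets} satisfies $g(\vec z)\le\vec 0$ and $\vec v\in\W(\vec z)\subseteq\flows$, hence is feasible for the relaxation of \eqref{budgets} obtained by dropping the equilibrium constraints; consequently $C^R(\vec v^*,\vec z^*)\le \mathrm{OPT}$. Since that relaxation optimizes jointly over capacities and flows, fixing $\vec z=\vec z^*$ shows in addition that $\vec v^*$ minimizes $C^R(\cdot,\vec z^*)$ over $\flows$, i.e.\ $\vec v^*$ is a system optimum for the fixed-capacity instance whose edge latencies are $x\mapsto S_e(x/z_e^*)$ (edges with $z_e^*=0$ carry no flow in any finite-cost flow and may be discarded). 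Now I would invoke the price-of-anarchy bound of Roughgarden~\cite{Rough02} and Correa et al.~\cite{CorSS04}: because the value $\mu(\cdot)$ in \eqref{def:mu} is invariant under the horizontal scaling $S\mapsto S(\cdot/z)$ (substitute $u=x/z$ in the defining supremum), the fixed-capacity instance has anarchy value at most $1/(1-\mu(\S))$, so the equilibrium $\vec w\in\W(\vec z^*)$ from step~(b) satisfies $C^R(\vec w,\vec z^*)\le \frac{1}{1-\mu(\S)}\,C^R(\vec v^*,\vec z^*)\le\frac{1}{1-\mu(\S)}\,\mathrm{OPT}$. As $\vec z^*$ respects the budget, $(\vec w,\vec z^*)$ is feasible for \eqref{budgets}, which gives the claimed guarantee; for affine latencies $\mu(\S)=1/4$ and the ratio becomes $4/3$.

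For part~2, the plan is to reduce from the $\classNP$-hardness of the discrete network design problem with affine latencies established by Roughgarden~\cite{Roughgarden06}: he constructs single-commodity instances with affine latencies for which it is $\classNP$-hard to distinguish whether some subnetwork admits a Wardrop equilibrium of routing cost at most $c$ or every subnetwork has equilibrium routing cost at least $(4/3-o(1))\,c$. Given such an instance, I would build a \eqref{budgets} instance on the same graph, keeping all affine latencies, and add an edge-specific budget constraint $\ell_e z_e\le B_e$ on every edge. On the edges that the discrete problem is allowed to delete --- which may be taken to have constant latency --- I would choose $B_e>0$ arbitrary but finite: a constant latency is unaffected by the capacity as long as it is positive, so the only meaningful options left to the designer are $z_e=0$ (infinite latency, the edge effectively removed) and $z_e>0$ (the edge present), i.e.\ the capacity decision on these edges is genuinely binary and mimics the keep/remove decision of the discrete problem. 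On the remaining edges I would pick $B_e$ so that the capacity attaining the budget reproduces the latency function of Roughgarden's instance. The easy direction is then immediate: a good subnetwork translates into a capacity vector of routing cost at most $c$, so the optimum of \eqref{budgets} is at most $c$ on yes-instances.

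The step I expect to be the main obstacle is the converse direction: showing that on no-instances the optimum of \eqref{budgets} is at least $(4/3-o(1))\,c$, i.e.\ that the extra freedom of choosing \emph{intermediate} capacities on the non-deletable edges never lets the designer beat the discrete optimum by more than a vanishing amount. Here I would exploit the structure of Roughgarden's gadgets: the routing cost contributed by a non-deletable edge is non-increasing in its capacity, so the only incentive to lower a capacity below the budget limit is to trigger a beneficial Braess-type rerouting, and one must verify that the specific latencies and demands used in Roughgarden's construction preclude this (up to a loss that vanishes with the instance size). Granting this, the $(4/3-o(1))$ gap of the discrete instances carries over to \eqref{budgets}, so for every $\epsilon>0$ a polynomial-time $(4/3-\epsilon)$-approximation for \eqref{budgets} would decide an $\classNP$-hard problem; together with part~1 this pins the best achievable approximation ratio for affine latencies at exactly $4/3$.
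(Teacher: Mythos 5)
Your part~1 is essentially the paper's argument: the paper disposes of the upper bound in one sentence by citing the price-of-anarchy results of Correa et al.\ and Roughgarden, and what you write out --- the relaxation lower-bounds $\mathrm{OPT}$, the relaxed flow is a system optimum for the fixed capacities $\vec z^*$, and $\mu(\S)$ is invariant under the substitution $x\mapsto x/z$ so the anarchy value $1/(1-\mu(\S))$ applies to the fixed-capacity instance --- is exactly the intended justification. That part is fine.

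Part~2 is where you diverge from the paper and where there is a genuine gap. The paper does \emph{not} reduce from the hardness of DNDP as a black box; it reduces directly from \DDP{} with an explicit four-edge gadget (latencies $v_e/z_e$ on $(s,s_1),(t_2,t)$ with unit budgets and unit per-unit costs, $1+v_e/z_e$ on $(s,s_2),(t_1,t)$, and free capacity elsewhere), and then proves by a case analysis over all possible flow-carrying path patterns that every feasible capacity vector --- including every \emph{fractional} one --- yields cost at least $2$ on ``No'' instances, versus at most $3/2$ on ``Yes'' instances. The step you flag as ``the main obstacle'' is precisely this case analysis, and it is the entire content of the lower-bound proof, not a verification one can wave at. Your heuristic for it does not hold up as stated: the claim that ``the routing cost contributed by a non-deletable edge is non-increasing in its capacity'' is a statement about a \emph{fixed} flow, whereas the quantity being bounded is the cost of the \emph{equilibrium induced by} the capacities, and the defining feature of these Braess-type instances is that equilibrium cost responds non-monotonically to capacity changes (indeed the designer's only hope of beating the gap is exactly such a rerouting). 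So the assertion that intermediate capacities ``never let the designer beat the discrete optimum by more than a vanishing amount'' is the theorem, not a lemma you can defer. Since Roughgarden's DNDP hardness is itself proved via \DDP{}, completing your plan would force you to open the black box and redo the \DDP{} case analysis with continuous capacities --- which is what the paper does directly, and more economically. A secondary, fixable issue: you model deletable edges with constant latencies, which are independent of capacity and sit uneasily with the $S_e(v_e/z_e)$ format and Assumption~\ref{ass}; the paper instead gives those edges latency $v_e/z_e$ with per-unit cost $\ell_e=0$, so that ``include'' means free infinite capacity (zero latency) and ``exclude'' means $z_e=0$ (infinite latency).
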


\begin{proof}
The upper bound in 1. is straight forward by using well known price of anarchy results
known in the literature, cf. Correa et al.~\cite{CorSS04}
and Roughgarden~\cite{Rough02} and Roughgarden and Tardos~\cite{Roughgarden02}.
For \ref{np-hard}., we mimic the construction put forward in Roughgarden~\cite{Roughgarden06}.

We reduce from the 2-Directed-Vertex-Disjoint-Paths (\DDP)
problem, which is strongly $\classNP$-complete. Given a directed graph $G = (V,E)$ and two node pairs $(s_1, t_1)$, $(s_2, t_2)$ the problem is to decide whether there exist a pair of \emph{vertex-disjoint} paths $P_1$ and $P_2$, where $P_1$ and $P_2$
are $(s_1,t_1)$ and $(s_2,t_2)$-paths, respectively.

We will show that a $(\frac{4}{3}-\epsilon)$-approximation algorithm
can be used to differentiate between ``Yes'' and ``No'' instances of \DDP\
in polynomial time. Given an instance $\mathcal{I}$ of \DDP we
construct a graph $G'$ by adding a super source $s$ and a super sink
$t$ to the network. We connect $s$ to $s_1$ and $s_2$ and $t_1$ and
$t_2$ to $t$, respectively. The latency functions of the added edges
are set to $S_e(v_e/z_e)=v_e/z_e$ for $e\in\{(s,s_1),(t_2,t)\}$
and  $S_e(v_e/z_e)=1+v_e/z_e$ for $e\in\{(s,s_2),(t_1,t)\}$.
The function $g(\vec z)$ assigns edge-specific budgets
according to  $B_{(s,s_1)}=1$ and $B_{(t_2,t)}=1$.
The per-unit cost of capacities are given by $\ell_e=1$ for
$e\in \{(s,s_1), (t_2,t)\}$ and $\ell_e=0$, otherwise.

We proceed to prove the following two
statements:
\begin{enumerate}
\item If $\mathcal{I}$ is a ``Yes'' instance of \DDP, then $G'$ admits
  a solution $(\vec v, \vec z)$ with $\vec v \in \W(\vec z)$ satisfying $C(\vec v, \vec z)\leq 3/2$.
\item If $\mathcal{I}$ is a ``no'' instance of \DDP, then
  $C(\vec v, \vec z)\geq 2$ for all $(\vec v, \vec z)$ with $\vec v \in \W(\vec z)$.
\end{enumerate}

To see the first statement, suppose $\mathcal{I}$ is a ``Yes'' instance and let $P_1$ and $P_2$ be the respective disjoint paths. For all edges contained in neither $P_1$ nor $P_2$, we install a capacity of $0$ leading to infinite latency of these edges.
For the edges in $P_1\cup P_2\cup\{(s,s_2), (t_1,t)\}$ we buy infinite capacity
resulting in $0$ latency on edges in $P_1\cup P_2$ and a latency of $1$
on $\{(s,s_2), (t_1,t)\}$. 
For the edges in $\{(s,s_1),(t_2,t)\}$
we spend the budgets of $1$ each.
Then, splitting the flow evenly along these paths
yields a Wardrop flow with routing cost $C(\vec z, \vec v)=2\cdot((1/2)^2+1/2\cdot 1)=3/2$.

To show the second statement, let $(\vec v, \vec z)$ be an optimal solution.
We may assume that there is an $(s,t)$ path.
  We consider the following cases.
\begin{enumerate}
\item For exactly one $i\in\{1,2\}$, all flow-carrying
paths contain the edges $(s,s_i)$ and $(t_i,t)$. For this case it is easy to see
that $C(\vec v, \vec z)\geq 2$ 
since all $4$ new edges have at least latency of $1$ if used with $1$ unit of flow.
\item There is a flow-carrying
path $P$ containing $(s,s_2)$ and $(t_1,t)$.
In this case, the latency along this
path is at least $2$, hence, since
every flow-carrying path has the same
latency, we obtain $C(\vec z, \vec v)\geq 2$.
\item There is a flow-carrying
path $P$ containing $(s,s_1)$ and $(t_2,t)$.
If all flow-carrying paths from
$s$ to $t$ contain $(s,s_1)$ and $(t_2,t)$,
we obtain $C(\vec z, \vec v)\geq 2$
using the budget constraints at $\{(s,s_1),(t_2,t)\}$. Suppose there
is another flow-carrying path $Q$ containing $(s,s_1)$
and  $(t_1,t)$. Then the latency on the subpath
$Q[s_1,t]$ must be at least $1$ and, by the Wardrop conditions,
the latency of $P[s_1,t]$ must be a least one.
If the entire demand uses edge $(s,s_1)$, the minimum
possible latency on this edge is $1$ and the
latency of $P$ (and also $Q$) must be at least two, thus, we obtain 
$C(\vec z, \vec v)\geq 2$.
Suppose, there is a flow-carrying path $R$ containing the edge $(s,s_2)$.
If $R$ contains edge $(t_1,t)$, we are in case 2.
Thus we may assume that $R$ contains edge $(t_2,t)$.
Since we are in a ``No'' instance of \DDP, the path $R$
must have one vertex with the path $Q$ in common
which implies that for $R[s_2,t]$ the latency is at least $1$ and, hence,
the latency of $R$ is at least $2$ giving
  $C(\vec v, \vec z)\geq 2$.
  \item The case that we have two flow-carrying $(s_1,t_1)$ and
  $(s_2,t_2)$ paths reduces to one of the cases 1., 2. or 3. since
  we are in a  ``No'' instance of \DDP.
\end{enumerate}

\end{proof}

\end{document}